\titleclass{\subsubsubsection}{straight}[\subsection]
\newcounter{subsubsubsection}[subsubsection]
\renewcommand\thesubsubsubsection{\thesubsubsection\alph{subsubsubsection}}
\renewcommand\paragraph{\@startsection{paragraph}{5}{\z@}%
  {3.25ex \@plus1ex \@minus.2ex}%
  {-1em}%
  {\normalfont\normalsize\bfseries}}
\renewcommand\subparagraph{\@startsection{subparagraph}{6}{\parindent}%
  {3.25ex \@plus1ex \@minus .2ex}%
  {-1em}%
  {\normalfont\normalsize\bfseries}}
\def\toclevel@subsubsubsection{4}
\def\toclevel@paragraph{5}
\def\toclevel@paragraph{6}
\def\l@subsubsubsection{\@dottedtocline{4}{7em}{4em}}
\def\l@paragraph{\@dottedtocline{5}{10em}{5em}}
\def\l@subparagraph{\@dottedtocline{6}{14em}{6em}}
\titlespacing\section{0pt}{0pt}{0pt}
\titlespacing\subsection{0pt}{0pt}{0pt}
\titlespacing\subsubsection{0pt}{0pt}{0pt}
 \newcommand\ackname{Acknowledgements}
     {\par\vfil\null\endtitlepage}
\newtheorem{theorem}{Theorem}[section]
\newtheorem{corollary}{Corollary}[section]
      \theoremstyle{plain}
\newcommand{\coop}{\mathcal{C}}
\newcommand{\defe}{\mathcal{D}}
\theoremstyle{definition}
\newtheorem{remark}{Remark}[section]
\numberwithin{equation}{section}
\begin{document}

\title{Coupling the socio-economic and ecological dynamics of cyanobacteria: single lake and network dynamics}

\author{Christopher M. Heggerud\footnote{Corresponding author: Department of Mathematical and Statistical Sciences, University of Alberta, Edmonton, AB, Canada. Email: cheggeru@ualberta.ca},\ \ Hao Wang\footnote{Department of Mathematical and Statistical Sciences, University of Alberta, Edmonton, AB, Canada. Email: hao8@ualberta.ca},\ \ Mark A. Lewis\footnote{Department of Mathematical and Statistical Sciences and Department of Biological Sciences, University of Alberta, Edmonton, Alberta, Canada. Email: mark.lewis@ualberta.ca}}
\maketitle

\begin{abstract}
    In recent decades freshwater lakes have seen an increase in human presence. A common byproduct of this human presence is eutrophication, which readily results in harmful cyanobacteria blooms. In this work we propose a model that couples the socio-economic and ecological dynamics related to cyanobacteria systems. The socio-economic dynamics considers the choices a human population makes regarding whether or not to mitigate their pollution levels. These choices are based on various costs related to social ostracism, social norms, environmental concern and financial burden. The coupled model exhibits bistable dynamics, with one stable state corresponding to high mitigation efforts and low CB abundance, and the other to low mitigation efforts and high CB abundance. Furthermore, we consider social interactions among a network of lakes and present dynamic outcomes pertaining to various associated costs and social situations. In each case we show the potential for regime shifts between levels of cooperation and CB abundance. Social ostracism and pressure are shown to be driving factors in causing such regime shifts.

\end{abstract}

\section{Introduction}
Cyanobacterial harmful algal blooms (CHABs) are an ever present global concern in aquatic environments. The presence of CHABs often leads to several adverse outcomes both ecologically and economically. For example, CHABs can decrease ecosystem productivity by creating anoxic conditions and producing toxins as metabolic byproducts~\cite{Orr1990,Kaebernick2001}. Economically, CHABs add costs to water treatment, lower recreational and tourism value, and add risks when using freshwater for agricultural purposes. Although CHABs occur for a variety of reasons they are most commonly a result of eutrophication. Eutrophic conditions occur when an excess amount of nutrients required for organismal growth is in an aquatic ecosystem. Furthermore, eutrophication often occurs as a result of anthropogenic nutrient pollution from agriculture, industrial and urban run-off~\cite{Paerl2014}. In this sense there is a noteworthy connection between anthropogenic nutrient pollution and economic costs due to CHABs.

The study of systems where human and environmental dynamics are intertwined is beginning to receive more attention in the literature. For example, the importance of linking human and social dynamics to climate models to understand climate trajectories has been addressed~\cite{Beckage2020,Bury2019}. Other researchers have used social processes to better understand disease outbreaks~\cite{Pedro2020,Fair2021}. Ecologically, social dynamics have been coupled to forestry, fishery and other common-pool resource models to gain insight towards the balance between sustainable resource use and profit seekers~\cite{Satake2007,Farahbakhsh2021,Lee2011,Wang2016a}. Coupled socio-economic and ecosystem models for lake eutrophication have been considered by~\textcite{Iwasa2007,Iwasa2010}, but do not consider phytoplankton dynamics. In essence, human activities often result in changes in the ecological system, however changes in the ecological system will, in-turn, have an impact on the human behaviours thus creating a feedback loop. These types of systems are thought of as an integration between an ecological system and socio-economic system. Mathematical modelling of such systems typically involves the coupling of an ecological model that has terms dependent on human decisions to a human socio-economic model with outputs dependent on the state of the ecology~\cite{Iwasa2007,Satake2007}. 

Socio-economic models can be derived by considering social norms and pressures, monetary costs and psychology associated with the ecological system~\cite{Fransson1999}. As is the case in many current environmental issues, social ostracism can occur when an individual does not behave in a way that is environmentally favourable~\cite{Poon2015}. Social ostracism happens when a group or individual excludes or slanders another group or individual based on an action, opinion or response. Psychologically, being ostracised is harmful as humans have a basic want of being accepted~\cite{Williams2007}. As a response to ostracism humans often change behaviour to further avoid ostracism~\cite{Williams2011}. In the context of environmental issues, such as lake pollution, groups who assume non-environmentally favourable roles are often ostracised more than those that do~\cite{Poon2015,Iwasa2007,Sun2020} adding costs to the defection role. This means that modelling of socio-economic systems should include factors that account for social pressures. In addition, social norms often influence a person to assume a strategy regardless of its environmental impacts~\cite{Kinzig2013}. Socio-economic dynamics may be dependent on the frequency of each strategy, and not on the costs alone. Furthermore, the social costs due to ostracism and adherence to norms can be non-local and come from distanced social connections. Costs associated with pro-environmental roles often exceed the non-environmentally favourable role. These costs are often monetary and involve the investment in infrastructure to filter or treat urban water run-off. Additionally, lakes with low water quality and persistent HABs face additional costs associated with decreased land value, recreation, tourism based on the presence of toxins, and the visually and olfactorily unpleasant nature of HABs~\cite{Nicholls2018,Wolf2017}.

In many cases socio-economic models often have a game-theoretic component in which players choose one of several strategies based on the associated utility differences to the other strategies~\cite{Iwasa2007,Farahbakhsh2021,Suzuki2009,Iwasa2010,Sun2020}. Each strategy then has an associated disturbance of the ecological system, i.e. high vs. low pollution or deforestation rates. Individuals assume strategies at rates that are dependent on the perceived costs of each strategy, or fitness in game theory literature, and can be modelled in many different forms. For example, the logit best-response dynamics assumes there is a probability an individual assumes a strategy, where as the replicator dynamics assumes that the individual will always pick the most advantageous strategy~\cite{Sun2021,Farahbakhsh2021,Bury2019,Iwasa2007}. By explicitly considering distinct strategies and their associated costs ecosystem managers can use these models to gain insight towards policy implementation to obtain a favourable outcome.

 Many phytoplankton models have been used for the study of algal dynamics and take various forms including discrete time models, ODEs and PDES. In this study we extend a stoichiometric model that has been well established in the literature~\cite{Heggerud2020,Wang2007,Berger2006}. Ecological stoichiometry is defined as the study of the balance of energy and resources in ecological systems~\cite{Sterner2002}. This is a powerful tool as it allows the study of large scale phenomena, like CB abundance, by considering small scale components like internal energy and nutrients. The use of ecological stoichiometry has become increasingly common because of its ability to mechanistically capture the effects of resource limitations on ecological systems. For example, ecological stoichiometry has been used to study predator prey systems~\cite{Mitra2005,Branco2018}, producer-grazer systems~\cite{Wang2008,Loladze2000}, phytoplankton dynamics~\cite{Klausmeier2004,Wang2007}, toxicology~\cite{Peace2021}  and plant-disease dynamics~\cite{Lacroix2017} with great success.
 Ecological stoichiometry has been used to discuss the timescale separation between nutrient uptake and both algal growth and available nutrient depletion in~\textcite{Heggerud2020}. Separation of timescales allowed for the in-depth study of algal transient dynamics and driving mechanisms. This, along with many other studies, has established a solid modelling framework for phytoplankton dynamics~\cite{Wang2007,Berger2006,Huisman1994}. Additional complexity arises when coupling such ecological models to socio-economic models, both mathematically and in terms of timescales~\cite{Hastings2016,Hastings2010}. Human behaviour may change slower than the ecological dynamics and furthermore, the response of the ecological systems to human management strategies may be delayed~\cite{Carpenter2005,Hastings2016}. 
 
 Phosphorus is commonly considered to be a nutrient of interest in aquatic systems~\cite{Carpenter2005,Whitton2012}. Furthermore, the Redfield ratio (C:N:P=106:16:1)~\cite{redfield1934} implies that CB demands phosphorus more than other elements, except perhaps nitrogen~\cite{Sterner2002,Whitton2012}. However, since the demand for phosphorus is high the uptake rates and cell quotas for phosphorus will also be larger than other elements, expect perhaps nitrogen, and thus the corresponding phosphorus dynamics in the media occur on similar timescales to other ecological processes~\cite{Whitton2012,Heggerud2020}. Other nutrients, such as iron, can limit phytoplankton growth in a significant way by limiting photosynthesis, such as the case of peat lakes in the Netherlands~\cite{Smolders1993} and regions of the Antarctic~\cite{Koch2019}. The extended Redfield ratio implies the requirement of iron is much less than phosphorus and as a result cell quota values are small compared to those for phosphorus~\cite{Cunningham2017}. This means that the iron dynamics in the media may occur on a different timescale than the remaining ecological dynamics~\cite{Wurtsbaugh1983}. Thus, the timescale of the ecological dynamics depends on the study species and the nutrient being considered as uptake and growth rates can vary among species and nutrient~\cite{Whitton2012}.

In this paper we couple the ecological dynamics of cyanobacteria (CB) with the socio-economic dynamics of humans at each lake. We consider a network of lakes which are connected via social interactions only, allowing for presence of social norms and ostracism to influence human decision making. The ecological dynamics are given by extending the well established stoichiometric CB model of~\cite{Heggerud2020,Wang2007}. The socio-economic model is an extension of the models discussed in~\cite{Iwasa2007,Suzuki2009,Iwasa2010,Sun2020} in which individuals in a population choose to either cooperate by lowering pollution rates, or defect, by continuing to pollute at higher rates. The individuals choose their strategy based on costs associated with social pressure, concern for CB, tourism and recreation value, and infrastructure investment~\cite{Iwasa2007}. We fully derive the network model and offer several useful simplifications in Section~\ref{sec:model}. Our analysis begins in Section~\ref{sec:singlelake} where we consider the coupled dynamics at a single lake. The analysis of the single lake case is done by utilizing the separation in time scales in several different ways, including a phase line analysis for when phosphorus is the polluting nutrient in Section~\ref{sec:phase line} and phase plane analysis when iron is the polluting nutrient in Section~\ref{sec:phase plane}. In each case we observe bistable behaviour and gain insight towards the socio-economic parameter regions that lead to favourable outcomes. Lastly, in Section~\ref{sec:Networkmodel}, we revisit the network model . We simplify the network model to allow the system to be studied in the restricted phase plane showing three possible equilibria corresponding the low, high, and mixed levels of cooperation regimes throughout the network. Finally, discuss several two-parameter bifurcation plots which highlight under which parameter regions each regime occurs.

\section{A coupled cyanobacteria-socio-economic\\ network model}\label{sec:model}

In this section we extend a well established CB model~\cite{Wang2007,Heggerud2020,Berger2006} to account for socio-economic dynamics that alter the amount of anthropogenic nutrient input.
The CB model considers three state variables: CB abundance, cell quota, and available nutrient. The socio-economic component tracks the proportion of cooperators given by the best-response dynamics~\cite{Iwasa2007}. We separately consider phosphorus and iron as the limiting nutrient and introduce the phosphorus in this section. Several approximations of certain mechanistic modelling components are provided to aid in later analysis.   
 We assume that several distinct lakes are connected via social connections, due the presence of social communication. Each individual in the network assumes one of two strategies, cooperation or defection denoted with $\mathcal{C}$ and $\mathcal{D}$, respectively.  Locally, each strategy will face costs associated with the abundance of CB but only defectors will face a cost associated with social ostracism. In addition, we assume that each strategy faces a societal cost from the lake network that is proportional to the frequency of players of opposing strategies, this is referred to as a network social norm cost. 

We now couple a socio-economic model~\cite{Iwasa2007,Iwasa2010} a stoichiometric phytoplankton model~\cite{Wang2007,Heggerud2020,Berger2006} yielding

\begin{equation}\label{eq:CBnetwork}
    \begin{dcases}\left.
       \begin{aligned}
       \frac{dB_i}{dt}&=rB_i(1-\frac{Q_m}{Q_i})h(B_i)-\nu_r B_i-\frac{D}{z_e}B_i,
       \\
 \frac{dQ_i}{dt}&=\rho(Q_i,P_i)-rQ_i(1-\frac{Q_m}{Q_i})h(B_i),
\\
\frac{dP_i}{dt}&=\frac{D}{z_e}(I(F_i(t))-P_i)-B_i\rho(P_i,Q_i),
\\
\frac{dF_i}{dt}&=r_{i,\mathcal{D} \mathcal{C}}(F_i,B_i)(1-F_i)-r_{i,\mathcal{C} \mathcal{D}}(F_i,B_i)F_i,
       \end{aligned} \right. 
    \end{dcases}
\end{equation}
where $B_i,Q_i,P_i$ and $F_i$ represent the concentration of CB carbon biomass, the internal phosphorus to carbon nutrient ratio (cell quota), dissolved mineral phosphorus and the frequency of cooperators, respectively at lake $i$. The functions $h(B)$ and $\rho(Q,P)$ represent the light dependent growth of CB and phosphorus uptake, respectively. Both functions follow the form of~\cite{Heggerud2020,Wang2007} with
\begin{multline}\label{eq:h(B)}
     h(B)=\dfrac{1}{z_m}\int_0^{z_m}\dfrac{I_{in}\text{exp}{[-(K_{bg}+kB)s]}}{H+I_{in}\text{exp}{[-(K_{bg}+kB)s]}}ds
    \\
    =\dfrac{1}{z_m(K_{bg}+kB)}\text{ln}\left(\dfrac{H+I_{in}}{H+I_{in}\text{exp}{[-(K_{bg}+kB)z_m]}}\right),
\end{multline}
   
and
\begin{equation}
    \rho(Q,P)=\rho_m\dfrac{Q_M-Q}{Q_M-Q_m}\dfrac{P}{M+P}.
\end{equation}

The anthropogenic phosphorus addition is given as
 \begin{equation}
    I(F_i(t))=p_{i,\mathcal{D}}(1-F_i(t))+p_{i,\mathcal{C}}F_i(t),
\end{equation}
where $p_{i,\mathcal{C}}$ and $p_{i,\mathcal{D}}$ are the phosphorus input concentrations of the cooperators and defectors, respectively, with $p_{i,\mathcal{C}}\leq p_{i,\mathcal{D}}$.

The derivation of $h(B)$ is based on sound principles and assumptions of algal growth rates and light attenuation via the Lambert-Beer law. However, as with many other mathematical models, approximations of complex but meaningful functions can prove useful in analysis. We note that the key features of $h(B)$ are that it is monotone decreasing and that $\lim_{B\to\infty}h(B)=0$. Thus, we assume that $h(B)$ is sufficiently approximated as follows: 
\begin{equation}\label{eq:happ}
    h(B)\approx h_{app}(B)=\frac{1}{\tilde aB+\tilde b},
\end{equation}
where $\tilde a$ and $\tilde b$ are values such that $h(B)=h_{app}(B)$ for $B=0$ and $B=1/kz_m$ and are given as $\tilde b=1/h(0)$ and $\tilde a=\frac{kz_m }{h(1/kz_m)}-\frac{kz_m}{h(0)}$. The comparison of $h(B)$ and $h_{app}(B)$ is given in Figure~\ref{fig:happvsh}
\begin{figure}
    \centering
    \includegraphics[width=0.6\paperwidth]{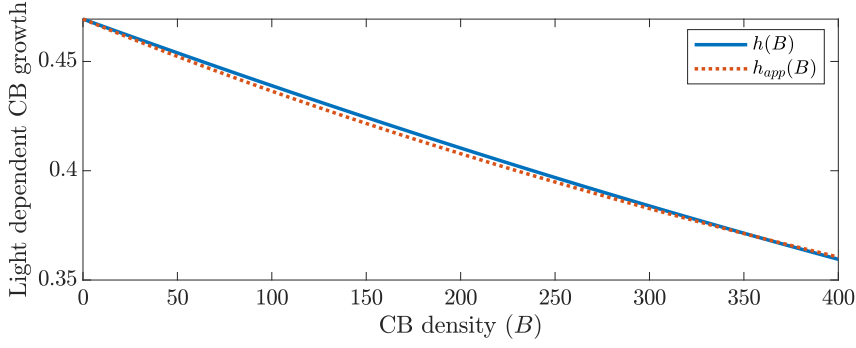}
    \caption{Light dependent growth function, $h(B)$, and its approximation, $h_{app}(B)$ given by~\eqref{eq:happ}. }
    \label{fig:happvsh}
\end{figure}

Many previous studies have established socio-economic dynamics based upon cost functions~\cite{Iwasa2007,Iwasa2010,Sun2021,Farahbakhsh2021,Satake2007}. We extend these results to suit our study in the following fashion. 
Let $C_{i,\mathcal{C}}(B_i)$ and $C_{i,\mathcal{D}}(F_i,B_i)$ denote the cost associated with each strategy at lake $i$. Each strategy has an associated baseline cost, $c_{i,\mathcal{C}}$ and $c_{i,\mathcal{D}}$ with $c_{i,\mathcal{C}}>c_{i,\mathcal{D}}$. Both strategies also face a `recreational' cost associated with the abundance of CB. Defectors face an additional cost of social ostracism that increases with CB abundance and the frequency of cooperators. Additionally, each strategy faces a network social norm cost that is proportional to the connectivity to each lake in the network and frequency of players of opposing strategy at that lake. The costs faced by the defector and cooperator are given respectively by  
\begin{multline}\label{eq:CD}
     C_{i,\mathcal{D}}(F_i,B_i)=\underbrace{c_{i,\mathcal{D}}}_\text{\scriptsize baseline cost}+\alpha\underbrace{(1+\xi F_i)}_\text{\scriptsize ostracism}\underbrace{\psi  B_i}_\text{concern for CB}
     \\+\underbrace{\phi B_i}_\text{\scriptsize cost of CB}+\underbrace{d_\defe\bar F,}_\text{social norm pressure}
\end{multline}
\begin{equation}\label{eq:CC}
    C_{i,\mathcal{C}}(B_i)=c_{i,\mathcal{C}} +\phi B_i +\underbrace{d_\coop(1-\bar F)}_\text{social norm pressure},
\end{equation}

where $\bar F=\dfrac{\sum_{ j}d_{ji}F_j(t)}{\sum_j d_{ji}}$ is the weighted average of the frequency of cooperators in the network. Further assume that the actual cost of each strategy is stochastic with a known cost and a random cost, given by $U_\coop=C_\coop+\epsilon_\coop$ and $U_\defe=C_\defe+\epsilon_\defe$. Since we are considering $\epsilon_\coop$ and $\epsilon_\defe$ to be additional random costs, their maximum values are of most interest. Thus, we assume that $\epsilon_\coop$ and $\epsilon_\defe$ follow the extreme value (Gumbel) distribution. Conveniently, the difference between two extreme value distributed random variables follows a logistic distribution~\cite{Hofbauer2003}. That is, $\epsilon_d=\epsilon_\coop-\epsilon_\defe\sim \textup{Logistic}(0,\frac{1}{\beta})$ with CDF $\dfrac{1}{1+e^{-\beta x}}$.  Thus, when a player evaluates their strategy they will defect with probability $P_\defe=P(U_\defe<U_\coop)=P(\epsilon_\defe-\epsilon_\coop<C_\coop-C_\defe)$ (or cooperate with probability $P_\coop=P(U_\defe>U_\coop)=P(\epsilon_\defe-\epsilon_\coop>C_\coop-C_\defe)$) given by the logistic distribution. 

Finally, the rate of switching is given as the probability of choosing a strategy, multiplied by the rate at which one evaluates their strategy: \begin{equation}\label{eq:rDC}
    r_{i,\mathcal{D} \mathcal{C}}(F_i,B_i)=\frac{s}{1+e^{\beta[C_{i,\mathcal{C}}(B_i)-C_{i,\mathcal{D}}(F_i,B_i)]}},
\end{equation}
\begin{equation}\label{eq:rCD}
   r_{i,\mathcal{C} \mathcal{D}}(F_i,B_i)=\frac{s}{1+e^{\beta[C_{i,\mathcal{D}}(F_i,B_i)-C_{i,\mathcal{C}}(B_i)]}},
\end{equation}
where $s$ is the level of conservatism of the population interpreted as the rate at which a player evaluates their strategy. If $s$ is small the population switches strategies infrequently.  $\beta$ is a parameter controlling the level of stochasticity. Large $\beta$ means the population deterministically chooses a strategy based on cost, whereas a small $\beta$ will make the switching more random as seen in Figure~\ref{fig:pwiseapp}. Furthermore, the last equation in~\eqref{eq:CBnetwork} can be written as
\begin{equation}
    r_{i,\mathcal{D} \mathcal{C}}(F_i,B_i)(1-F_i)-r_{i,\mathcal{C} \mathcal{D}}(F_i,B_i)F_i=r_{i,\defe\coop}(F_i,B_i)-sF_i.
\end{equation} The switching rates described in~\eqref{eq:rDC} and~\eqref{eq:rCD} arrive from a sound derivation and are quite intuitive and often referred to as the logit best-response model for choice probabilities. However, as previously discussed the approximation of complex functions by mathematically tractable functions is an incredibly useful tool. For this reason we note that the logistic function ($\frac{1}{1+e^{\beta x}}$) is readily approximated by the ramp function
\begin{equation}\label{eq:logapprox}
    \frac{1}{1+e^{-\beta x}}\approx  \begin{cases} 
      0 & x\leq -c^*, \\
      \frac{1}{2}+\tilde \beta x & -c^*\leq x\leq c^*, \\
      1 & c^*\leq x ,
  \end{cases}
\end{equation}
where $c^*=\frac{1}{2 \tilde \beta}$ and $\tilde\beta$ is a parameter found by minimising $\mathcal{L}^1$ norm of the difference between the two functions for a given value of $\beta$. Thus, $r_{i,\defe\coop}(F_i,B_i)$ is approximated by  
\begin{equation}\label{eq:rapproxfull}
    \hat r_{i,\defe\coop}(F_i,B_i)=s\cdot\begin{cases} 
      0 & 1/2+\tilde\beta (C_{i,\defe}-C_{i,\coop})\leq 0, \\
      1/2+\tilde\beta (C_{i,\defe}-C_{i,\coop})  & 0<1/2+\tilde\beta (C_{i,\defe}-C_{i,\coop})<1 ,\\
      1 &1\leq 1/2+\tilde\beta (C_{i,\defe}-C_{i,\coop}).
  \end{cases}
\end{equation}
\begin{figure}
    \centering
    \includegraphics[width=0.6\paperwidth]{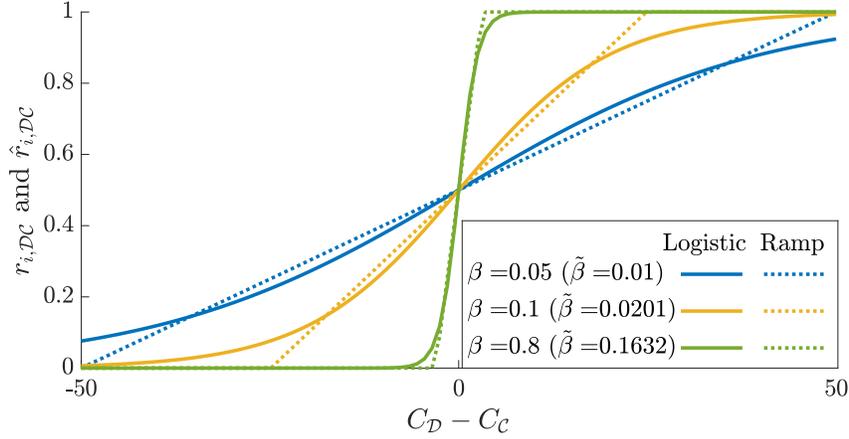}
    \caption{Shows the comparison of the logistic function (given in~\eqref{eq:rDC}) and the approximating ramp function in~\eqref{eq:rapproxfull} for various values of $\beta$ and the corresponding $\tilde\beta$ values. We take $s=1$ here. }
    \label{fig:pwiseapp}
\end{figure}
Further, the piecewise function can be equivalently written as
\begin{equation}\label{eq:rapproxfullminmax}
    \hat r_{i,\defe,\coop}(F_i,B_i)=s\cdot\textup{max}\{0,\textup{min}\{1,\frac{1}{2}+\tilde\beta (C_{i,\defe}-C_{i,\coop})\}\}.
\end{equation}
The difference between  $r_{i,\defe\coop}(F_i,B_i)$ and $\hat r_{i,\defe\coop}(F_i,B_i)$ is shown in Figure~\ref{fig:pwiseapp}.



The parameters and their values corresponding to the ecological components of model~\eqref{eq:CBnetwork} are summarized in Table~\ref{tab:ecoparamtable}. In-depth discussion and descriptions of the ecological parameters can be found in~\cite{Heggerud2020,Wang2008} and the references therein. The parameters corresponding to the socio-economic dynamics are summarized in Table~\ref{tab:socecparamtable} and are taken from the ranges in~\cite{Iwasa2007} and are justified by arguing a comparable scale of all terms in~\eqref{eq:CD} and~\eqref{eq:CC}. 
 
 \begin{table}[h]
\centering
\caption{Definitions and values for ecological parameters of system~\ref{eq:CBnetwork}}
{
\begin{tabular}{| V{2.1cm}| V{8.5cm} | V{1.5cm} | V{6.1cm} | V{2.5cm} |} 
  \hline
Par.  & Meaning & Value for sim. & Biological Values & Ref.
\\
\hline
$r$ & Maximum CB specific production rate& 1& 1 $/day$ &~\cite{Diehl2005}
\\ 
$Q_m$ & CB cell quota at which growth ceases (minimum)&0.004& 0.004 $gP/gC$&~\cite{Diehl2005}
\\
 $Q_M$ & CB cell quota at which nutrient uptake ceases (maximum)&0.04& 0.04 $gP/gC$&~\cite{Diehl2005}
 \\
 $z_m$ & Depth of epilimnion&7& $>0-10m$&~\cite{Kalff2002}
\\
$\nu_r$ & CB respiration loss rate& 0.35& 0.05-0.6 $/day$&~\cite{Whitton2012,Berger2006}
\\
$D$ & Water exchange rate& 0.02& $m/day$&~\cite{Berger2006}
\\
$H$ & Half saturation coefficient of light-dependent CB production & 120& 120 $\mu mol/(m^2\cdot s)$&~\cite{Diehl2005}
\\
$\rho_m$ & Maximum CB phosphorus uptake rate& 1& 0.2-1 $gP/gC/day$ &~\cite{Berger2006,Diehl2005}
\\
$M$ & Half saturation coefficient for CB nutrient uptake& 1.5& 1.5 $mgP/m^3$&~\cite{Diehl2005}
\\
$K_{bg}$ & Background light attenuation &0.3& 0.3-0.9 $/m$&~\cite{Berger2006,Diehl2005}
\\
$k$ & Algal specific light attenuation &0.0004&  0.0003-0.0004 $m^2/mgC $&~\cite{Berger2006,Diehl2005}
\\
$I_{in}$ & Light intensity at water surface& 300& 300 $\mu mol/(m^2\cdot s)$&~\cite{Diehl2005}
\\
\hline
\end{tabular}}
\label{tab:ecoparamtable}
\end{table}



\begin{table}[h!t]
    \centering
      \caption{Definitions and values for the socio-economic parameters of system~\eqref{eq:CBnetwork}}
       {
       
    \begin{tabular}{| V{2.1cm}| V{9.5cm} | V{1.5cm} | V{6.1cm} |} 
        \hline
Par.  & Meaning & Value & Units
\\\hline
$p_{i,\mathcal{C}}$ &  Concentration of influx of dissolved inorganic phosphorus for strategy $\mathcal{C}$. & 50& $mgP/m^3$
\\
$p_{i,\mathcal{D}}$ &  Concentration of influx of dissolved inorganic phosphorus for strategy $\mathcal{C}$. & 770& $mgP/m^3$
\\
$s$ &  Rate players make a decision to change strategies. & 0.001& $day^{-1}$
\\
$\beta$ &  Level of determinism in changing strategies. & 0.1& $\textup{(cost unit)}^{-1}$
\\
$\tilde\beta$ &  Slope of approximated line in~\eqref{eq:logapprox} & 0.0201& $\textup{(cost unit)}^{-1}$
\\
$c_{i,\mathcal{C}}$ &  Baseline cost to cooperate. & 50 & $\textup{(cost unit)}$
\\
$c_{i,\mathcal{D}}$ &  Baseline cost to defect. & 1 & $\textup{(cost unit)}$
\\
$\phi$ &  Cost conversion coeff. for CB & 10 & $\textup{(cost unit)}/mgC/m^3$
\\
$\alpha$ &  Cost conversion for social pressure due to CB  & 3 & $\textup{(cost unit)}$
\\
$\xi$ &  Strength of frequency dependence for social pressure & 10 & $-$
\\
$\psi$ & Level of social concern for CB & 0.02 & $(mgC/m^3)^{-1}$
\\
$d_{ji}$ &  connectedness of lake $j$ to $i$. & - & -
\\
$d_\defe$ &  Cost conversion coeff. of social norms for defecting. & 1 & $\textup{(cost unit)}$
\\
$d_\coop$ &  Cost conversion coeff. of social norms for cooperating. & 1 & $\textup{(cost unit)}$
\\
\hline
    \end{tabular}}
    \label{tab:socecparamtable}
\end{table}

\section{Dynamics of a single lake model}\label{sec:singlelake}
In this section we consider the single lake version of model~\eqref{eq:CBnetwork} where the external network pressure is treated as a constant. We separately consider the dynamics under phosphorus limitation and iron limitation, proceeding with a phase line and phase plane analysis, respectively. In each case bistability scenarios arise and bifurcation results are obtained.

In this paper we assume that when considering phosphorus almost all ecological processes occur on a fast time scale, thus the QSSA reduces the model to a single equation that represents the human dynamics on the slow timescale. When iron is considered, only the cell quota and CB dynamics occur on the fast time scale thus, the QSSA reduces the model to two differential equations on the slow timescale that represent the human and available iron dynamics. Hence, two types of analysis are performed. First, we consider a phase line analysis for the phosphorus system in Section~\ref{sec:phase line}. Second, we perform a phase plane analysis for the iron system in Section~\ref{sec:phase plane}

To start, assume that all other lakes are in a  fixed state allowing us to drop the subscript $i$. Thus the cost difference $C_\mathcal{C}-C_\mathcal{D}= c_\mathcal{C}-c_\mathcal{D}-\alpha(1+\xi F)\psi B+d_\coop(1-\bar F)-d_\defe \bar F$ can be written as $c_C-c_\mathcal{D}-\alpha(1+\xi F)\psi B +\hat\delta$, where $\hat\delta$ is treated as a parameter. In this section we study the following model:

\begin{equation}\label{eq:nondimsinglelake}
    \begin{dcases}\left.
       \begin{aligned}
       \frac{dB}{dt}&=rB(1-\frac{Q_m}{Q})h(B)-\nu_r B-\frac{D}{z_e}B,
       \\
 \frac{dQ}{dt}&=\rho(Q,P)-rQ(1-\frac{Q_m}{Q})h(B),
\\
\frac{dP}{dt}&=\frac{D}{z_e}(I(F(t))-P)-B\rho(P,Q),
\\
\frac{dF}{dt}&=r_{\mathcal{D} \mathcal{C}}(F,B)(1-F)-r_{\mathcal{C} \mathcal{D}}(F,B)F=\frac{s}{1+e^{\beta(C_\coop-C_\defe)}}-sF.
       \end{aligned} \right. 
    \end{dcases}
\end{equation}

\subsection{Dynamics of the phosphorus explicit model}\label{sec:phase line}
In this section we simplify system~\eqref{eq:nondimsinglelake} and use parameter values given for the phosphorus system in Tables~\ref{tab:ecoparamtable} and~\ref{tab:socecparamtable}. The simplifications lead to a single differential equation that is analyzed on the phase line to gain in-depth understanding of the single lake dynamics and the bistable nature of the system.

\subsubsection{Nondimensionalization of the single lake model} We begin by nondimensionalizing system~\eqref{eq:nondimsinglelake} by
letting $\tau=rt$, $u=kz_mB$, $v=\frac{Q}{Q_M}$, $w=\frac{P}{M}$, and $F$ remains unchanged as $F$ is dimensionless by definition. Making these substitutions into system~\eqref{eq:nondimsinglelake} yields:
\begin{equation}\label{eq:nondim2}
    \begin{dcases}\left.
       \begin{aligned}
       \frac{du}{d\tau}&=u(1-\frac{Q_m}{Q_M}\frac{1}{v})h(au)-\frac{(\nu_r+\frac{D}{z_e})}{r} u
       \\
 \frac{dv}{d\tau}&=\frac{\rho_m}{rQ_M}\frac{Q_M-Q_Mv}{Q_M-Q_m}\frac{w}{1+w}-(v-\frac{Q_m}{Q_M})h(au),
\\
M\frac{dw}{d\tau}&=\frac{D}{rz_e}(p_\coop F+p_\defe(1-F)-Mw)-\frac{\rho_M}{rkz_e}u\frac{Q_M-Q_M v}{Q_M-Q_m}\frac{w}{1+w},
\\
\frac{dF}{d\tau}&=\frac{s}{r}\left(\dfrac{1}{1+e^{\beta(c_C-c_\mathcal{D}-\alpha(1+\xi F)\psi au +\hat\delta)}}-F\right).
       \end{aligned} \right. 
    \end{dcases}
\end{equation}
 Upon substitution of the nondimensional parameters given in Table~\ref{tab:ndparamtableM} we have:
\begin{equation}\label{eq:nondimbetaeqepsilon}
    \begin{dcases}\left.
       \begin{aligned}
       \frac{du}{d\tau}&=u(1-\frac{1}{\gamma}\frac{1}{v})\hat{h}(u)-(\epsilon\beta_1+\beta_2) u
       \\
 \frac{dv}{d\tau}&=\omega(1-v)\frac{w}{1+w}-(v-\frac{1}{\gamma})\hat{h}(u),
\\
\frac{dw}{d\tau}&=\epsilon(\kappa_1 F-\beta_1 w)+\kappa_2(1-F)-\lambda u(1-v)\frac{w}{1+w},
\\
\frac{dF}{d\tau}&=\epsilon\left(\dfrac{1}{1+e^{\eta-\sigma(1+\xi F)u)}}-F\right),
       \end{aligned} \right. 
    \end{dcases}
\end{equation}
\begin{table}[h!]
\caption{Dimensionless parameters for system~\eqref{eq:nondimbetaeqepsilon} and equation~\eqref{eq:approxDF}.}
\centering
 {
\begin{tabular}{|l |c| c |} 
\hline
Parameter & Definition & Value 
\\
\hline
$\beta_1$&$\dfrac{D}{s z_m}$ & 0.2857  
\\
$\beta_2$&$\nu_r/r$& 0.35 
\\
 $\omega$&$\dfrac{\rho_m}{r(Q_M-Q_m)}$& 5.556  
 \\
 $\gamma$& $\frac{Q_M}{Q_m}$ & 10
\\
 $\kappa_1$& $\frac{p_\mathcal{C}}{M}\beta_1$ & 9.5238
 \\
 $\kappa_2$& $\frac{p_\mathcal{D}}{M}\frac{D}{rz_m}$ & 1.4667
\\
$\lambda$&$\dfrac{Q_M}{Q_M-Q_m}\dfrac{\rho_m }{M    r k z_m}$&52.9
\\
$k_1$&$z_m K_{b g}$& 2.1
\\
$I$&$I_{in}/H$ & 2.5
\\
$\eta$&$\beta(c_\mathcal{C}-c_\mathcal{D}+\hat\delta)$& -5 to 7 
\\
$\hat\eta$&$\tilde\beta(c_\mathcal{C}-c_\mathcal{D}+\hat\delta)$& -1 to 1.5 
\\
$\sigma$&$\alpha\beta \psi / kz_m$& 2.1429 
\\
$\hat\sigma$&$\alpha\tilde\beta \psi / kz_m$& 0.4307
\\
$\epsilon$&$s/r$& $<$0.01
\\\hline
\end{tabular}}
\label{tab:ndparamtableM}
\end{table}
where \begin{equation}\label{eq:h(u)nondim}
    \hat{h}(u)=\frac{1}{u+k_1}\log\left(\frac{1+I}{1+I\exp(-u-k_1)}\right),
\end{equation} is the non-dimensional light dependent growth term from~\eqref{eq:h(B)} and its nondimensional approximation stemming from~\eqref{eq:happ} is given as
\begin{equation}\label{eq:hhatapp}
    \hat{h}(u)\approx\hat h_{app}(u)=\frac{1}{au+b},
\end{equation}
where $b=1/\hat h(0)$ and $a=\frac{1}{\hat h(1)}-b$.

\subsubsection{Application of the quasi steady state approximation}\label{sec:phase lineQSSA}
 We now further reduce the model by utilizing the QSSA. The nondimensional system~\eqref{eq:nondimbetaeqepsilon} contains the parameter $\epsilon=s/r$, where $s$ is given as the rate at which players reevaluate strategies and $r$ is the maximal growth rate of CB. The rate at which players are able to reevaluate their strategy is very small in comparison to many ecological processes. Here we assume that the ecological dynamics of the CB occur on the order of days or weeks, whereas the social dynamics, or the maximum rate a player can switch strategies, is on the order of several months, or years. Thus, $\epsilon$ is a small parameter. 

 By re-scaling time with the small parameter $\epsilon$ in system~\eqref{eq:nondimbetaeqepsilon} we apply the QSSA. We introduce a new time scale $\tilde \tau=\epsilon\tau$ creating a slow time scale. The time scale $\tilde\tau$ is the slow timescale in which the human ($F$) dynamics occur, while $\tau$ is the fast timescale where most of the ecological dynamics occur. We note that certain aspects of the ecological dynamics such as water exchange rates can also occur on the slow timescale. Upon re-scaling time to $\tilde\tau$ we arrive at the following system:

\begin{equation}
    \begin{dcases}\left.
       \begin{aligned}
     \epsilon  \frac{du}{d\tilde\tau}&=u(1-\frac{1}{\gamma v})\hat{h}(u)-(\epsilon\beta_1+\beta_2) u,
       \\
 \epsilon\frac{dv}{d\tilde\tau}&=\omega(1-v)\frac{w}{1+w}-(v-\frac{1}{\gamma})\hat{h}(u),
\\
\epsilon\frac{dw}{d\tilde\tau}&=\epsilon(\kappa_1 F-\beta_1 w)+\kappa_2(1-F)-\lambda u(1-v)\frac{w}{1+w},
\\
\epsilon\frac{dF}{d\tilde\tau}&=\epsilon \dfrac{1}{1+e^{\eta-\sigma(1+\xi F)u)}}-\epsilon F.
       \end{aligned} \right. 
    \end{dcases}
\end{equation}
Now, by the QSSA, which assumes that the fast dynamics are in an equilibrium state, and letting $\epsilon$ go to zero we arrive at the differential algebraic system:
  \begin{subequations}\label{eq:phase linefull}
        \begin{empheq}[left=\empheqlbrace]{align}
         \frac{dF}{d\tilde\tau}&= \dfrac{1}{1+e^{\eta-\sigma(1+\xi F)u}}- F ,\label{eq:phase lineDiff}
    \\
      0&=\kappa_2(1-F)-\lambda u(1-v)\frac{w}{1+w},\label{eq:algebraic3Dsystemw}
     \\
  0&=u(1-\frac{1}{\gamma v})\hat{h}(u)-\beta_2 u, \label{eq:algebraic3Dsystemu}
  \\
  0&=\omega(1-v)\frac{w}{1+w}-(v-\frac{1}{\gamma})\hat{h}(u). \label{eq:algebraic3Dsystemv}
  \end{empheq}
  \end{subequations}
 Denote $u^*(F)$ as the solution to the algebraic system~\eqref{eq:algebraic3Dsystemw}-\eqref{eq:algebraic3Dsystemv}. The following theorem ensures that there is a unique solution to the algebraic system for the given parameter values. 
\begin{theorem}\label{thm:exunphase line}
 There exists a unique positive solution to the algebraic system defined by equations~\eqref{eq:algebraic3Dsystemw} to~\eqref{eq:algebraic3Dsystemv}  if $(1-\frac{1}{\gamma})\hat h(\frac{\omega\kappa_2(1-F)}{\beta_2\lambda })-\beta_2>0$.
 \end{theorem}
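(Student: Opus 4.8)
The plan is to eliminate $v$ and $w$ from the algebraic system~\eqref{eq:algebraic3Dsystemw}--\eqref{eq:algebraic3Dsystemv}, reduce it to a single scalar equation, obtain existence and uniqueness of its root from a monotonicity argument, and then recognize the stated inequality as exactly the condition that places this root in the biologically admissible range $v<1$ (and, with it, makes the reconstructed phosphorus value positive). Throughout I take $F\in[0,1)$, so that any positive solution has $u>0$ --- with $u=0$, equation~\eqref{eq:algebraic3Dsystemw} forces $F=1$ --- which legitimizes dividing by $u$.

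First I would carry out the elimination. Dividing~\eqref{eq:algebraic3Dsystemu} by $u$ gives $\bigl(1-\tfrac{1}{\gamma v}\bigr)\hat h(u)=\beta_2$, which (since $\hat h>0$ and $\beta_2>0$) forces $v>1/\gamma$ and, rearranged, reads
\[
\bigl(v-\tfrac1\gamma\bigr)\hat h(u)=\beta_2\,v .
\]
Inserting this into~\eqref{eq:algebraic3Dsystemv} gives $\tfrac{w}{1+w}=\dfrac{\beta_2 v}{\omega(1-v)}$ (note $v\neq1$, else~\eqref{eq:algebraic3Dsystemv} would force $\hat h(u)=0$), and inserting \emph{that} into~\eqref{eq:algebraic3Dsystemw} and cancelling the factor $(1-v)$ leaves $\kappa_2(1-F)=\tfrac{\lambda\beta_2}{\omega}\,uv$, i.e.\ $uv=K$ with
\[
K:=\frac{\omega\kappa_2(1-F)}{\lambda\beta_2}>0 ,
\]
which is precisely the argument of $\hat h$ in the hypothesis. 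Substituting $u=K/v$ back into $\bigl(1-\tfrac{1}{\gamma v}\bigr)\hat h(u)=\beta_2$ reduces the whole system to the single scalar equation
\[
\Psi(v):=\Bigl(1-\tfrac{1}{\gamma v}\Bigr)\hat h\!\left(\tfrac{K}{v}\right)=\beta_2 ,
\]
and conversely each root $v^*\in(1/\gamma,\infty)$ produces $u^*=K/v^*$ and then a unique $w^*$ from~\eqref{eq:algebraic3Dsystemv}; so it suffices to study $\Psi$.

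Next I would run the monotonicity argument. On $(1/\gamma,\infty)$ the factor $1-\tfrac1{\gamma v}$ is positive and strictly increasing, while $v\mapsto\hat h(K/v)$ is positive and nondecreasing because $\hat h$ is decreasing; hence $\Psi$ is continuous and strictly increasing there, with $\Psi(v)\to0$ as $v\to(1/\gamma)^+$ and $\Psi(v)\to\hat h(0)$ as $v\to\infty$. The hypothesis $(1-\tfrac1\gamma)\hat h(K)>\beta_2$ forces $\hat h(K)>\beta_2$ (as $\gamma>1$), hence $\hat h(0)\ge\hat h(K)>\beta_2>0$, so $\beta_2$ lies in the open range $(0,\hat h(0))$ of $\Psi$ and there is a \emph{unique} $v^*\in(1/\gamma,\infty)$ with $\Psi(v^*)=\beta_2$ (note $\hat h(u^*)=\dfrac{\beta_2}{1-1/(\gamma v^*)}>\beta_2$ automatically). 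Since $\Psi$ is strictly increasing, $v^*<1$ iff $\Psi(1)>\beta_2$, i.e.\ iff $(1-\tfrac1\gamma)\hat h(K)>\beta_2$ --- exactly the hypothesis; and then $\tfrac{w^*}{1+w^*}=\dfrac{\beta_2 v^*}{\omega(1-v^*)}>0$, so $w^*>0$. (One should also confirm this ratio is $<1$, i.e.\ $v^*<\omega/(\omega+\beta_2)$, so $w^*$ is finite; this holds comfortably inside the region cut out by the hypothesis for the parameter values in Tables~\ref{tab:ecoparamtable}--\ref{tab:socecparamtable}.) Uniqueness of the full triple is then automatic: every positive solution has $v>1/\gamma$ (if $v\le1/\gamma$ then $\bigl(1-\tfrac1{\gamma v}\bigr)\hat h(u)\le0<\beta_2$), hence corresponds to the unique $v^*$.

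I expect the main obstacle to be the elimination bookkeeping --- especially noticing that the rearrangement $\bigl(v-\tfrac1\gamma\bigr)\hat h(u)=\beta_2 v$ makes~\eqref{eq:algebraic3Dsystemw}--\eqref{eq:algebraic3Dsystemv} collapse all the way to $uv=K$ --- together with verifying strict monotonicity of $\Psi$ using only that $\hat h$ is continuous, positive and decreasing, rather than any closed-form property. Once $\Psi(v)=\beta_2$ is in hand, matching the hypothesis to $v^*<1$, and hence to $w^*>0$, is immediate.
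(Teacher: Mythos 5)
Your proposal is correct and follows essentially the same route as the paper: both reduce the system to the single relation $uv=\omega\kappa_2(1-F)/(\lambda\beta_2)$ and then to a strictly increasing scalar equation in $v$ (your $\Psi(v)=\beta_2$ is exactly the paper's $S(v)=0$), with the stated hypothesis playing the role of the intermediate-value condition at $v=1$. The only differences are bookkeeping --- the paper obtains $uv=K$ from one linear combination of the three equations and restricts $v\in[1/\gamma,1]$ a priori, whereas you derive it by sequential substitution and are in fact slightly more careful than the paper about positivity/finiteness of the reconstructed $w$.
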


 \begin{proof}

 First, by multiplying equation~\eqref{eq:algebraic3Dsystemw} by $\omega/\lambda$ and adding equation~\eqref{eq:algebraic3Dsystemu} multiplied by $v$ and equation~\eqref{eq:algebraic3Dsystemv} multiplied by $u$ we arrive at the equation:
 \begin{align}
    0&= \frac{\omega}{\lambda}\kappa_2(1-F)-\beta_2uv,
    \\
   \iff u&=\frac{\omega\kappa_2(1-F)}{\beta_2\lambda v} =G(v).\label{eq:G(v)}
 \end{align}
 Substituting $u=G(v)$ into equation~\eqref{eq:algebraic3Dsystemu} divided by $u$ gives
 \begin{equation}\label{eq:implicitustar}
 0= (1-\frac{1}{\gamma v})\hat h(G(v))-\beta_2=S(v).
 \end{equation}
$G(v)$ is a decreasing function of $v$ and furthermore, recall that $\hat h(u)=\frac{1}{u+k_1}\log\left(\frac{1+I}{1+I\exp(-u-k_1)}\right)$ is a decreasing function of $u$, by construction. Thus, $S(v)$ is a strictly increasing function of $v$ which guarantees uniqueness.
Now, by construction of the biological system, $v\in[\frac{1}{\gamma},1]$ and $S(1/\gamma)<0$. Thus, if $S(1)=(1-\frac{1}{\gamma})h(\frac{\omega\kappa_2(1-F)}{\beta_2\lambda })-\beta_2>0$ then by the intermediate value theorem a solution to~\eqref{eq:implicitustar} exists. Lastly, equations~\eqref{eq:algebraic3Dsystemw} and~\eqref{eq:algebraic3Dsystemv} yield linear equations in $w$ ensuring uniqueness.

 \end{proof}
 
 \begin{remark}
  Theorem~\ref{thm:exunphase line} also applies when using $\hat h_{app}(u)$ in place of $\hat h(u)$. The condition for existence and uniqueness of a positive solution remains the same and an explicit form of $u^*(F)$ can be obtained. 
 \end{remark}
 \begin{remark}
 When the condition in Theorem~\ref{thm:exunphase line} is not satisfied a unique trivial solution can only exist when $F=1$. Otherwise, no positive solution exists. 
 \end{remark}
 
\subsubsection{An approximation for the cyanobacteria abundance} \label{sec:ustarF}

In this Subsection we explicitly compute $u^*(F)$ by utilizing the previously established approximation for $\hat h(u)$, given  in~\eqref{eq:hhatapp}, and solving~\eqref{eq:algebraic3Dsystemw}-\eqref{eq:algebraic3Dsystemv}.
Without approximations or simplifications the unique positive solution to~\eqref{eq:algebraic3Dsystemw}-\eqref{eq:algebraic3Dsystemv} is verified to exist by Theorem~\ref{thm:exunphase line} but can only be implicitly given.

Proceeding, from equation~\eqref{eq:G(v)}:
\begin{align}
    v&=\frac{\omega\kappa_2(1-F)}{\lambda\beta_2u}=p_1\frac{(1-F)}{u}.\label{eq:v(u)3Dphase line}
\end{align} We explicitly solve for $u^*(F)$ by utilizing the approximation for $\hat{h}(u)$ given by $\hat{h}_{app}(u)$ in~\eqref{eq:hhatapp}, using $v$ as in~\eqref{eq:v(u)3Dphase line}, and solving~\eqref{eq:algebraic3Dsystemu} for $u$ gives 
\begin{equation}\label{eq:ustar(F)}
    u^*(F)=\frac{\gamma p_1(1-\beta_2b)(1-F)}{\beta_2a\gamma p_1(1-F)+1}=\frac{a_1(1-F)}{a_2(1-F)+1},
\end{equation}
where $a_1=\gamma p_1(1-\beta_2b)>0$ and $a_2=\beta_2a\gamma p_1>0$.

\begin{remark}
 As shown in Figure~\ref{fig:ustarFvsustar} the explicit version of $u^*(F)$, as in~\eqref{eq:ustar(F)}, is a reasonable approximation to the numerical solution of~\eqref{eq:algebraic3Dsystemw}-\eqref{eq:algebraic3Dsystemv}.  Note that both solutions give $u^*(1)=0$. However, in reality even with 100\% cooperation we would predict a small but non-zero CB abundance due to the non-zero pollution rate of the cooperators. In our QSSA this term ($\kappa_1F$) disappears, and is hence essentially deemed negligible resulting in $u^*(1)=0$. 
\end{remark}

\begin{figure}
    \centering
    \includegraphics[width=0.6\paperwidth]{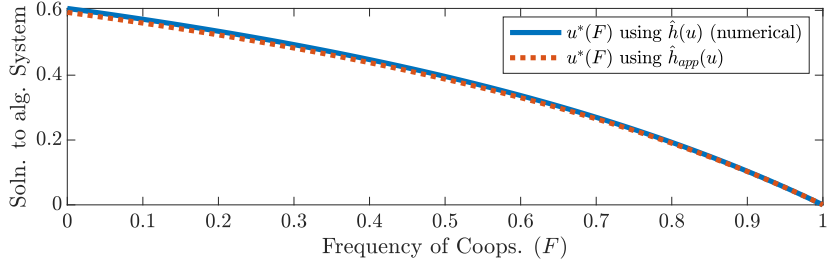}
    \caption{Comparison of the approximation given by~\eqref{eq:ustar(F)} and the numerical solution for $u^*(F)$ }
    \label{fig:ustarFvsustar}
\end{figure}
\subsubsection{A simplifying approximation for the governing differential equation}
We now apply the approximation discussed in~\eqref{eq:rapproxfullminmax} to~\eqref{eq:phase lineDiff} to solve for equilibrium values. 
From Section~\ref{sec:ustarF} we obtained an explicit approximation of the solution to the algebraic system~\eqref{eq:algebraic3Dsystemw}-\eqref{eq:algebraic3Dsystemv} given by~\eqref{eq:ustar(F)}. Thus, the entire system~\eqref{eq:phase linefull} is reduced to the following equation: 

\begin{equation}\label{eq:phase lineDiffbeforeapprox}
    \frac{dF}{d\tilde\tau}=\dfrac{1}{1+e^{\eta-\sigma(1+\xi F)u^*(F)}}- F,
\end{equation} where $u^*(F)$ is given by~\eqref{eq:ustar(F)}. We further simplify~\eqref{eq:phase lineDiffbeforeapprox} by using the nondimensionalized version of the approximation given in~\eqref{eq:rapproxfullminmax}. Thus,~\eqref{eq:phase lineDiffbeforeapprox} is approximated by 
\begin{equation}\label{eq:approxDF}
    \frac{dF}{d\tilde\tau}= \text{max}\left\{0,\text{min}\left\{1,\frac{1}{2}-\hat\eta+J(F)\right\}\right\}- F,
\end{equation}
where \begin{equation}
   J(F)= \hat\sigma(1+\xi F)u^*(F)=\hat\sigma(1+\xi F)\frac{a_1(1-F)}{a_2(1-F)+1},
\end{equation}
and the remaining nondimensional parameters are given in Table~\ref{tab:ndparamtableM}.

\subsubsection{Equilibrium and phase line analysis of the simplified single lake phosphorus model}

Here we discuss the possible equilibrium, their stability and bifurcation structure of equation~\eqref{eq:approxDF} with respect to the parameter $\hat\eta$. Equation~\eqref{eq:approxDF} has four possible steady state solutions given by $F^*_l=0,F^*_1=1,F^*_h,$ and $F^*_u$ where $F^*_h$ and $F^*_u$ are internal equilibrium given by the solution to $J(F)=F+\hat\eta-1/2$. The analysis is supplemented graphically in Figure~\ref{fig:phase lineinfo} where intersections of the nonlinear curve $J(F)$ with the linear curve $F+\hat\eta-1/2$ for various values of $\hat\eta$ represent the equilibria. Furthermore the structure of the equilibrium is shown in a bifurcation diagram (see Figure~\ref{fig:phase linebifurc}) where three critical values of $\hat\eta$ are highlighted. 
\begin{figure}
    \centering
    \includegraphics[width=0.6\paperwidth]{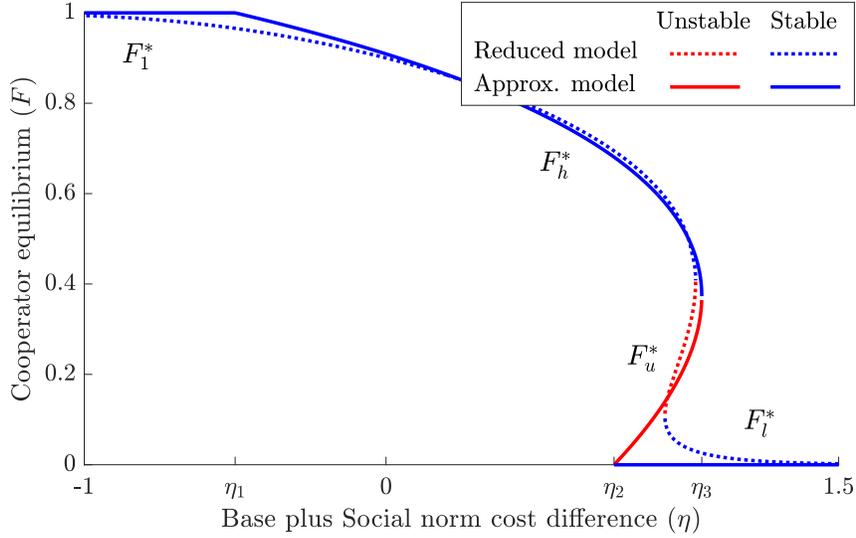}
    \caption{Solid line: Bifurcation plot of equilibria solutions to the approximated model~\eqref{eq:approxDF} with respect to $\hat\eta$. Dotted line: Bifurcation plot of the reduced model~\eqref{eq:phase lineDiffbeforeapprox} with $\eta$ values scaled to $\hat\eta$ values. Note the two plots are qualitatively similar other than $F^*_l$ is small but non zero and $F^*_1$ does not exist in the full model and is explained in Remark~\ref{rem:fulltoappphase line}.}
    \label{fig:phase linebifurc}
\end{figure}
By~\eqref{eq:approxDF} an internal equilibrium must satisfy the equation $J(F)=F+\hat\eta-1/2$ for some $F\in[0,1]$. 
 \begin{align}
     J(F)&=F+\hat\eta-1/2\label{eq:J(F)equalseta},
         \\
        \iff   0&=-F+1/2-\hat{\eta} +\hat\sigma(1+\xi F)\frac{a_1(1-F)}{a_2(1-F)+1},
  \\
  &=(1/2-\hat{\eta}-F)(a_2(1-F)+1)+\sigma(1+\xi F)(a_1(1-F)),
  \\
  &=F^2(a_2-\sigma a_1\xi)+F(\sigma a_1(\xi-1)-1-a_2-a_2(1/2-\hat{\eta}))\nonumber
  \\&\qquad \qquad\qquad\qquad\qquad \qquad\qquad+\sigma a_1+(1/2-\hat{\eta})(a_2+1)\label{eq:Fquad},
 \end{align}
 for some $F\in[0,1]$. 
 The solutions to this equation are shown graphically in Figure~\ref{fig:intersectionphase line} as the intersections of the curve $J(F)$ with $F+\hat\eta-1/2$. 
Let $\Delta$ denote the discriminant of~\eqref{eq:Fquad}. Then \begin{align}
        \Delta&=\left[\sigma a_1(\xi-1)-1-a_2-a_2(\frac{1}{2}-\hat{\eta}))\right]^2\nonumber
        \\&\qquad-4(a_2-\sigma a_1\xi)\left(\sigma a_1+(\frac{1}{2}-\hat{\eta})+a_2(\frac{1}{2}-\hat{\eta})\right),
        \\
        & =a_2^2(\frac{1}{2}-\hat{\eta})^2+\hat B(\frac{1}{2}-\hat{\eta})+\hat C, 
    \end{align}
    where $\hat B=\left[2\,a_{2}\,\left(a_{2}-a_{1}\,\sigma \,\left(\xi -1\right)+1\right)-\left(a_{2}+1\right)\,\left(4\,a_{2}-4\,a_{1}\,\sigma \,\xi \right)\right]$ and $\hat C=(a_{2}-a_{1}\,\sigma \,(\xi -1)+1)^2-a_{1}\,\sigma \,\left(4\,a_{2}-4\,a_{1}\,\sigma \,\xi \right)$.
 Two solutions to~\eqref{eq:J(F)equalseta} exist when $\Delta>0$ however, since $\Delta$ is given as a quadratic function in $1/2-\hat\eta$, $\Delta$ is not positive everywhere for all values of $\hat \eta$.  Note that for the given parameter values $B^2-4a_2^2\hat C>0$, thus $\Delta=0$ has two solutions given by

  \begin{align}
          \hat{\eta}_{3,4}
    &= -\left(\dfrac{-\hat B \pm \sqrt{\hat B^2-4a_2^2\hat C}}{2a_2^2}\right)+\frac{1}{2}\nonumber,
    \\&= \dfrac{\splitdfrac{\pm4\,\sqrt{-a_{1}\,\sigma \,\left(a_{2}-a_{1}\,\sigma \,\xi \right)\,\left(a_{2}+\xi +a_{2}\,\xi \right)}}{-2\,a_{2}-{a_{2}}^2+2\,a_{1}\,a_{2}\,\sigma +4\,a_{1}\,\sigma \,\xi +2\,a_{1}\,a_{2}\,\sigma \,\xi }}{2\,{a_{2}}^2}\label{eq:etaroots},
  \end{align} where $a_{1}\,\sigma \,\xi-a_{2}>0$ for our parameter region. Numerically we have $\hat{\eta}_3=1.0464$ and $\hat{\eta}_4=32.3$. Thus, if $\hat{\eta}<\hat{\eta}_3$ two solutions exist to~\eqref{eq:Fquad}. Also, if $\hat\eta>\eta_4$ two solutions to~\eqref{eq:Fquad} exist, but the solutions are values of $F$ that are much greater than one and are not considered. These solutions occur for values of $F$ that exceed the vertical asymptote of $J(F)$. Thus, we conclude that $\eta<\eta_3$ is a necessary condition for solutions to~\eqref{eq:J(F)equalseta} to be in $[0,1]$ and that the solutions are given by
\begin{align}
    F^*_h&= \frac{3\,a_{2}-2\,a_{2}\,\hat{\eta} +2\,a_{1}\,\sigma -2\,\sqrt{\Delta}-2\,a_{1}\,\sigma \,\xi +2}{4\,\left(a_{2}-a_{1}\,\sigma \,\xi \right)},\label{eq:Fu}
    \\ 
   F^*_u&= \frac{3\,a_{2}-2\,a_{2}\,\hat{\eta} +2\,a_{1}\,\sigma +2\,\sqrt{\Delta}-2\,a_{1}\,\sigma \,\xi +2}{4\,\left(a_{2}-a_{1}\,\sigma \,\xi \right)} \label{eq:Fl},
\end{align}
with $F^*_h>F^*_u$. Furthermore,
\begin{equation}
    \dfrac{dF^*_u}{d\hat{\eta}}=-\frac{-a_2+\frac{\frac{d\Delta}{d\hat{\eta}}}{2\sqrt{\Delta}}}{2\,\left(a_{1}\,\sigma \,\xi-a_{2} \right)}.
\end{equation} When $\hat\eta<\hat\eta_3$, $\frac{d\Delta}{d\hat{\eta}}<0$ since $\Delta$ is a concave up quadratic and $\hat\eta_3$ is the left root. Thus it is easily verified that $\dfrac{dF^*_u}{d\hat{\eta}}>0$ when $\hat\eta<\hat\eta_3$. 

Observe that

\begin{equation}
    \dfrac{dF^*_h}{d\hat{\eta}}=\frac{a_2+\frac{\frac{d\Delta}{d\hat{\eta}}}{2\sqrt{\Delta}}}{2\,\left(a_{1}\,\sigma \,\xi-a_{2}. \right)}.
\end{equation}
We show that $a_2+\frac{\frac{d\Delta}{d\hat{\eta}}}{2\sqrt{\Delta}}<0$. Since $\Delta$ is positive and $\frac{d\Delta}{d\hat{\eta}}$ is negative for $\hat{\eta}<\hat{\eta}_3$ we examine
\begin{align}
    2a_2\sqrt{\Delta}&<-\frac{d\Delta}{d\hat{\eta}},
    \\
   \iff 4a_2^2\Delta&<(-\frac{d\Delta}{d\hat{\eta}})^2,
    \\
   \iff 4a_2^2(a_2^2\hat{\eta}^2+\hat B\hat{\eta}+\hat C)&<(2a_2^2\hat{\eta}+\hat B)^2,
    \\
  \iff  4a_2^4\hat{\eta}^2+4a_2^2\hat B\hat{\eta}+4a_2^2\hat C&<4a_2^2\hat{\eta}^2+2a_2^2\hat B\hat{\eta}+\hat B^2,
    \\
  \iff  0<\hat B^2-4a_2^2\hat C,
\end{align}
which is verified true as in~\eqref{eq:etaroots}. Thus, $\dfrac{dF^*_h}{d\hat{\eta}}<0$ for $\hat\eta<\hat\eta_3$.

Since the internal equilibrium of our system are given by $F^*_h$ and $F^*_u$, and we have further shown that $F^*_u$ is an increasing function of $\hat\eta$, while $F^*_h$ is decreasing. We now search for the critical points for $\hat\eta$ in which $F^*_h<1$ and $F^*_u>0$. First, when $\hat\eta=\hat\eta_1=-1/2$, we note that $F^*_h=1$ as seen visually from Figure~\ref{fig:intersectionphase line} and  verified mathematically from both~\eqref{eq:Fu} and~\eqref{eq:J(F)equalseta}. Second, when  $\hat\eta=\hat\eta_2=1/2+\sigma a_1/(a_2+1)$,
 we note that $F^*_u=F^*_l=0$ as seen visually from Figure~\ref{fig:intersectionphase line} and verified mathematically from both~\eqref{eq:Fl} and~\eqref{eq:J(F)equalseta}.
Lastly, we note that $F^*_h=F^*_u$ when $\hat\eta=\hat\eta_3$ and for $\hat\eta>\hat\eta_3$, $F^*_h$ and $F^*_u$ have imaginary parts and are not considered equilibrium. 
The above discussion leads us to the following theorem regarding stability and bifurcations of~\eqref{eq:approxDF}.
\begin{theorem}\label{thm:phase lineEQ}The equilibrium and stability of~\eqref{eq:approxDF} are given by the following:
\begin{itemize}
    \item (i) If $\hat\eta<\hat\eta_1=-1/2$ then $F^*_1$ is the only equilibrium to exist and is globally stable. 
    \item (ii) If $\hat\eta_1<\hat\eta<\hat\eta_2$ then $F^*_h$ is the only equilibrium to exist and is globally stable. 
    \item (iii) If $\hat\eta_2<\hat\eta<\hat\eta_3$ then $F^*_l,F^*_u$ and $F^*_h$ exist. Bistability occurs where $F^*_l$ and $F^*_h$ are locally stable, and $F^*_u$ is unstable. 
    \item (iv) If $\hat\eta>\hat\eta_3$ then $F^*_l$ is the only equilibrium to exist and is globally stable.
\end{itemize}
\end{theorem}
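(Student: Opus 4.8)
The plan is to treat \eqref{eq:approxDF} as a one–dimensional autonomous ODE $F'=g(F)$ on the interval $[0,1]$, where
\[
g(F)=G(F)-F,\qquad G(F)=\max\Bigl\{0,\min\bigl\{1,\tfrac12-\hat\eta+J(F)\bigr\}\Bigr\},
\]
and $J(F)=\hat\sigma(1+\xi F)u^*(F)\ge 0$ on $[0,1]$ with $J(1)=0$, using the explicit form of $u^*(F)$ in \eqref{eq:ustar(F)}. First I would record that $[0,1]$ is forward invariant, since $g(0)=G(0)\ge 0$ and $g(1)=G(1)-1\le 0$. Because the system is scalar and autonomous, every assertion in the theorem reduces to a statement about the sign of the continuous function $g$: an equilibrium is a zero of $g$; it is locally asymptotically stable exactly when $g>0$ immediately to its left and $g<0$ immediately to its right (unstable for the reversed pattern), with one–sided versions at the endpoints; and it is globally stable once it is the unique zero of $g$ and the intermediate value theorem forces $g$ to have the correct sign on each complementary subinterval.

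Next I would enumerate the equilibria as functions of $\hat\eta$. Two of them can only sit on the boundary: $1$ is an equilibrium iff $\tfrac12-\hat\eta+J(1)\ge 1$, i.e. iff $\hat\eta\le-\tfrac12=\hat\eta_1$; and $0$ is an equilibrium iff $\tfrac12-\hat\eta+J(0)\le 0$, which is precisely $\hat\eta\ge\hat\eta_2$, the value at which the root $F^*_u$ of \eqref{eq:Fquad} crosses $0$. An interior equilibrium $F^*\in(0,1)$ cannot lie on a saturated branch of the ramp — that would force $F^*\in\{0,1\}$ — so it satisfies $\tfrac12-\hat\eta+J(F^*)\in(0,1)$ and hence solves $J(F)=F+\hat\eta-\tfrac12$, i.e. the quadratic \eqref{eq:Fquad}. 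I would then invoke the facts already established before the theorem: the two real roots $F^*_u<F^*_h$ exist exactly for $\hat\eta<\hat\eta_3$, with $F^*_u$ increasing and $F^*_h$ decreasing in $\hat\eta$, $F^*_h=1$ at $\hat\eta_1$ and $F^*_u=0$ at $\hat\eta_2$. Combining this monotonicity with the pinned values gives $F^*_h\in(0,1)$ iff $\hat\eta\in(\hat\eta_1,\hat\eta_3)$ and $F^*_u\in(0,1)$ iff $\hat\eta\in(\hat\eta_2,\hat\eta_3)$, which yields exactly the equilibrium lists in (i)--(iv) (and displays the expected structure: transcritical collision of $F^*_u$ with $0$ at $\hat\eta_2$, of $F^*_h$ with $1$ at $\hat\eta_1$, and a fold of $F^*_u,F^*_h$ at $\hat\eta_3$).

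For stability, near an interior equilibrium we are on the linear branch, where $g(F)=J(F)-\ell(F)$ with $\ell(F)=F+\hat\eta-\tfrac12$; multiplying by the positive quantity $a_2(1-F)+1$ shows $\operatorname{sign} g(F)=\operatorname{sign} q(F)$ there, where $q$ is the quadratic in \eqref{eq:Fquad}. Since its leading coefficient $a_2-\sigma a_1\xi$ is negative in the parameter regime of interest, $q$ is a downward parabola, positive between its roots; hence $g>0$ just left of $F^*_h$ and just right of $F^*_u$, and $g<0$ on the other side of each, so $F^*_h$ is stable and $F^*_u$ unstable. At the boundary: when $\hat\eta>\hat\eta_2$ we have $\tfrac12-\hat\eta+J(0)<0$, so $G\equiv0$ near $0$ and $g(F)=-F<0$ there, making $0$ stable; when $\hat\eta<\hat\eta_1$ we have $\tfrac12-\hat\eta>1$, so $G\equiv1$ near $1$ and $g(F)=1-F>0$ there, making $1$ stable. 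To globalize in (i), (ii) and (iv), I would note that $g$ has a single zero $F^*$ (by the enumeration above), hence by the intermediate value theorem constant sign on each component of $[0,1]\setminus\{F^*\}$, and the local computation just made pins that sign, so every orbit converges monotonically to $F^*$. In (iii) the three zeros $0<F^*_u<F^*_h<1$ cut $(0,1)$ into three pieces on which $g$ is forced to be $-,+,-$ respectively, so $[0,F^*_u)$ is the basin of $0$ and $(F^*_u,1]$ the basin of $F^*_h$, with $F^*_u$ unstable — the claimed bistability.

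The individual sign computations are routine once the equilibrium count is in hand; the step needing the most care is the bookkeeping of the second paragraph — matching the three thresholds $\hat\eta_1,\hat\eta_2,\hat\eta_3$ with, respectively, the boundary collision of $F^*_h$, the boundary collision of $F^*_u$, and the fold $F^*_h=F^*_u$, and checking that the two kinks of the ramp neither destroy nor create equilibria. This is precisely where the parameter inequalities already verified in the text — $a_1\sigma\xi-a_2>0$ (downward parabola), $\hat B^2-4a_2^2\hat C>0$ (real, distinct roots of $\Delta$), and $\hat\eta_2<\hat\eta_3$ (so the bistable window (iii) is nonempty) — enter the argument.
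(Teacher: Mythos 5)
Your proposal is correct and takes essentially the same route as the paper: a scalar phase-line sign analysis based on comparing $J(F)$ with the line $F+\hat\eta-\tfrac12$ (you read the sign off the downward-opening quadratic~\eqref{eq:Fquad}, the paper uses the concavity of $J$ together with $a_1\sigma\xi-a_2>0$ — the same comparison), combined with the monotonicity of $F^*_h$ and $F^*_u$ in $\hat\eta$ and the thresholds $\hat\eta_1,\hat\eta_2,\hat\eta_3$ established before the theorem. Your explicit bookkeeping of the ramp saturation at the boundary equilibria $F=0,1$ is slightly more careful than the paper's, but the substance of the argument is the same.
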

The proof for Theorem \ref{thm:phase lineEQ} is given in the \nameref{Appendix}.
Furthermore, we define a saddle node bifurcation as a point when two steady states collide and annihilate each other as a bifurcation parameter changes. Following this definition we conclude the following corollaries.
\begin{corollary}\label{cor:saddlenode}
A saddle node bifurcation occurs at the point $(F,\hat\eta)=(F^*_h,\hat\eta_3)$. 
\end{corollary}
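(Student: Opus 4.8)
The strategy is to read the conclusion directly off the structure already exposed in the paragraphs preceding Theorem~\ref{thm:phase lineEQ}, checking it against the definition of a saddle-node bifurcation stated just above the corollary. Recall that the interior equilibria of~\eqref{eq:approxDF} are exactly the two roots $F^*_h\ge F^*_u$ of the quadratic~\eqref{eq:Fquad}, whose discriminant is $\Delta=a_2^2(\tfrac12-\hat\eta)^2+\hat B(\tfrac12-\hat\eta)+\hat C$, a concave-up quadratic in $\hat\eta$ with roots $\hat\eta_3<\hat\eta_4$ (see~\eqref{eq:etaroots}). It has already been established that $\Delta>0$ for $\hat\eta<\hat\eta_3$ and $\Delta<0$ for $\hat\eta_3<\hat\eta<\hat\eta_4$. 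Hence for $\hat\eta$ slightly below $\hat\eta_3$ there are two distinct real interior equilibria $F^*_h\ne F^*_u$ (both admissible: $F^*_h<1$ since $F^*_h$ decreases in $\hat\eta$ from $F^*_h(\hat\eta_1)=1$, and $F^*_u>0$ since $F^*_u$ increases from $F^*_u(\hat\eta_2)=0$, and $\hat\eta_1<\hat\eta_2<\hat\eta_3$); at $\hat\eta=\hat\eta_3$ these two branches meet at the common value $F^*_h(\hat\eta_3)=F^*_u(\hat\eta_3)\in(0,1)$; and for $\hat\eta$ slightly above $\hat\eta_3$ the roots acquire nonzero imaginary parts and so cease to be equilibria. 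Combined with Theorem~\ref{thm:phase lineEQ}(iii)–(iv), which identifies $F^*_h$ as locally stable and $F^*_u$ as unstable for $\hat\eta_2<\hat\eta<\hat\eta_3$ and records that neither persists for $\hat\eta>\hat\eta_3$, this is precisely the statement that two steady states collide and annihilate as $\hat\eta$ crosses $\hat\eta_3$, i.e.\ a saddle-node bifurcation at $(F^*_h,\hat\eta_3)$.

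To expose the bifurcation-theoretic content explicitly I would also verify that $(F^*_h,\hat\eta_3)$ is a non-hyperbolic equilibrium. At an interior equilibrium one has $\tfrac12-\hat\eta+J(F)=F\in(0,1)$ strictly, so in a neighborhood the right-hand side of~\eqref{eq:approxDF} coincides with the smooth function $g(F,\hat\eta)=\tfrac12-\hat\eta+J(F)-F$ and ordinary bifurcation theory applies. A double root of $g(\cdot,\hat\eta_3)$ is a point where the graph of $J$ is tangent to the line $F\mapsto F+\hat\eta_3-\tfrac12$, so $J'(F^*_h)=1$ and therefore $\partial_F g(F^*_h,\hat\eta_3)=J'(F^*_h)-1=0$: the linearization vanishes exactly at $\hat\eta_3$, confirming loss of hyperbolicity. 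For the full Sotomayor nondegeneracy conditions, note $\partial_{\hat\eta}g=-1\ne0$ (transversality in the parameter) and $\partial_{FF}g=J''(F^*_h)$, so it remains only to check $J''(F^*_h)\ne0$; since $J(F)=\hat\sigma(1+\xi F)\,a_1(1-F)/(a_2(1-F)+1)$ is a rational function with quadratic numerator and linear denominator, $J''$ is not identically zero, and a short computation with the parameter values of Table~\ref{tab:ndparamtableM} confirms $J''(F^*_h)\ne0$.

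I do not anticipate a genuine obstacle: the substance of the corollary is already contained in the derivation of $\hat\eta_3$ from $\Delta=0$ and in Theorem~\ref{thm:phase lineEQ}, and the proof amounts to matching the established equilibrium count and stability picture to the stated definition, with the tangency identity $J'(F^*_h)=1$ supplying the degeneracy of the linearization. The only real computation is the nondegeneracy check $J''(F^*_h)\ne0$, which is routine. (An optional refinement would be a parallel remark that the same argument, applied to the reduced model~\eqref{eq:phase lineDiffbeforeapprox} with the exact $u^*(F)$ from Theorem~\ref{thm:exunphase line}, yields the analogous saddle-node, matching the dotted curve in Figure~\ref{fig:phase linebifurc}.)
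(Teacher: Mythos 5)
Your proposal is correct and follows the same skeleton as the paper's proof: both identify the saddle node with the collision of the stable branch $F^*_h$ and the unstable branch $F^*_u$ at $\hat\eta_3$ (where $\Delta=0$), their disappearance (complex roots) for $\hat\eta>\hat\eta_3$, and the check that near the collision point $0<J(F)+1/2-\hat\eta<1$, so the right-hand side of~\eqref{eq:approxDF} is in the interior, smooth regime of the ramp function. The genuine difference is how the degeneracy $J'(F^*_h)=1$ is established: the paper verifies it by substituting the closed-form expressions for $F^*_h$ and $\hat\eta_3$ and grinding through a symbolic computation (checked with MATLAB), whereas you deduce it structurally — at $\Delta=0$ the quadratic~\eqref{eq:Fquad} has a double root in $(0,1)$, and since it differs from $g(F,\hat\eta)=J(F)+1/2-\hat\eta-F$ only by the positive, nonvanishing factor $a_2(1-F)+1$, the double root forces tangency of $J$ with the line, i.e.\ $\partial_F g=0$. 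Your route is cleaner and essentially parameter-free (it needs only the nonvanishing leading coefficient $a_2-\hat\sigma a_1\xi\ne0$, which the paper notes holds, and the location of the double root in $(0,1)$, which you get from the monotonicity of the two branches and $\hat\eta_1<\hat\eta_2<\hat\eta_3$). You also go beyond the paper's stated definition by checking Sotomayor's conditions ($\partial_{\hat\eta}g=-1\ne0$ and $J''(F^*_h)\ne0$), which upgrades the conclusion to a nondegenerate saddle node in the standard sense; note that this last check does not require numerics, since the appendix proof of Theorem~\ref{thm:phase lineEQ} already records $J''(F)=-2\,a_1\hat\sigma\,(a_2+\xi+a_2\xi)/\left(a_2(1-F)+1\right)^3<0$ for all $F$, so $J''(F^*_h)<0$ follows from positivity of the parameters alone.
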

\begin{corollary}\label{cor:saddlenodezero}
A saddle node bifurcation occurs at the point $(F,\hat\eta)=(F^*_l,\hat\eta_2)$. 
\end{corollary}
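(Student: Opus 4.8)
The plan is to verify the corollary directly from the paper's own definition of a saddle node bifurcation — two steady states colliding and annihilating as the parameter varies — by tracking the two equilibrium branches $F^*_l$ and $F^*_u$ of~\eqref{eq:approxDF} as $\hat\eta$ passes through $\hat\eta_2$, rather than invoking a vanishing-eigenvalue criterion (which, as noted below, does not apply here). The three things to establish are: (i) $F^*_l\equiv 0$ is an equilibrium exactly for $\hat\eta\ge\hat\eta_2$; (ii) the branch $F^*_u$ reaches $0$ at $\hat\eta=\hat\eta_2$; and (iii) both branches leave the admissible set $[0,1]$ for $\hat\eta<\hat\eta_2$, so that the two collide and disappear together at $(F,\hat\eta)=(F^*_l,\hat\eta_2)$.

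For (i) I would evaluate the right-hand side of~\eqref{eq:approxDF} at $F=0$: it equals $\max\{0,\min\{1,\tfrac{1}{2}-\hat\eta+J(0)\}\}$, and since $J(0)=\hat\sigma a_1/(a_2+1)$ this is zero precisely when $\tfrac{1}{2}-\hat\eta+\hat\sigma a_1/(a_2+1)\le 0$, i.e.\ exactly when $\hat\eta\ge\hat\eta_2$ for $\hat\eta_2$ as defined in the text. Thus $F^*_l=0$ is a genuine equilibrium for $\hat\eta\ge\hat\eta_2$ and no longer solves~\eqref{eq:approxDF} once $\hat\eta<\hat\eta_2$ (the $\max$ then opens up and the right-hand side at $F=0$ is strictly positive). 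For (ii) I would substitute $F=0$ into the quadratic~\eqref{eq:Fquad} (equivalently into~\eqref{eq:J(F)equalseta}): the constant term $\sigma a_1+(\tfrac{1}{2}-\hat\eta)(a_2+1)$ must vanish, which is exactly $\hat\eta=\hat\eta_2$; this recovers the identity $F^*_u(\hat\eta_2)=0$ already recorded in the text just before Theorem~\ref{thm:phase lineEQ}, and confirms that $F^*_u$, as given by~\eqref{eq:Fl}, is the branch crossing $0$ there.

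The main step (iii) then uses the monotonicity of $F^*_u$ established earlier in the section, $\frac{dF^*_u}{d\hat\eta}>0$ on $(\hat\eta_1,\hat\eta_3)$. Combined with $F^*_u(\hat\eta_2)=0$ this forces $F^*_u(\hat\eta)>0$ for $\hat\eta\in(\hat\eta_2,\hat\eta_3)$ and $F^*_u(\hat\eta)<0$ for $\hat\eta<\hat\eta_2$, so $F^*_u$ fails to be an admissible equilibrium (it leaves $[0,1]$) once $\hat\eta<\hat\eta_2$. Putting (i)–(iii) together: for $\hat\eta$ just above $\hat\eta_2$ the points $F^*_l=0$ and $F^*_u>0$ are two distinct equilibria in $[0,1]$ (the bistable regime of Theorem~\ref{thm:phase lineEQ}(iii)); at $\hat\eta=\hat\eta_2$ they coincide at $F=0$; and for $\hat\eta$ just below $\hat\eta_2$ neither exists (regime (ii), only $F^*_h$ survives). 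This is precisely "two steady states collide and annihilate," located at $(F,\hat\eta)=(F^*_l,\hat\eta_2)$, which is the claim; the argument parallels the proof of Corollary~\ref{cor:saddlenode} at $(F^*_h,\hat\eta_3)$.

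The subtlety — and the reason I would phrase the proof through the paper's generalized definition rather than the classical one — is that this bifurcation is non-smooth: the equilibrium $F^*_l=0$ exists only because of the clipping $\max\{0,\cdot\}$, so near $F=0^{+}$ (with $\hat\eta$ slightly above $\hat\eta_2$) the right-hand side of~\eqref{eq:approxDF} is $-F$ and the one-sided linearization at $F^*_l$ is $-1$, never $0$; there is no eigenvalue crossing to detect. The two things one must still check carefully are that the inner $\min\{1,\cdot\}$ is inactive in a neighborhood of $F=0$ at $\hat\eta=\hat\eta_2$ (i.e.\ $\tfrac{1}{2}-\hat\eta_2+J(0)<1$, which holds for the parameter values of Table~\ref{tab:ndparamtableM}), so that $F^*_u$ is genuinely the quadratic root~\eqref{eq:Fl} in that neighborhood, and that the appearance/annihilation statements are read as one-sided limits $\hat\eta\to\hat\eta_2^{\pm}$. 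With those checks in place the branch-tracking above is routine.
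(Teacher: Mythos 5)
Your proposal is correct and follows essentially the same route as the paper: track the branches $F^*_u$ and $F^*_l$, note they coincide at $F=0$ when $\hat\eta=\hat\eta_2$, and use the monotonicity of $F^*_u$ in $\hat\eta$ together with $J(0)+\tfrac{1}{2}-\hat\eta>0$ for $\hat\eta<\hat\eta_2$ to conclude both disappear below the threshold, which is the paper's collide-and-annihilate argument. Your added remarks on the non-smooth (clipped) nature of the $F=0$ equilibrium and the inactivity of the $\min\{1,\cdot\}$ branch are sensible refinements but do not change the substance of the argument.
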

The bifurcations discussed in \ref{cor:saddlenode} and \ref{cor:saddlenodezero} are illustrated intuitively in Figures~\ref{fig:phase linebifurc} and by noting the intersections of $J(F)$ with $F+\hat\eta-1\/2$ in Figure~\ref{fig:phase lineinfo}. The detailed proofs are given in the \nameref{Appendix}.

\begin{remark}\label{rem:fulltoappphase line}
In the non-approximated model~\eqref{eq:phase lineDiffbeforeapprox} the equilibrium  $F^*_1=1$ does not exist. The equilibrium $F^*_h$ approaches one, but does not equal and there is no transition from $F^*_1$ to $F^*_h$ seen in Figure~\ref{fig:phase linebifurc}. Furthermore, the equilibrium $F^*_l$ will be small, but non zero. These differences are explained by the linear approximation made at the tails of the logistic curve being set to one or zero accordingly as shown in Figure~\ref{fig:pwiseapp}. Furthermore, we conjecture that the flows near corresponding equilibrium are topologically equivalent between the two models~\eqref{eq:phase lineDiffbeforeapprox} and~\eqref{eq:approxDF} and that the analysis presented for~\eqref{eq:phase lineDiffbeforeapprox} holds for~\eqref{eq:approxDF}. 
\end{remark}
Theorem~\ref{thm:phase lineEQ} gives an understanding of the possible regime outcomes of the single lake dynamics based on the parameter $\hat\eta$, which describes the difference in both baseline and external social norm costs between the two strategies. The results of Theorem~\ref{thm:phase lineEQ} are summarized graphically in Figure~\ref{fig:phase lineinfo}.

\begin{figure}
\centering
\begin{subfigure}[b]{0.45\textwidth}
     \centering
    \includegraphics[width=\textwidth]{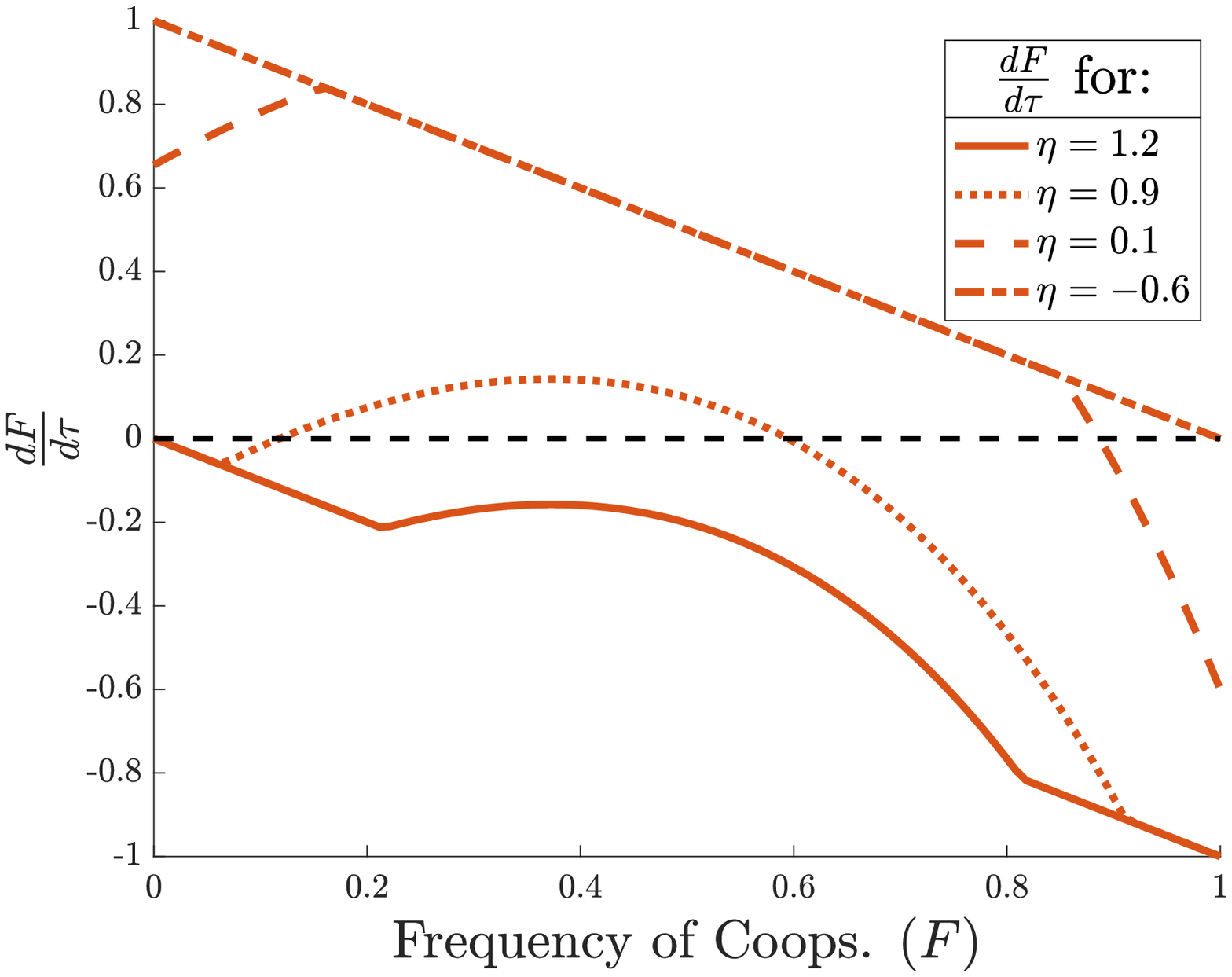}
    \caption{}
    \label{fig:phase line}
\end{subfigure}
\hfill
\begin{subfigure}[b]{0.45\textwidth}
      \centering
    \includegraphics[width=\textwidth]{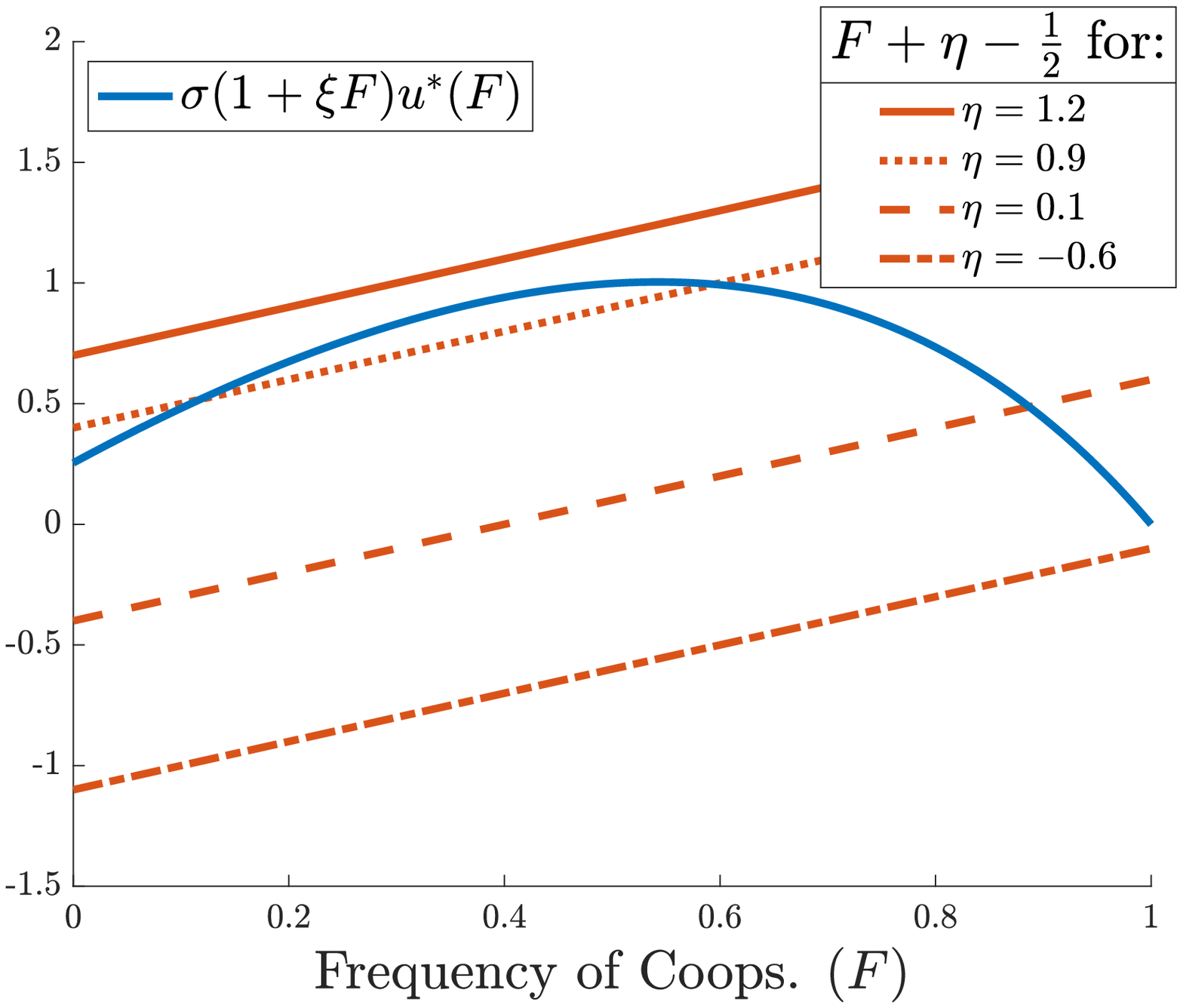}
    \caption{}
    \label{fig:intersectionphase line}
\end{subfigure}
\caption{(a) Phase line of equation~\eqref{eq:approxDF} for the four cases given in Theorem~\ref{thm:phase lineEQ}. (b)  Curve $J(F)$ and line $F+\hat{\eta}-1/2$ for four values of $\hat\eta$ corresponding to the cases in Theorem~\ref{thm:phase lineEQ}. The points of intersection give the equilibria of equation~\eqref{eq:approxDF} }
\label{fig:phase lineinfo}
\end{figure}

\subsection{Dynamics of the iron explicit model}\label{sec:phase plane}

Here we consider different parameter values where the nutrient of focus is iron instead of phosphorus. These new parameter values allow us to look at the dynamics of the coupled CB and socio-economic model in the phase plane. 
The parameter values chosen for Section~\ref{sec:phase line} (given in Tables~\ref{tab:ecoparamtable} and~\ref{tab:socecparamtable}) represent a typical system in which phosphorus pollution occurs. However, we now consider the situation where iron is the focal nutrient. When considering iron instead of phosphorus we must alter certain assumptions and parameters in our model. First, the values of $Q_M$ and $Q_m$ are decreased by nearly an order of magnitude as implied by the extended Redfield ratio~\cite{Cunningham2017,North2007}. Similarly, the uptake rate ($\rho_m$) is smaller~\cite{Downs2008,Larson2015,Cunningham2017}. However the half saturation constant may not need to decrease, meaning the phytoplankton are inefficient at `finding' iron at low concentrations. As before, we nondimensionalize the system, but where $P$ now represents the iron concentration and the values of only the following parameters are changed: $Q_M=.4e^{-4}$,$Q_m=.4e^{-5}$, $\rho_M=1e^{-3}$, $\xi=5$, $p_\defe=100$,$p_\coop=50$. All other parameter values remain as in Tables~\ref{tab:ecoparamtable} and~\ref{tab:socecparamtable} but are interpreted for iron.

\subsubsection{Nondimensionalization of the single lake model}
We now continue with the non-dimensionalization of system~\eqref{eq:nondimsinglelake} by the making the substitutions, $\tau=rt$, $u=kz_mB$, $v=\frac{Q}{Q_M}$, $w=\frac{P}{p_\coop}$, yielding:
\begin{equation}\label{eq:nondim2}
    \begin{dcases}\left.
       \begin{aligned}
       \frac{du}{d\tau}&=u(1-\frac{Q_m}{Q_M}\frac{1}{v})h(au)-\frac{(\nu_r+\frac{D}{z_e})}{r} u,
       \\
 \frac{dv}{d\tau}&=\frac{\rho_m}{rQ_M}\frac{Q_M-Q_Mv}{Q_M-Q_m}\frac{w}{M/p_\coop+w}-(v-\frac{Q_m}{Q_M})h(au),
\\
p_\coop\frac{dw}{d\tau}&=\frac{D}{rz_e}(p_\coop F+p_\defe(1-F)-p_\coop w)-\frac{\rho_M}{rkz_e}u\frac{Q_M-Q_M v}{Q_M-Q_m}\frac{w}{M/p_\coop+w},
\\
\frac{dF}{d\tau}&=\frac{s}{r}\left(\dfrac{1}{1+e^{\beta(c_C-c_\mathcal{D}-\alpha(1+\xi F)\psi au +\hat\delta)}}-F\right).
       \end{aligned} \right. 
    \end{dcases}
\end{equation}
Upon substitution of the dimensionless parameters that given in Table~\ref{tab:ndparamtablephase plane} we arrive at:
\begin{equation}\label{eq:nondimPF}
    \begin{dcases}\left.
       \begin{aligned}
       \frac{du}{d\tau}&=u(1-\frac{1}{\gamma}\frac{1}{v})\hat{h}(u)-(\epsilon\beta_1+\beta_2) u,
       \\
 \frac{dv}{d\tau}&=\omega(1-v)\frac{w}{\mu+w}-(v-\frac{1}{\gamma})\hat{h}(u),
\\
\frac{dw}{d\tau}&=\epsilon\left(\beta_1( F+ \kappa(1-F)-w)-\lambda u(1-v)\frac{w}{\mu+w}\right),
\\
\frac{dF}{d\tau}&=\epsilon\left(\dfrac{1}{1+e^{\hat{\eta}-\sigma(1+\xi F)u)}}-F\right),
       \end{aligned} \right. 
    \end{dcases}
\end{equation}
where $\hat{h}(u)$ is the nondimensional light dependent growth originating from~\eqref{eq:h(B)} given by~\eqref{eq:h(u)nondim} and approximated by~\eqref{eq:hhatapp}.

\begin{table}[h!]
\caption{Dimensionless parameters for the iron system~\eqref{eq:nondimPF}}
\centering
{
\begin{tabular}{|l |c| c |} 
\hline
Parameter & Definition & Value
\\
\hline
$\beta_1$&$\dfrac{D}{s z_m}$ & 0.2857 
\\
$\beta_2$&$\nu_r/r$& 0.35 
\\
 $\omega$&$\dfrac{\rho_m}{r(Q_M-Q_m)}$& 2.7778 
 \\
 $\gamma$& $\frac{Q_M}{Q_m}$ & 10
\\
 $\kappa$& $\frac{p_\mathcal{D}}{p_\coop}$ & 2
 \\
 $\mu$ & $\frac{M}{p_\coop}$ & 0.03
\\
$\lambda$&$\dfrac{Q_M}{Q_M-Q_m}\dfrac{\rho_m }{p_\coop    s k z_m}$&0.7937
\\
$k_1$&$z_m K_{b g}$& 2.1
\\
$I$&$I_{in}/H$ & 2.5
\\
$\eta$&$\beta(c_\mathcal{C}-c_\mathcal{D}+\hat\delta)$& - 
\\
$\sigma$&$\alpha\beta \psi / kz_m$& 2.1429
\\
$\epsilon$&$s/r$& $<$0.001
\\\hline
\end{tabular}}
\label{tab:ndparamtablephase plane}
\end{table}

\subsubsection{Application of the quasi steady state approximation}
We now apply the QSSA to~\eqref{eq:nondimPF}. As in Section~\ref{sec:phase line} we introduce the new timescale $\tilde\tau=\epsilon\tau$. $\tilde \tau$ now represents the slow timescale in which the socio-economic and iron dynamics mainly occur, whereas the CB growth dynamics occur on the fast time scale $\tau$. Lastly, we apply the QSSA and let $\epsilon\to 0$ arriving at the system:

  \begin{subequations}\label{eq:phase planefull}
        \begin{empheq}[left=\empheqlbrace]{align}
           0&=u(1-\frac{1}{\gamma}\frac{1}{v})\hat{h}(u)-\beta_2 u\label{eq:pplaneu},
       \\
 0&=\omega(1-v)\frac{w}{\mu+w}-(v-\frac{1}{\gamma})\hat{h}(u)\label{eq:pplanev},
\\
\frac{dw}{d\tilde\tau}&=\beta_1( F+\kappa(1-F)-w)-\lambda u(1-v)\frac{w}{\mu+w}=g(F,w)\label{eq:pplaneg},
\\
\frac{dF}{d\tau}&=\dfrac{1}{1+e^{\eta-\sigma(1+\xi F)u)}}-F=f(F,w)\label{eq:pplanef},
  \end{empheq}
  \end{subequations}
reducing our problem to a differential- algebraic system.  Denote $u^*(w)$ and $v^*(w)$ as a solution to the algebraic system defined by~\eqref{eq:pplaneu} and~\eqref{eq:pplanev}. The following theorem gives a condition to guarantee existence of uniqueness of a solution to the algebraic system.

\begin{theorem} There exists a unique positive solution, $(\hat u(w),\hat v(w))$, to the algebraic system defined by~\eqref{eq:pplaneu} and~\eqref{eq:pplanev} if $\left(1-\dfrac{1}{\gamma \hat{v}(w)}\right)\hat h(0)-\beta_2>0$.  The trivial solution $(0,\bar v(w))$ always exists.
 \end{theorem}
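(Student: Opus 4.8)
The plan is to follow the template of the proof of Theorem~\ref{thm:exunphase line}, and in fact the argument is a little cleaner here because the nonlinearity in $\hat h$ cancels after a single substitution. First I would record the trivial solution: putting $u=0$ makes~\eqref{eq:pplaneu} hold identically, and~\eqref{eq:pplanev} then becomes the \emph{linear} equation $\omega(1-v)\tfrac{w}{\mu+w}=(v-\tfrac1\gamma)\hat h(0)$ in $v$, whose unique root $\bar v(w)$ automatically lies in $(\tfrac1\gamma,1)$; this is the always‑present solution $(0,\bar v(w))$, settling the second assertion.

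For a positive solution I would divide~\eqref{eq:pplaneu} by $u>0$, which forces $\bigl(1-\tfrac{1}{\gamma v}\bigr)\hat h(u)=\beta_2$ and hence $v>\tfrac1\gamma$. Using that $\hat h$ is continuous and strictly decreasing with $\hat h(u)\to 0$ as $u\to\infty$ (by construction, as recalled in the proof of Theorem~\ref{thm:exunphase line}), so that it is invertible onto $(0,\hat h(0)]$, I can solve this relation for $u$ as a function of $v$,
\[
  u=G(v):=\hat h^{-1}\!\left(\frac{\gamma\beta_2 v}{\gamma v-1}\right),
\]
which is well defined exactly when $\tfrac{\gamma\beta_2 v}{\gamma v-1}\le\hat h(0)$, i.e.\ for $v\ge v_{\min}:=\dfrac{\hat h(0)}{\gamma(\hat h(0)-\beta_2)}$ (note the hypothesis forces $\hat h(0)>\beta_2$, whence $v_{\min}>\tfrac1\gamma$); since $\tfrac{\gamma\beta_2 v}{\gamma v-1}$ is decreasing in $v$ and $\hat h^{-1}$ is decreasing, $G$ is strictly increasing on $[v_{\min},\infty)$ with $G(v_{\min})=0$.

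The crux is to substitute $u=G(v)$ into~\eqref{eq:pplanev}. Because $\hat h(G(v))=\dfrac{\gamma\beta_2 v}{\gamma v-1}$ by construction, the term $\bigl(v-\tfrac1\gamma\bigr)\hat h(G(v))=\tfrac{\gamma v-1}{\gamma}\cdot\tfrac{\gamma\beta_2 v}{\gamma v-1}=\beta_2 v$ collapses completely, and~\eqref{eq:pplanev} reduces to the linear equation $\omega(1-v)\tfrac{w}{\mu+w}=\beta_2 v$, whose single root is the explicit quantity $\hat v(w)=\dfrac{\omega w}{\omega w+\beta_2(\mu+w)}\in(0,1)$. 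Setting $\hat u(w):=G(\hat v(w))$ gives the only candidate positive solution, and conversely any positive solution of~\eqref{eq:pplaneu}--\eqref{eq:pplanev} must arise this way, so uniqueness is automatic. Since $\hat h^{-1}$ is decreasing with $\hat h^{-1}(\hat h(0))=0$, the candidate is genuinely positive precisely when $\hat v(w)>v_{\min}$, and a one‑line rearrangement ($(\gamma\hat v-1)\hat h(0)>\beta_2\gamma\hat v$, then divide by $\gamma\hat v>0$) shows this is equivalent to $\bigl(1-\tfrac{1}{\gamma\hat v(w)}\bigr)\hat h(0)-\beta_2>0$, which is exactly the stated hypothesis; it also forces $\hat v(w)>v_{\min}>\tfrac1\gamma$, so the solution lies in the admissible window $v\in(\tfrac1\gamma,1)$. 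As in the remark after Theorem~\ref{thm:exunphase line}, replacing $\hat h$ by $\hat h_{app}$ leaves the argument intact and produces a closed form for $\hat u(w)$.

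I expect the only real care to be needed in this last step: verifying that the positivity requirement $\hat v(w)>v_{\min}$ is \emph{literally} the displayed inequality (including the degenerate case $\hat h(0)\le\beta_2$, in which no positive solution exists and the hypothesis also fails), and confirming that whenever the hypothesis is violated only the trivial branch $(0,\bar v(w))$ remains. Everything else — invertibility and monotonicity of $\hat h$, monotonicity of $G$, and the cancellation that linearizes~\eqref{eq:pplanev} — is routine once the substitution $u=G(v)$ is in place.
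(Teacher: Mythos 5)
Your proposal is correct and is essentially the paper's argument: the paper eliminates the nonlinearity by adding \eqref{eq:pplaneu} multiplied by $v/u$ to \eqref{eq:pplanev}, which yields exactly your linear relation $\omega(1-v)\tfrac{w}{\mu+w}=\beta_2 v$ and hence the same explicit $\hat v(w)$, after which $u$ is pinned down by $(1-\tfrac{1}{\gamma\hat v(w)})\hat h(u)=\beta_2$ using monotonicity of $\hat h$ and the stated condition (the paper invokes the intermediate value theorem where you invoke $\hat h^{-1}$, an immaterial difference). Your rearrangement of the positivity requirement into the hypothesis and your treatment of the trivial branch match the paper's content as well.
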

 \begin{proof}
Observe that $(0,\bar v(w))$ where \begin{equation}\label{eq:vbar}
     \bar v(w)= \dfrac{\omega\dfrac{w}{\mu+w}+\frac{1}{\gamma}\hat{h}(0)}{\omega\dfrac{w}{\mu+w}+\hat{h}(0)},
 \end{equation} solves~\eqref{eq:pplaneu} and~\eqref{eq:pplanev}. Now, we compute the positive solution by first adding equation~\eqref{eq:pplaneu} multiplied by $v/u$ to equation~\eqref{eq:pplanev} arriving at
 \begin{equation}
    0= \omega(1-v)\frac{w}{\mu+w}-\beta_2v.
 \end{equation}
Thus,
\begin{equation}\label{eq:vstar}
  \hat{v}(w)=\dfrac{\omega\dfrac{w}{\mu+w}}{\omega\dfrac{w}{\mu+w}+\beta_2}=\dfrac{\omega w}{\omega w+\beta_2(\mu+w)}.  
\end{equation}
 Now we see that $\hat v(w)$ is only dependent on parameters as $w$ is treated as a parameter in the algebraic system.  Thus,~\eqref{eq:pplaneu} can be reduced to a problem with a single unknown: 
\begin{equation}\label{eq:solve for u eq}
    0=(1-\frac{1}{\gamma \hat{v}(w)})\hat h(u)-\beta_2.
\end{equation}

 Recall, by the construction of $h(B)$ in~\eqref{eq:h(B)} it and its nondimensional analog, $\hat{h}(u)=\frac{1}{u+k_1}\text{log}\left(\frac{1+I}{1+I\textup{exp}(-u-k_1)}\right)$, are monotone decreasing functions and $\lim_{u\to \infty}\hat{h}(u)=0.$ Thus, if
 \begin{equation}\label{eq:2dthmrestriction}
     (1-\frac{1}{\gamma \hat{v}(w)})\frac{1}{k}\text{log}(\frac{1+I}{1+Ie^{-k}})-\beta_2>0,
 \end{equation} then a unique positive solution exists via the intermediate value theorem.
  \end{proof}

Note that the condition in the theorem is also satisfied for values of $w$, such that $w>w_c$, where $w_c$ is the critical point such that $\bar v^(w_c)=\hat{v}(w_c)$. $w_c$ can be written explicitly as \begin{equation}
    w_c=-\dfrac{\frac{1}{\gamma}\hat h(0)\mu\beta_2}{\omega\beta_2+\frac{1}{\gamma}\hat h(0)\omega+\frac{1}{\gamma}\beta_2\hat h(0)-\omega\hat h(0)}.
\end{equation} The equation $\bar v(w)=\hat{v}(w)$ is reduced to a linear equation in $w$ thus verifying $w_c$ is unique. 
  For $w<w_c$ the only solution to the system~\eqref{eq:pplaneu} and~\eqref{eq:pplanev} is given by  $(0,\bar v(w))$.  For $w>w_c$ the positive solution $(\hat{u}(w),\hat{v}(w))$ also exists and $(0,\bar v(w_c)=(\hat{u}(w_c),\hat{v}(w_c))$. It can be shown that the trivial equilibrium of the fast subsystem is unstable if $w>w_c$ thus, we take the solutions to system~\eqref{eq:pplaneu} and~\eqref{eq:pplanev} as
\begin{equation}\label{eq:ustarthm}
    u^*(w)=\text{max}\{0,\hat u(w)\}\textup{  \qquad and  \qquad }v^*(w)=\textup{max}\{\bar{v}(w),\hat v(w)\},
\end{equation}
where $\hat{u}(w)$ is the solution to~\eqref{eq:solve for u eq}.

 \subsubsection{Phase Plane analysis of the simplified single lake iron model}\label{sec:singlelakephase planeanalysis}
 We now proceed with studying~\eqref{eq:phase planefull} in the phase plane. Changing the value for $\eta$ in system~\eqref{eq:phase planefull} will result in various phase portraits that are topologically different. One such instance shows a bistable scenario that is lost through either one of two saddle-node bifurcations.
 
 In the following we assume that $u^*(w),v^*(w)$ are defined as in~\eqref{eq:ustarthm}. The bifurcation plot, with respect to $\eta$ of system~\eqref{eq:phase planefull} is shown in Figure~\ref{fig:pplanebifurc}. Recall that the parameter  $\eta=c_\coop-c_\defe+\hat\delta$ partly describes the differences in base costs and network social norm costs faced by the cooperator and defector, respectively. 
 \begin{figure}
    \centering
    \includegraphics[width=0.3\paperwidth]{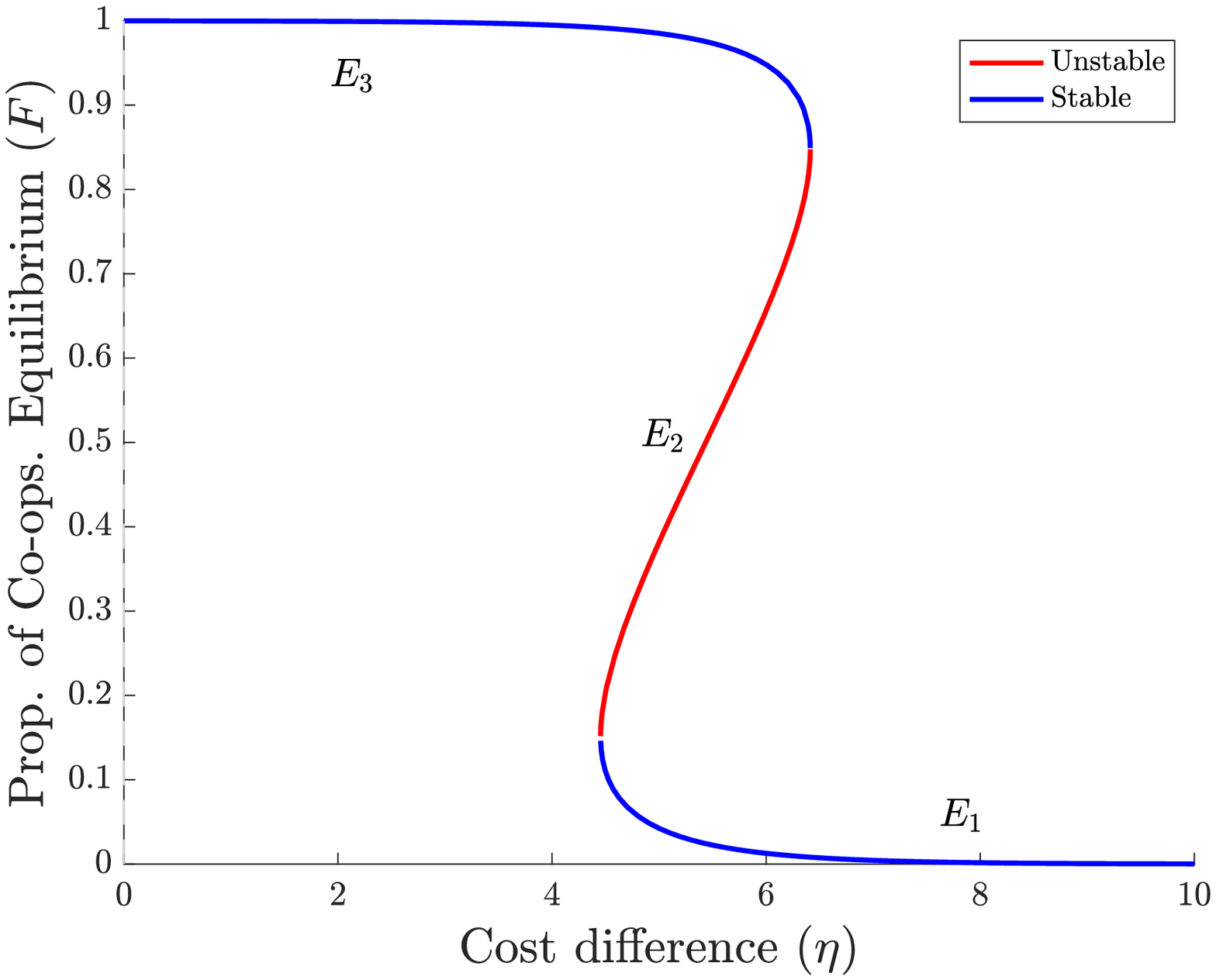}
    \includegraphics[width=0.3\paperwidth]{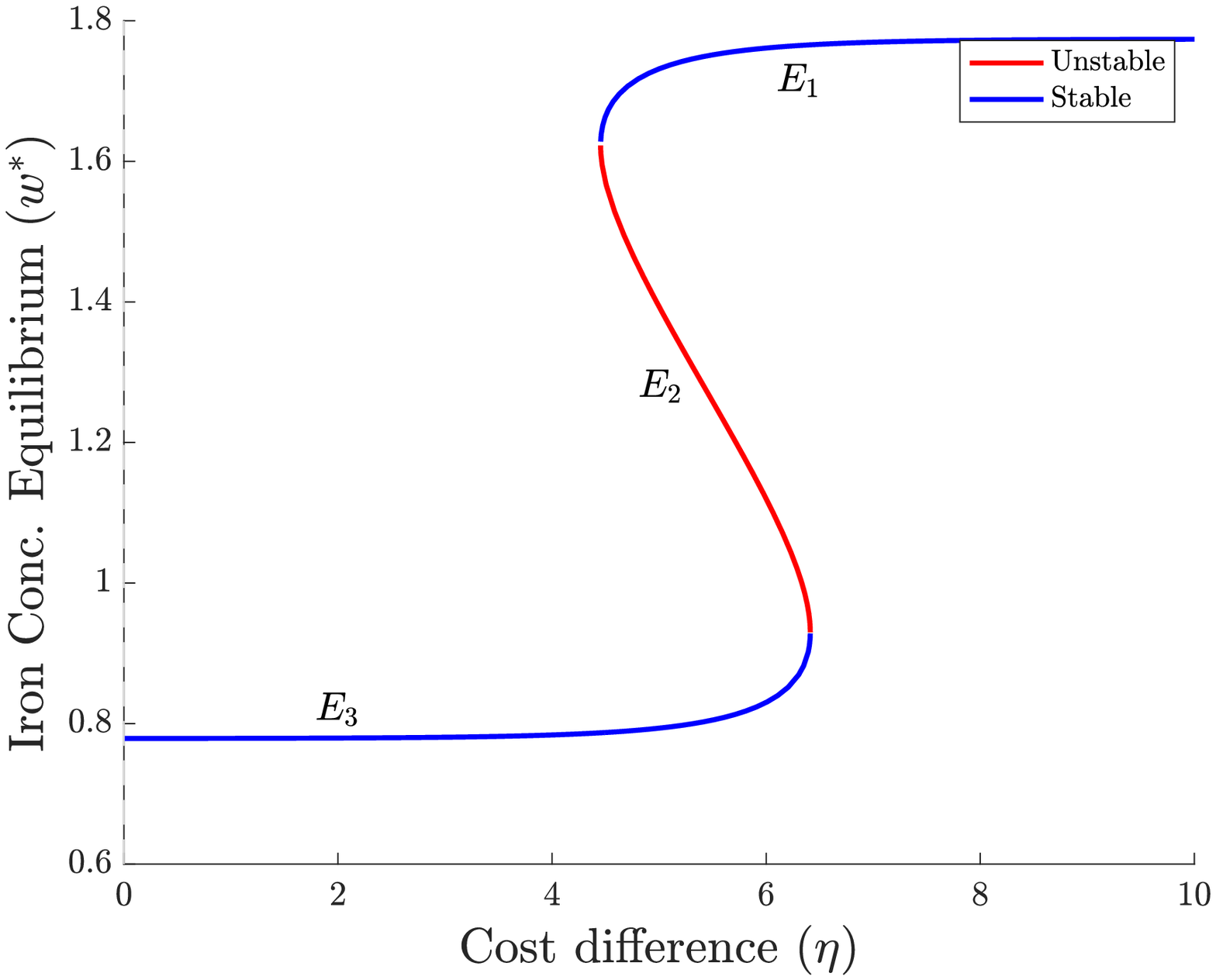}
    \caption{Bifurcation diagrams for system~\eqref{eq:pplaneg} and~\eqref{eq:pplanef}. Left: equilibrium values for the proportions of cooperators ($F$). Right: Equilibrium values for the concentration of iron.}
    \label{fig:pplanebifurc}
\end{figure}

 The first case we explore is for $\eta$ values that are relatively `large'. 
 \begin{figure}
     \centering
     \includegraphics[width=0.5\paperwidth]{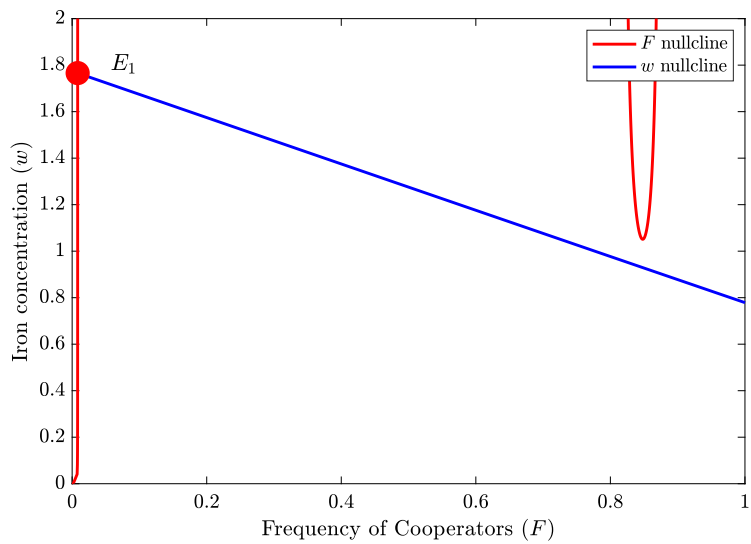}
     \caption{Phase plane for `large' values of $\eta$. $E_1$ is the only equilibrium and attracts all solutions.}
     \label{fig:casee1}
 \end{figure}
 Figure~\ref{fig:casee1} shows the phase portrait for values of $\eta$ that represent the base costs of cooperating to be relatively much higher than that of the defector. In this case the equilibrium $E_1$, defined by a low level of cooperation and a relatively high concentration of iron, is numerically globally attracting and the only equilibrium to exist. We show via graphical arguments the local stability below:

 It is enough to observe the sign changes of $f(F,w)$ and $g(F,w)$ as we cross the nullclines near the equilibrium point $E_1$  given in the phase plane (see Figure~\ref{fig:casee1}). 
 The Jacobian of system (~\eqref{eq:pplaneg} and~\eqref{eq:pplanef}) has the following form: \begin{equation}
 A|_{E_1}=  \begin{pmatrix}
f_F & f_w \\
g_F & g_w 
\end{pmatrix}_{E_1}=
 \begin{pmatrix}
 - & + \\
 - & - \end{pmatrix}.
\end{equation} 
By the signs given in $A|_{E_1}$ we conclude that,  $Tr(A|_{E_1})<0$ and $det(A_{E_1})>0$. Thus, the matrix $A|_{E_1}$ has eigenvalues with negative real parts and we conclude that equilibrium $E_1$ is locally stable.


 Now we look at the case where $\eta$ is relatively `small'. Here the values of $\eta$ represent the base costs of cooperating to be relatively close to that of the defectors. 
 \begin{figure}
     \centering
     \includegraphics[width=0.5\paperwidth]{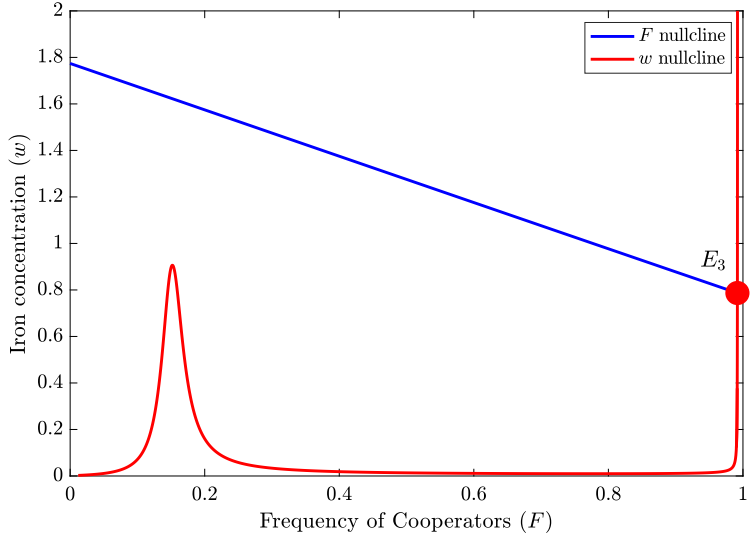}
     \caption{
     Phase plane for `small' values of $\eta$. $E_3$ is the only equilibrium and attracts all solutions.
     }
     \label{fig:casee3}
 \end{figure}
 Figure~\ref{fig:casee3} shows the phase portrait for this scenario. Here, the equilibrium $E_3$, defined by a high level of cooperation and a relatively low concentration of iron, is numerically globally attracting and the only equilibrium to exist. This means the phase portrait in Figure~\ref{fig:casee3} represents an environmentally favourable outcome. We show the local stability below.

Near the equilibrium point, $E_3$ in Figure~\ref{fig:casee3} , we observe the sign changes of $f(F,w)$ and $g(w,F)$ as the nullclines are crossed. The Jacobian of system~\eqref{eq:pplaneg} and~\eqref{eq:pplanef} has the following form: \begin{equation}
 A|_{E_3}=  \begin{pmatrix}
f_F & f_w \\
g_F & g_w 
\end{pmatrix}_{E_3}=
 \begin{pmatrix}
 - & + \\
 - & - \end{pmatrix}.
\end{equation} 
We conclude that $Tr(A|_{E_3})<0$ and $det(A_{E_3})>0$. Thus, the matrix $A|_{E_3}$ has eigenvalues with negative real parts and we conclude that equilibrium $E_3$ is locally stable.

 Now we show the dynamics for intermediate values of $\eta$ such that a bistable scenario occurs. These values of $\eta$ represent an intermediate region where the cost of cooperating and defecting are close to being balanced. 
 \begin{figure}
     \centering
     \includegraphics[width=0.5\paperwidth]{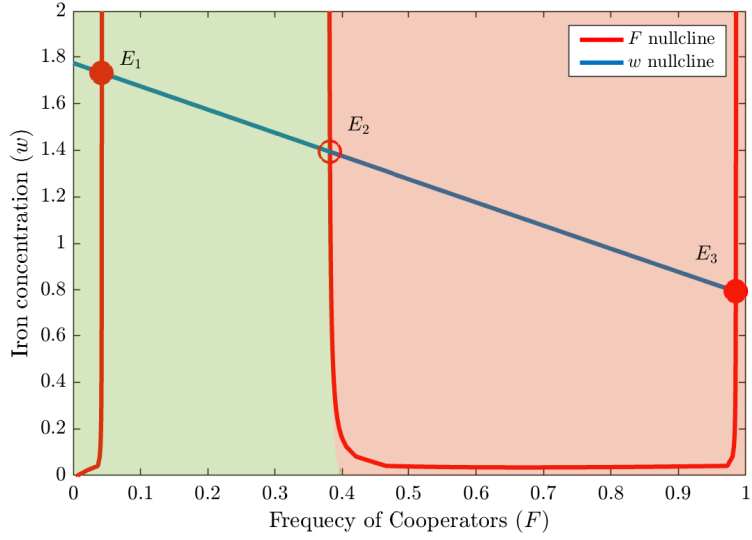}
     \caption{Phase plane for `intermediate' values of $\eta$. In this case $E_1$ and $E_3$ have similar topology as in Figures~\ref{fig:casee1} and~\ref{fig:casee3} and thus have the same stability. The shaded regions represent the attraction basin of the respective equilibrium. $E_2$ is unstable. 
     }
     \label{fig:casee1e2e3}
 \end{figure}
Figure~\ref{fig:casee1e2e3} shows the phase portrait of the bistable scenario, where the topology near equilibrium $E_1$ and $E_3$ are qualitatively consistent to that shown in Figures \ref{fig:casee1} and \ref{fig:casee3}, respectively. That is, $E_1$ and $E_3$ are both locally stable for our chosen parameter region. A new equilibrium, $E_2$, appears as seen in Figure~\ref{fig:casee1e2e3} which we now show is unstable. First by graphical methods, the Jacobian evaluated near $E_2$ is, 
 \begin{equation}
 A|_{E_2}=  \begin{pmatrix}
f_F & f_w \\
g_F & g_w 
\end{pmatrix}_{E_2}=
 \begin{pmatrix}
 + & + \\
 - & - \end{pmatrix}.
\end{equation} 
From this alone, we can not make any conclusions.  Graphically, the gradients evaluated on the nullclines  are such that $\frac{dw}{dF}|_{g=0}<0$, $\frac{dw}{dF}|_{f=0}<0$, and  $\frac{dw}{dF}|_{g=0}>\frac{dw}{dF}|_{f=0}$, near $E_2$. Furthermore,  \begin{equation}\left.
    \dfrac{dw}{dF}\right|_{g=0}=-\dfrac{g_F}{g_w}>  \left.\dfrac{dw}{dF}\right|_{f=0}=-\dfrac{f_F}{f_w},
\end{equation}
\begin{align*}
    \implies& -\dfrac{g_F}{g_w}>-\dfrac{f_F}{f_w},
    \\
     \implies& \dfrac{g_F}{g_w}<\dfrac{f_F}{f_w},
     \\
     \implies& f_w\dfrac{g_F}{g_w}<f_F,
     \\
     \implies& f_w g_F>f_F g_w,
     \\
     \implies& 0>f_F g_w-f_w g_F,
     \\
     \implies& det(A)<0.
\end{align*}
Thus, $E_2$ is an unstable saddle. Note that the above comes from  taking the derivative of the level set $f(F,w)=0$ and graphical observation. \begin{remark}
In Figures~\ref{fig:casee1} and~\ref{fig:casee1e2e3} we note that there is a region of $F$ values where the $F$ nullcline does not exist. This atypical phenomenon is explained by the dependency of the nullcline on the solution to the algebraic system~\eqref{eq:pplaneu} and~\eqref{eq:pplanev}, $u^*(w)$. That is, the $F$ nullcline is given by the solutions to \begin{align}
     0&= \dfrac{1}{1+e^{\eta-\sigma(1+\xi F)u^*(w))}}-F,
   \\
   \iff u^*(w)&=\frac{\textup{log}(\frac{1}{F}-1)-\eta}{-\sigma(1+\xi F)}.\label{eq:ustarnullcline}
\end{align}
  The right hand side of~\eqref{eq:ustarnullcline} has one local max, one local min, and one inflection point in $[0,1]$. Also, note that as $F\to 0 $ the RHS of~\eqref{eq:ustarnullcline} goes to $-\infty$ and as $F\to 1$ the RHS of~\eqref{eq:ustarnullcline} goes to $+\infty$. However, $u^*(w)$ is a saturating function of $w$, and is bounded between zero and some positive constant. The positive bound results from CB self-shading and light limitation. Thus, for certain values of $F$, the right hand side of~\eqref{eq:ustarnullcline} exceeds the upper bound of $u^*(w)$ resulting in no solution to the equation. Furthermore, vertical asymptotes occur as $F$ approaches the regions of nonexistence accordingly. 
\end{remark}

We have shown that the socio-economic and ecological regime is highly connected to costs associated with each strategy. The dynamic properties have been explored in the phase plane which included a high cooperation regime, a low cooperation regime, and a bistable scenario. 

\section{Dynamics of a network system}\label{sec:Networkmodel}
We now return to the network model proposed in~\eqref{eq:CBnetwork} with parameter values given for the phosphorus dynamics in Tables~\ref{tab:ecoparamtable} and~\ref{tab:socecparamtable}.
Here we consider the nondimensional version of~\eqref{eq:CBnetwork} and, upon further simplifying assumptions, reduce the entire network model to a system of two ODEs which are studied in the phase plane. A series of two parameter bifurcation plots are provided to show regime outcomes for various parameter regions associated with the socio-economic dynamics. 

\subsection{Reduction of the network model}
We now make a series of simplifying assumptions to reduce the network model ~\eqref{eq:CBnetwork} to a system of two ODEs. 

We wish to further understand how socio-economic pressures can lead to regime shifts. In Section~\ref{sec:singlelake} we consider these pressures on a local scale, we now consider these pressures when they originate from distant social connections. For this reason, let us consider a network with $N$ lakes. Assume that each lake is modelled with identical parameter values and that if the network is not connected, each lake exhibits the bistable dynamics discussed in the phase line analysis (Section~\ref{sec:phase line}).  Note that the nondimensionalization of the network model~\eqref{eq:CBnetwork} is equivalent to what is shown in~\eqref{eq:nondimbetaeqepsilon} with the explicit subscripts for lake $i$ and the term $\hat\delta$ is to be a function of the weighted average of the frequency of cooperators in the network. 

 
 Assume a given lake is initially either in the high cooperation state ($F^*_h$) or the low cooperation state ($F^*_l$) at equilibrium as discussed in Figures~\ref{fig:phase line} and~\ref{fig:phase linebifurc}. Since the system is bistable, we can assume that our network has $N-k$ lakes in the low cooperation state ($F^*_l$) and $K$ lakes in the high cooperation ($F^*_h$) state.

 By assuming that every lake in the network has the same environmental parameters we can conclude that the dynamics of lakes with the same initial conditions will be identical. We introduce two new state variables $F_h$ and $F_l$. $F_h$ represents the dynamics of frequency of cooperators that start in a high cooperation state. $F_l$ is the dynamics of those that start in a low cooperation state. Indeed a requirement based on these assumptions is that $F_h(0)>F_l(0)$.
 

Then, $\hat\delta(\bar F(t))$ is rewritten as 
 \begin{equation}\label{eq:deltahat}
     \hat\delta(\bar F(t))=\hat\delta(F_l,F_h)=\delta_\coop(1-\bar F)-\delta_\defe \bar F.
 \end{equation} Since all lakes of the same regime are in the same state, then $\bar F=\frac{(k)F_{h}+(N-k)F_{l}}{N}$. Furthermore, assuming that the ecological dynamics of each lake are in steady state and occur on a faster timescale (by the QSSA discussed in Section~\ref{sec:phase lineQSSA}) the entire network is reduced to a two dimensional system of equations: 
 \begin{subequations}\label{eq:2Dnetwork}
       \begin{align}
         \frac{dF_h}{d\tau_1}&=\frac{1}{1+e^{\eta-u^*(F_h)(1+\xi F_h)+\delta(F_l,F_h)}}-F_h=f_1(F_h,F_l),
         \\
       \frac{dF_l}{d\tau_1}&=\frac{1}{1+e^{\eta-u^*(F_l)(1+\xi F_l)+\delta(F_l,F_h)}}-F_l=f_2(F_h,F_l),
       \end{align}
\end{subequations}
where $u^*(F)$ is the unique equilibrium of the ecological system discussed in Theorem~\ref{thm:exunphase line}. The system~\eqref{eq:2Dnetwork} then represents the coupled socio-economic and ecological dynamics of a network of lakes where $F_h$ and $F_l$ represent the social dynamics of lakes that start in a high and low cooperation regime, respectively.

\subsection{Phase plane analysis of the reduced network model}

We now use the phase plane to explore the possible long term dynamics of the network dependent on socio-economic parameters. Three stable steady states can occur that correspond to high, low and mixed levels of cooperation throughout the network. By manipulating parameters a network shift from mixed to either high or low cooperation can occur.  

  Note that under the condition $\hat\delta(F_l,F_h)=0$ the system is decoupled and each lake would exhibit bistable behaviour. Further note that given our assumptions we must limit the phase plane to the region $F_l\leq F_h$. We assume that the system will start the prescribed initial condition of $N-K$ lake in a low cooperation state (corresponding $F_l^*$) and $k$ lakes in the high cooperation state (corresponding to $F_h^*$). This means that the initial condition of our system is $F_h(0)= F_h^*$ and $F_l(0)= F_l^*$.

\begin{figure}
    \centering
    \includegraphics[width=0.5\paperwidth]{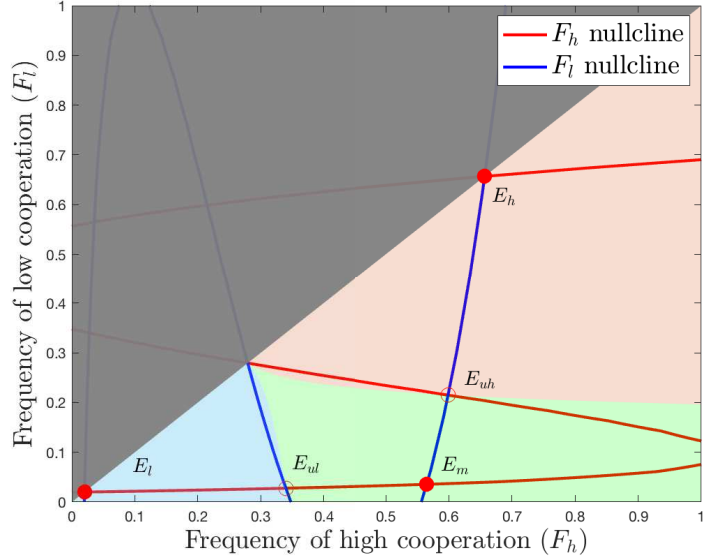}
    \caption{The phase plane of $F_h$ and $F_l$ for similar social norm pressure. The region above the line $F_l=F_h$ is excluded based on the condition $F_l<F_h$. The shaded regions represent the attraction basins of the stable equilibrium, the initial condition is located within the green region. In this case no regime shifts occur based on the prescribed initial condition. }
    \label{fig:ppEm}
\end{figure}

The stable equilibria are $E_l,E_h$, and $E_m$ and the unstable equilibria are $E_{ul}$ and $E_{uh}$ as shown in Figure~\ref{fig:ppEm}. The stability of $E_l,E_h$, and $E_m$ can be easily shown through graphical arguments similar to that in Section~\ref{sec:singlelakephase planeanalysis}. That is, the Jacobian of system~\eqref{eq:2Dnetwork} near $E_l$ has the form 
 \begin{equation}
 A|_{E_l}=  \left.\begin{pmatrix}
 \dfrac{df_1}{dF_h} & \dfrac{df_1}{dF_l} \\
 \dfrac{df_2}{dF_h} & \dfrac{df_2}{dF_l} \end{pmatrix}\right|_{E_l}
= \begin{pmatrix}
 - & + \\
 + & - \end{pmatrix}.
\end{equation} 
Thus, $Tr( A|_{E_l})<0$. Moreover,
\begin{equation}\left.
    \dfrac{dF_l}{dF_h}\right|_{f_1=0}=-\dfrac{df_1}{dF_h}/\dfrac{df_1}{dF_l}>  \left.\dfrac{dF_l}{dF_h}\right|_{f_2=0}=-\dfrac{df_2}{dF_h}/\dfrac{df_2}{dF_l},
\end{equation}
\begin{align*}
    \implies& -\dfrac{df_1}{dF_h}/\dfrac{df_1}{dF_l}>-\dfrac{df_2}{dF_h}/\dfrac{df_2}{dF_l},
    \\
    \implies & -\dfrac{df_1}{dF_h}\dfrac{df_2}{dF_l}<-\dfrac{df_1}{dF_l}\dfrac{df_2}{dF_h},
    \\
    \implies & det(A|_{E_l})>0. 
\end{align*}
Thus, $E_l$ is stable. The stability of $E_m$ and $E_h$ can be verified in an identical fashion as $E_l$. The instability of $E_{uh}$ and $E_{ul}$ is concluded similarly with 

\begin{equation}
 A|_{E_{ul}}= \begin{pmatrix}
 + & + \\
 + & - \end{pmatrix} \implies det( A|_{E_{ul}})<0,
\end{equation} 
and 
\begin{equation}
 A|_{E_{uh}}= \begin{pmatrix}
 - & + \\
 + & + \end{pmatrix} \implies det( A|_{E_{uh}})<0.
\end{equation} 
$E_l$ is representative of a low cooperation regime in the entire network, $E_h$ is representative of a high cooperation regime in the entire network, and $E_m$ is representative of the regime where some lakes cooperate and high levels and some at low levels and are not qualitatively different from the initial condition of the network (i.e. $N-k$ lakes remain in a low cooperation state and $k$ lakes remain in a high cooperation state). We explore the loss of stability or disappearance of $E_m$. That is, when $E_m$ loses stability or vanishes the dynamics must either shift to $E_l$, or $E_h$ and the shift will be decided by the basins of attraction near bifurcation points. From graphical methods, we can show that when $E_m$ exists it is stable. However, $E_m$ can disappear through one of two saddle node bifurcations: (i) The equilibrium $E_{uh}$ collides with $E_m$ and
   (ii) The equilibrium $E_{ul}$ collides with $E_m$.
\subsubsection{Case (i): $E_{uh}$ collides with $E_m$. }\label{sec:casei}
In this case, we see that the lower two branches of the $F_l$ nullcline would be to the left of the rightmost branch of the $F_h$ nullcline as depicted in Figure~\ref{fig:ppEh}. The separatrix is approximately the middle branch of the $F_h$ nullcline and runs vertically through $E_{ul}$. With initial conditions such that $k$ lakes are in a high cooperation state and $N-k$ lakes in a low cooperation state the network dynamics will eventually tend to $E_h$ based on the attraction basins. The phase plane of this situation is shown in Figure~\ref{fig:ppEh}.
\begin{figure}
    \centering
    \includegraphics[width=0.5\paperwidth]{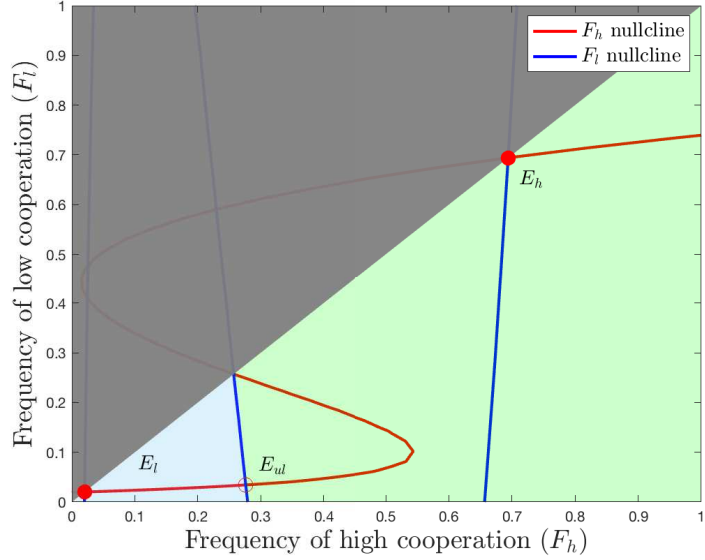}
    \caption{Phase plane after $E_m$ collides with $E_{uh}$. This phase plane corresponds with higher pressure to cooperate. The region above the line $F_l=F_h$ is excluded based on the condition $F_l<F_h$. The shaded regions represent the attraction basins of the stable equilibrium, the initial condition is located within the green region. In this case a regime shift to high cooperation occurs based on the prescribed initial condition.  }
    \label{fig:ppEh}
\end{figure}
\subsubsection{Case (ii): $E_{ul}$ collides with $E_m$.}\label{sec:caseii}

Here the two rightmost branches of the $F_l$ nullcline will be above the lowest branch of the $F_h$ nullcline. The separatrix is now the (roughly) middle branch of the $F_h$ nullcline and runs horizontally through $E_{uh}$. With initial conditions such that $k$ lakes are in a high cooperation state and $N-k$ lakes in a low cooperation state the network dynamics will eventually tend to $E_l$ based on the attraction basins. The phase plane of this situation is shown in Figure~\ref{fig:ppEl}

\begin{figure}
    \centering
    \includegraphics[width=0.5\paperwidth]{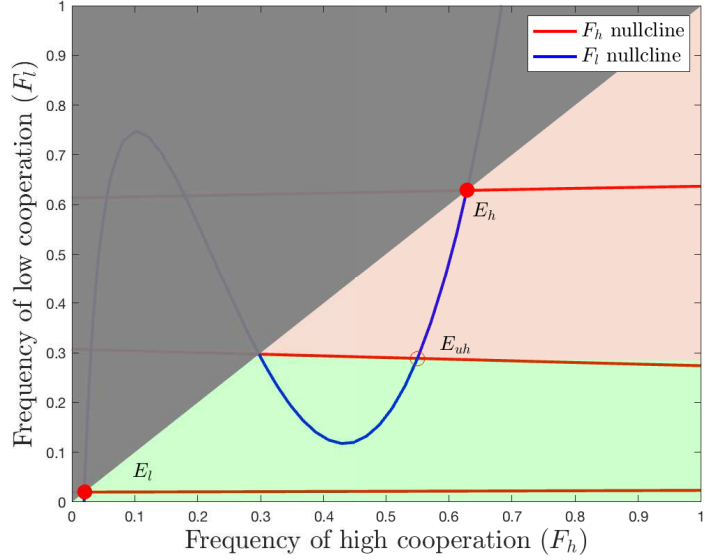}
    \caption{Phase plane after $E_m$ collides with $E_{ul}$. This phase plane corresponds with lower pressure to cooperate. The region above the line $F_l=F_h$ is excluded based on the condition $F_l<F_h$. The shaded regions represent the attraction basins of the stable equilibrium, the initial condition is located within the green region. In this case a regime shift to low cooperation occurs based on the prescribed initial condition. }
    \label{fig:ppEl}
\end{figure}

We have shown via phase plane analysis that three main dynamical outcomes occur. First, when additional network pressure is small then the system will stay in a state of mixed regime, i.e. $k$ lakes in high cooperation state and $N-k$ lakes in a low cooperation state as in Figure~\ref{fig:ppEm}. Second, when additional network pressure adds sufficiently more costs for the defectors a bifurcation occurs such that all lakes will tend to a high cooperation state as in Figure~\ref{fig:ppEh}. Lastly, when additional network pressures add sufficiently more costs to the cooperators a bifurcation occurs such that all lakes will tend a low cooperation state as in Figure~\ref{fig:ppEl}.  
\subsection{Bifurcation conditions}
We now discuss necessary conditions for the bifurcations to occur. These necessary conditions are based on model parameters and thus offer insight to the parameter values that result in certain regime shifts giving insight to effective mitigation for environmentally favourable outcomes. 
 
 Graphically we notice that the signs of $A|_{E_{m}}$ do not change as we vary parameter values. However, near either bifurcation point we require the signs $A|_{E_{ul}}$ or $A|_{E_{uh}}$ to change. This observation leads us to the following two theorems. 
 
 \begin{remark}\label{thm:networkbifurc}
  Necessary conditions for each bifurcation to occur are highlighted as follows: 
 \begin{itemize}
     \item[(i)] If system \eqref{eq:2Dnetwork} is close to the bifurcation point where $E_{ul}$ collides with $E_m$ then $\dfrac{df_1}{dF_h}|_{E_{ul}}<0$.
     \item[(ii)] If system \eqref{eq:2Dnetwork} is close to the bifurcation point where $E_{uh}$ collides with $E_m$ then $\dfrac{df_2}{dF_l}|_{E_{uh}}<0$. 
 \end{itemize}

 \end{remark}
 Remark \ref{thm:networkbifurc} is justified by graphical inspection. First, when sufficiently far away from the bifurcation point discussed in case (i) the Jacobian, $A|_{E_{ul}}=\begin{pmatrix}
 + & + \\
 + & - \end{pmatrix}$. A necessary condition at the bifurcation point is  $det(A|_{E_{ul}})=0$. Now, at the bifurcation point we require $A|_{E_{ul}}=A|_{E_{m}}$, and graphically we can see that near the bifurcation point only the first entry in $A|_{E_{ul}}$ will change sign. Thus, near the bifurcation point $\dfrac{df_1}{dF_h}|_{E_{ul}}$ must be negative. The argument for case (ii) follows identical logic to the above discussion. Thus, by Remark \ref{thm:networkbifurc} we have necessary conditions for the bifurcations to occur. These conditions involve model parameters, thus offering insight towards system characteristic that promote, or prevent, bifurcation from occurring.

\subsection{Bifurcation diagrams for the reduced network}

We now explore the possible regime shifts dependent on the socio-economic parameters $\delta_\coop,\delta_\defe$, $k$ and $N$ and focus on the bifurcations related to $E_m$. The model parameters considered offer insight towards implementing additional costs as to prevent non-environmentally favourable outcomes or to promote environmentally favourable regime shifts.

Figures~\ref{fig:2parbif},~\ref{fig:2parbifdcdd}, and~\ref{fig:2parbifD}, show two parameter bifurcation diagrams for the combination of parameters $\delta_\coop$, $\delta_\defe$ and $k/N$. We assume that the initial condition of~\eqref{eq:2Dnetwork} corresponds to populations in the bistable state of the single lake model i.e., $F_h(0)=F_h^*$ and $F_l(0)=F_l^*$. The region denoted with `high coop.' represents the region in parameter space where lake populations will tend to a high cooperation regime (Figure~\ref{fig:ppEh}). Conversely, the region denoted with `low coop.' corresponds to the network shifting to a low cooperation regime (Figure~\ref{fig:ppEl}). The region denoted `mixed coop.' represents when there is no regime shift (Figure~\ref{fig:ppEm}). There are two remaining regions in the parameter space where either $E_l$ or $E_h$ disappear. These regions correspond to the bifurcation that would occur when either i) $E_{ul}$ and $E_l$ collide, and ii) $E_{uh}$ and $E_h$ collide and are denoted with `high coop. only', and `low coop. only', respectively. In both cases the resultant regime will not change but the system becomes monostable instead of bistable offering the potential for global attractiveness of the regime. 
\begin{figure}
    \centering
    \includegraphics[width=\textwidth]{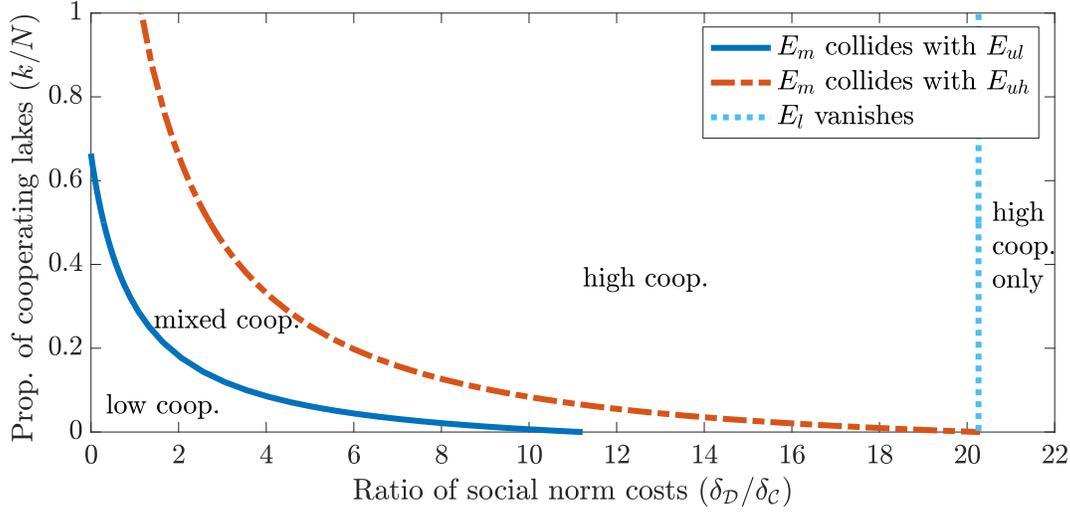}
    \caption{Shows the two parameter bifurcation for $\delta_\coop=0.6$. The solid line is the curve in parameter space where $E_m$ collides with $E_{ul}$. Below the solid is described in Case (ii). The dashed line is the curve in parameter space where $E_m$ collides with $E_{uh}$. Above the dashed line is described in Case (i). The region between the curves is where the equilibrium $E_m$ persists. To the right of the dotted line $E_l$ vanishes and the equilibrium $E_h$ is the only equilibrium.}
    \label{fig:2parbif}
\end{figure}

\begin{figure}
    \centering
    \includegraphics[width=\textwidth]{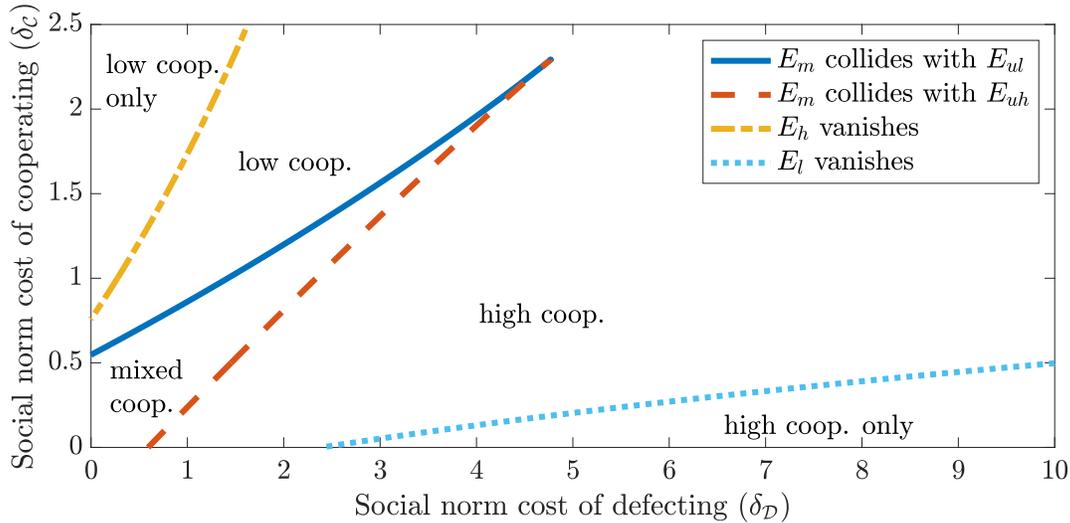}
    \caption{Shows the two parameter bifurcation for $k/N=0.5$. The orange dashed line is the curve in parameter space where $E_m$ collides with $E_{uh}$. The region below this line is described in Case (i). The solid blue line is the curve in parameter space where $E_m$ collides with $E_{ul}$. The region above this line is described in Case (ii). The solid and dashed lines meet at a cusp bifurcation (see Remark~~\ref{rem:cuspbifurc}). The other two lines show when $E_l$ or $E_h$ vanish. Crossing these lines transition from bistable state to a monostable state. }
    \label{fig:2parbifdcdd}
\end{figure}

\begin{remark}\label{rem:cuspbifurc}
There is a cusp bifurcation shown in Figures~\ref{fig:2parbifdcdd} and~\ref{fig:2parbifD}. The cusp bifurcation is the point in which $E_m$,$E_{ul}, $ and $E_{uh}$ all collide and one unstable equilibrium remains. Past the cusp bifurcation no bifurcation occurs, but the location of the unstable equilibrium varies and the size of the attracting basins changes accordingly. Thus, past the cusp bifurcation we label the region `high coop.' if remaining unstable equilibrium is along the bottom-most branch of the $F_l$ nullcline, and label the region `low coop.' if it is along the rightmost branch of the $F_h$ nullcline. Figure~\ref{fig:cuspbif} shows two phase portraits near the cusp bifurcation.
\end{remark}
 
\begin{figure}
    \centering
    \includegraphics[width=\textwidth]{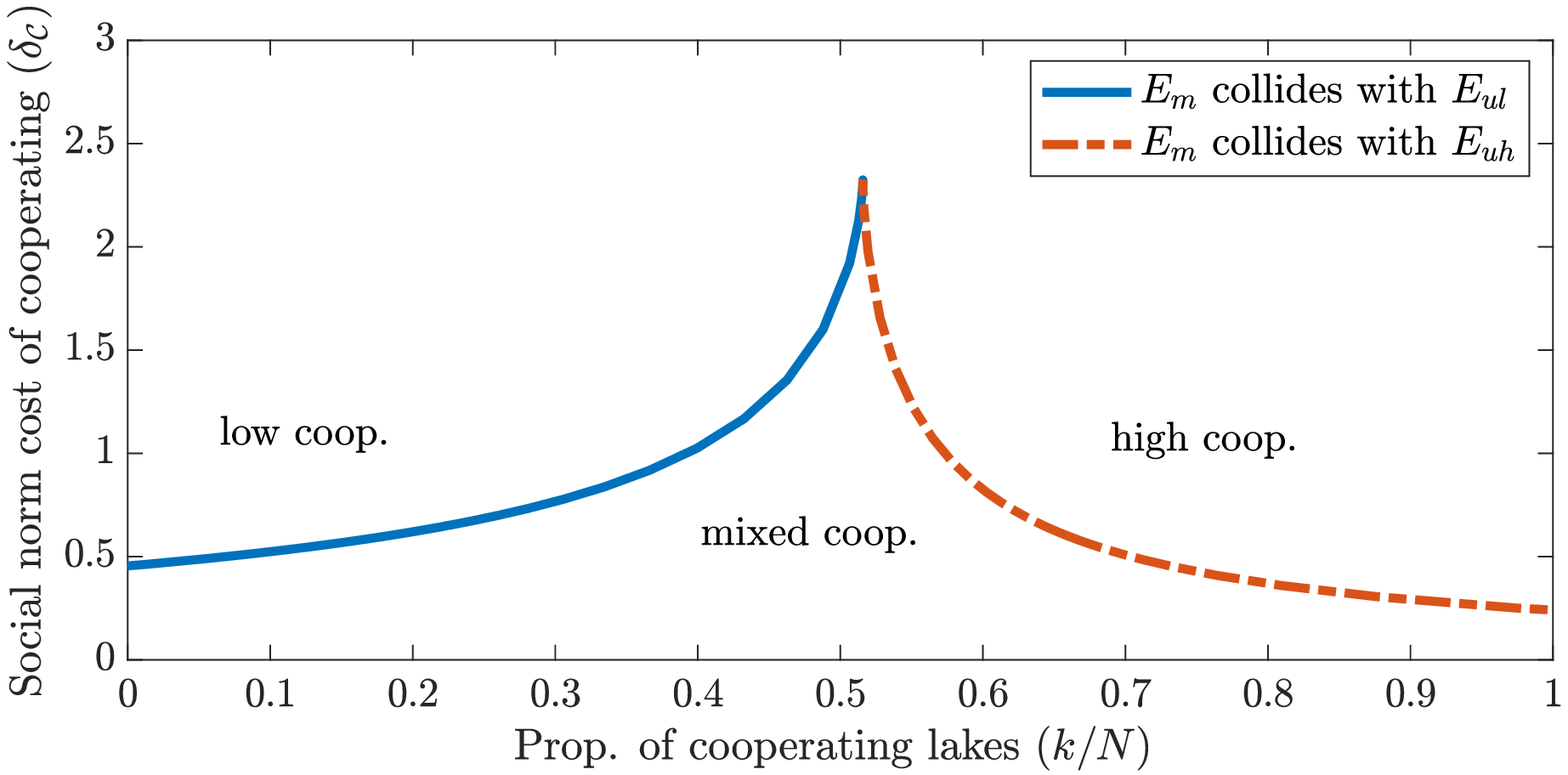}
    \caption{Shows the two parameter bifurcation for $\delta_\defe=2\delta_\coop$. The solid line is the curve in parameter space where $E_m$ collides with $E_{ul}$. Above the solid is described in Case (ii). The dashed line is the curve in parameter space where $E_m$ collides with $E_{uh}$. Above the dashed line is described in Case (i).The solid and dashed lines meet at a cusp bifurcation (see Remark~~\ref{rem:cuspbifurc}).}
    \label{fig:2parbifD}
\end{figure}

\begin{figure}
    \centering
    \includegraphics[width=0.49\textwidth]{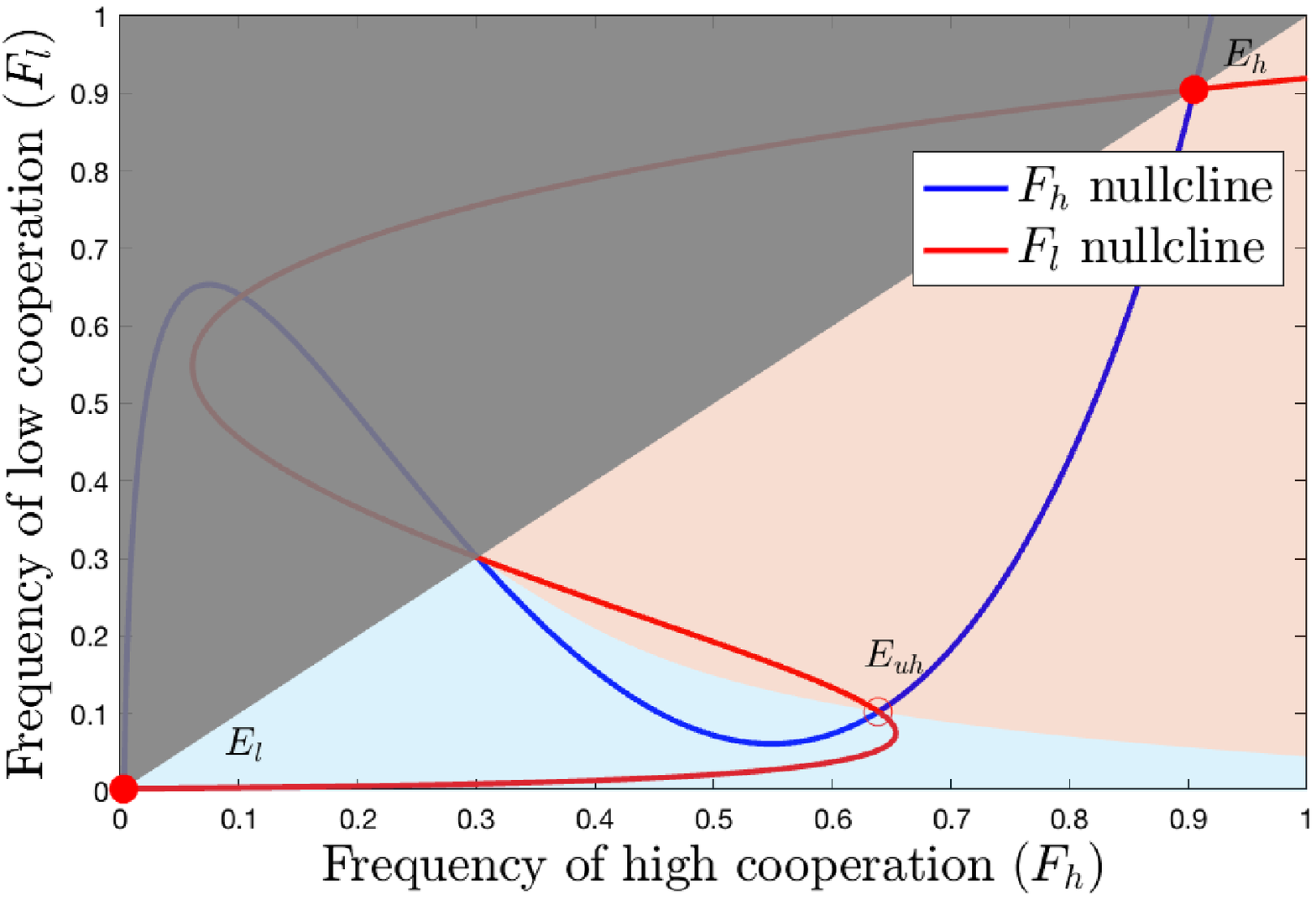}
    \includegraphics[width=0.49\textwidth]{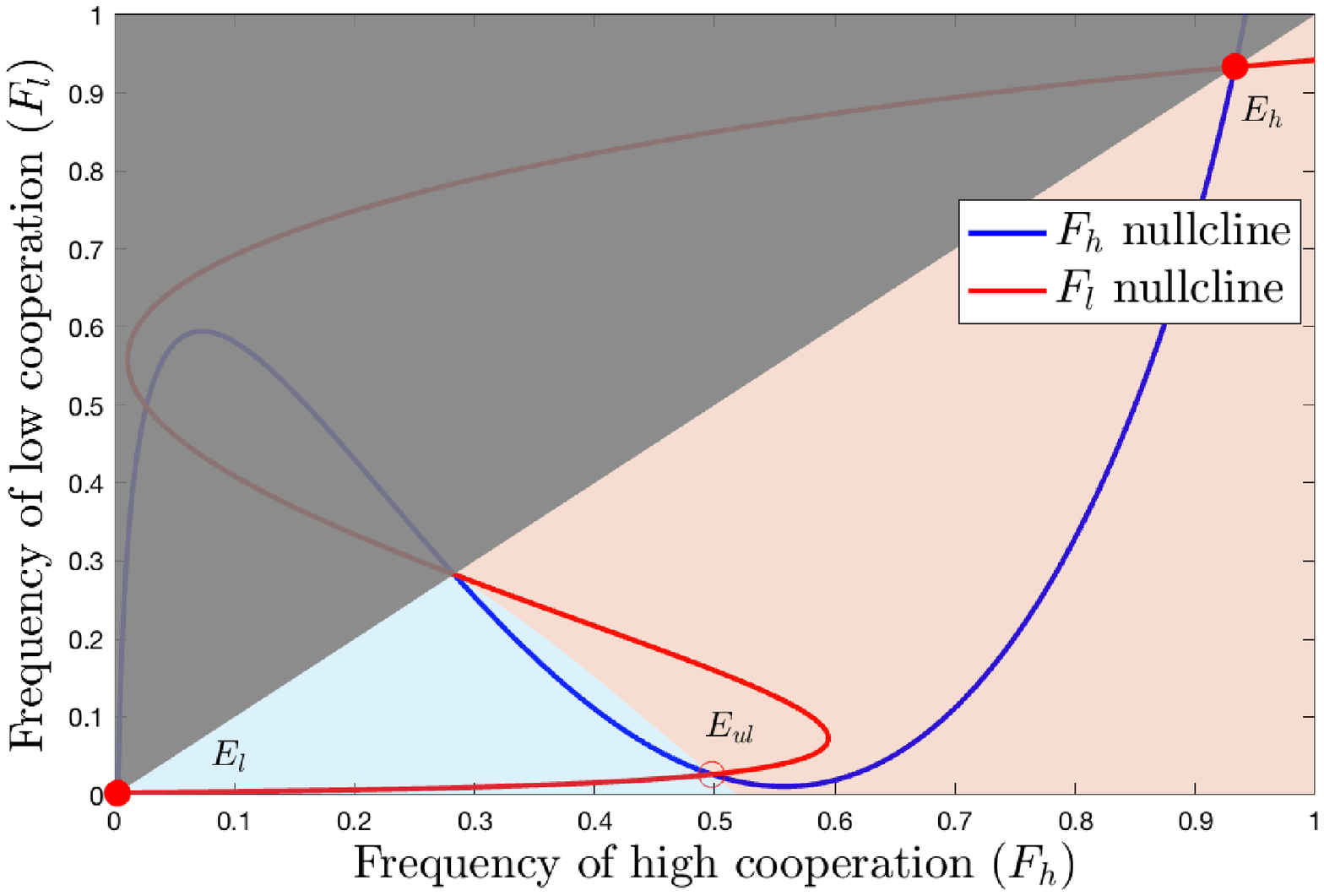}
    \caption{Shows the phase portrait just beyond the cusp bifurcation in Figures~\ref{fig:2parbifdcdd} and~\ref{fig:2parbifD}. The left phase portrait shows the unstable equilibrium on the rightmost branch of the $F_h$ nullcline, thus this parameter region is labelled as `low coop'. The right phase portrait shows the unstable equilibrium on the bottom-most branch of the $F_l$ nullcline, thus is labelled as the `high coop.' region. }
    \label{fig:cuspbif}
\end{figure}

The series of two parameter bifurcation plots presented in Figures~\ref{fig:2parbif},~\ref{fig:2parbifdcdd}, and~\ref{fig:2parbifD} give insight to parameter values that will yield a favourable regime shift. Interestingly, the parameters $\delta_\coop$, $\delta_\defe$, $k$ and $N$ can be altered strategically with three possible outcomes. First, for values that greatly increase the pressure to cooperate, such as large $k/N$, small $\delta_\coop$, or large $\delta_\defe$ the network will shift to a high cooperation regime. Second, for small  $k/N$, large $\delta_\coop$, or small $\delta_\defe$ ($\delta_\defe/\delta_\coop$ is small) the network will shift to a low cooperation regime. However, for the third case, there are intermediate values of each of the parameters such that no regime shift will occur.

\section{Discussion}

The study of CB dynamics is important to effectively mitigate potential risks associated with toxin production and ecosystem health as well as to reduce the associated agricultural, recreational, and water treatment costs. However, CB dynamics are intertwined with human dynamics through anthropogenic nutrient pollution~\cite{Paerl2014}. In order to make meaningful management strategies to reduce the effects of CHABs we must also consider the associated socio-economic dynamics.

In this paper we study the coupled socio-economic dynamics and abundance of CB in a single lake, and a network of lakes. The model presented is an extension of previously studied socio-economic and phytoplankton models. Similar systems have been studied but only consider the nutrient dynamics and neglect the deeper issue of CB abundance~\cite{Iwasa2007,Iwasa2010}. This distinction is important as many landowners are more concerned with the risks associated with CHABs as a result of eutrophication than the eutrophic conditions themselves. For this reason we explicitly consider the influence human dynamics have on CB abundance. This is done by extending the models established by~\textcite{Iwasa2007,Iwasa2010} to consider CB abundance as the main environmental concern. The CB model is derived from a series of stoichiometric models~\cite{Wang2007,Berger2006,Heggerud2020}. The coupling of the ecological and socio-economic models yields the existence of multiple stable states and hysteresis creating deeper implications for effective management of such systems. 

We consider the dynamics of our model for both phosphorus and iron. In the case of phosphorus the ecological dynamics occur on a faster timescale than the human dynamics. This observation allows us to apply the QSSA, simplifying the analysis. We further apply a series of approximations and arrive at a single tractable differential equation that describes the entire single lake system. Equilibria and their stability are studied through a bifurcation analysis with respect to the parameter $\eta$ in Theorem~\ref{thm:phase lineEQ}. The results are supported by graphical inspection, inspired by the analysis of the classical Spruce Budworm model~\cite{Ludwig1978}. For the iron case, the nutrient dynamics occur on the same timescale as the human dynamics with the CB and cell quota remaining on the faster timescale. We again apply the QSSA and graphically study the system in the phase plane. The graphical results show two saddle node bifurcation points with respect to the parameter $\eta$. Each bifurcation point results in a loss of bistability and either the high or low cooperation equilibrium become globally attracting. This type of dynamic is akin to the typical hysteresis phenomenon~\cite{Carpenter2005,Beisner2003}.

Lastly, we extend the phase line analysis to study the long term behaviour of a network of lakes. We assume that each lake is ecologically similar and that the only connections among lakes are through social interactions. The analysis is done in the phase plane where multiple stable states exist. Each state corresponds to a regime of high, low or mixed levels of cooperation throughout the network. Two main bifurcation branches are observed which correspond to the loss of the mixed cooperation equilibrium. Through a series of bifurcation diagrams we gain understanding as to what the long term regime outcome is based on parameter values pertaining to social pressures.

The results presented in this paper all have various implications towards policy and management strategies. In the phase line analysis of the phosphorus model our main result, Theorem~\ref{thm:phase lineEQ}, gives analytical conditions for bifurcation points. The bifurcation points correspond to the loss of bistability as the parameter $\eta$ is changed. Recall that $\eta$ represents the difference in baseline costs to cooperate and defect plus the difference in costs of external social pressures. That is, a large $\eta$ represents a larger cost to cooperate, whereas a small $\eta$ represents a larger cost to defect. Our results are not surprising in the sense that large $\eta$ leads to lower cooperation and vice versa, but what is noteworthy to managers is the presence of hysteresis. A lake could be in a high cooperation state and suddenly shift to a low cooperation state if $\eta$ exceeds its bifurcation point ($\eta_3$). However, attempts to lower $\eta$ back down to $\eta_3$ will not be sufficient in re-achieving the high cooperation state due to the presence of the hysteresis phenomenon. 

The results of the iron system reiterate the conclusions of the phosphorus system with respect to bistability, but explicitly show the high and low pollution states. Additionally, the dynamics of iron and phosphorus are assumed to act on different timescales with respect to CB and anthropogenic inputs~\cite{Whitton2012,Cunningham2017}. We assume that phosphorus dynamics occur on a similar scale to the ecological dynamics, whereas iron dynamics occur on the slower time scale similar to the human dynamics. This distinction is important when suggesting management strategies for a specific nutrient as the transient dynamics of the systems can differ significantly and moreover, the response of the human system may not yield a satisfactory response in the ecology for significantly longer periods of time~\cite{Hastings2016}. 

In general, adding costs that are associated with social norm pressures on the defectors can help sway the long term outcomes to be environmentally favourable. For instance, a large associated social norm pressure to cooperate, combined with a low associated pressure to defect will result in an overall environmentally favourable outcome. Also, when social network connections are added the initial state of the network can be a predictor of the regime outcome. We show that when a large majority of lakes start in a high cooperation state, it is unlikely that parameters can be changed enough so that the long term outcome will be a state of low cooperation, although mixed levels of cooperation throughout the network is possible.

We have shown that bistability is observed in a single lake system. These results reiterates what has been hypothesized to occur in many nutrient explicit lake systems~\cite{Carpenter2005,Iwasa2010} although in our case bistability does not occur in the CB model without the socio-economic coupling~\cite{Wang2007,Heggerud2020}. However, when a network system is considered our results show that tristability occurs as in Figure \ref{fig:ppEm}. To our knowledge no such ecological system has been shown to exhibit such behaviour. The implication of tristability is interesting in the sense that three regime outcomes are possible and that attempts to shift regimes may require significantly more effort. That is, if the system is in a low cooperation regime, it must first transition to a mixed regime state before achieving the environmentally favourable outcome. Furthermore, a system in the mixed regime can be perturbed in either direction to cause a regime shift, as opposed to the bistable case where perturbations can only shift the regime in one direction. In this sense tristable systems are more fragile to environmental fluctuations which can both be beneficial if transitioning from a less favourable regime, or detrimental if in a favourable state. Future work of this study should include deeper consideration of tristable systems and their implications towards management.

In much of our analysis we make the QSSA or similar simplification. Although these simplifications make the model tractable for analysis they do take away some key aspects of the dynamics, mainly the possibility for interesting transient dynamics. Our results pertain to only long-term dynamics which may be insufficient in the eyes of policy makers as the ecology can change drastically on a smaller timescale~\cite{Heggerud2020,Hastings2010,Hastings2018}. Furthermore, certain mechanisms are deemed negligible via the QSSA which, although reasonable, do take away from the dynamics of the full system.

The socio-economic component of our model uses the logit best-response dynamics to model human decision making and is extended from~\textcite{Iwasa2007,Iwasa2010}. We present justifications for using this form, although the replicator dynamics are, perhaps, more commonly used. Indeed the replicator dynamics are more mathematically friendly, but they are based on the assumption of completely rational individuals~\cite{Sun2021}, or individuals that base their decision solely on decreasing cost by assuming a strategy that was beneficial to another individual. The best-response dynamics assumes that an individual bases their decision partly on the current environmental state and the associated social norms~\cite{Farahbakhsh2021,Sun2021}. Furthermore, replication dynamics assumes that eventually the population will entirely assume a strategy whereas the best-response dynamics will have persistence of both strategies. For our system we deem the assumptions around the best-response dynamics more reasonable. However, for deeper mathematical understanding the simpler form of the replicator dynamics may prove useful. 

Our analysis of the network model is greatly simplified with the strict assumption that all lakes had identical ecological dynamics and socio-economic parameters. This assumption may be inaccurate in reality, but allows for a simplistic understanding of the potential regimes, and shifts among them. We assume the social norm pressures from the network are dependent on the proportion of cooperators at each lake, when in reality this pressure is also dependent on the population size at each lake. Additionally, explicit weighted network connections can result in coupling between pairs of lakes in which we expect to see scenarios where regime shifts can propagate through the network~\cite{Keitt2001}. In future extensions of our model these assumptions should be revisited and addition of weighted network connections that are non-uniform should be considered. By relaxing our assumptions on the network model we expect many new and exciting results pertaining to social dynamics and propagation of regime shifts. 

The current study shows the importance of the interconnection of ecological and socio-economic dynamics in aquatic systems by portraying the various dynamical outcomes that can occur in the coupled system. Social pressures and ostracism influence the role an individual assumes with respect to environmental issues by adding associated costs. Furthermore, social pressures can lead to favourable regime shifts within a network of lakes giving valuable insight to policies and mitigation strategies. This study builds a valuable framework for future studies of coupled CB and socio-economic systems.

\section*{Acknowledgements}
The authors would like to thank Professor Rolf Vinebrooke for helping motivate this paper and for many stimulating biological discussions. C.M.H. acknowledges financial support from the University of Alberta and the Alberta Graduate Excellence Scholarship (AGES), M.A.L. by a Canada Research Chair and both M.A.L. and H.W. by NSERC discovery grants.

\printbibliography

@article{Nicholls2018,
    title = {{A comprehensive review of the evidence of the impact of surface water quality on property values}},
    year = {2018},
    journal = {Sustainability},
    author = {Nicholls, Sarah and Crompton, John},
    number = {2},
    month = {2},
    pages = {500},
    volume = {10},
    publisher = {Multidisciplinary Digital Publishing Institute},
    url = {https://www.mdpi.com/2071-1050/10/2/500/htm https://www.mdpi.com/2071-1050/10/2/500},
    doi = {10.3390/su10020500},
    issn = {20711050},
    keywords = {Hedonic pricing, Property value, Water pollution, Water quality}
}

@article{Keitt2001,
    title = {{Allee effects, invasion pinning, and species' borders}},
    year = {2001},
    journal = {American Naturalist},
    author = {Keitt, Timothy H. and Lewis, Mark A. and Holt, Robert D.},
    number = {2},
    pages = {203--216},
    volume = {157},
    url = {http://www.journals.uchicago.edu/t-and-c},
    doi = {10.1086/318633},
    issn = {00030147},
    keywords = {Allee effects, Borders, Critical phenomena, Invasions, Range limits, Species}
}

@article{Beisner2003,
    title = {{Alternative stable states in ecology}},
    year = {2003},
    journal = {Frontiers in Ecology and the Environment},
    author = {Beisner, BE and Haydon, DT and Cuddington, K.},
    number = {7},
    pages = {376--382},
    volume = {1},
    url = {https://esajournals.onlinelibrary.wiley.com/doi/epdf/10.1890/1540-9295%282003%29001%5B0376%3AASSIE%5D2.0.CO%3B2},
    doi = {10.1890/1540-9295(2003)001[0376:ASSIE]2.0.CO;2},
    issn = {15409309}
}

@article{Wang2016a,
    title = {{An integrated model for marine fishery management in the Pearl River Estuary: Linking socio-economic systems and ecosystems}},
    year = {2016},
    journal = {Marine Policy},
    author = {Wang, Ying and Hu, Jianfeng and Pan, Haoran and Li, Shiyu and Failler, Pierre},
    month = {2},
    pages = {135--147},
    volume = {64},
    publisher = {Elsevier Ltd},
    doi = {10.1016/j.marpol.2015.11.014},
    issn = {0308597X},
    keywords = {Ecopath with Ecosim, Integrated model, Overfishing, Pearl River Estuary, Social accounting matrix (SAM)}
}

@article{Sun2020,
    title = {{Analyzing the mutual feedbacks between lake pollution and human behaviour in a mathematical social-ecological model}},
    year = {2020},
    journal = {Ecological Complexity},
    author = {Sun, T. Anthony and Hilker, Frank M.},
    month = {8},
    pages = {100834},
    volume = {43},
    publisher = {Elsevier},
    doi = {10.1016/j.ecocom.2020.100834},
    issn = {1476945X},
    keywords = {Coupled human-environment system, Eutrophication, Logit best-response, Multistability, Oscillations, Social-ecological system}
}

@article{Farahbakhsh2021,
    title = {{Best response dynamics improve sustainability and equity outcomes in common-pool resources problems, compared to imitation dynamics}},
    year = {2021},
    journal = {Journal of Theoretical Biology},
    author = {Farahbakhsh, Isaiah and Bauch, Chris T. and Anand, Madhur},
    month = {1},
    pages = {110476},
    volume = {509},
    publisher = {Academic Press},
    doi = {10.1016/j.jtbi.2020.110476},
    issn = {10958541},
    pmid = {33069675},
    keywords = {Coupled human-and-natural systems, Evolutionary game theory, Human-environment systems, Network game theory, Self-organized criticality, Social-ecological systems, Tragedy of the commons}
}

@article{Wolf2017,
    title = {{Bloom and bust: Toxic algae's impact on nearby property values}},
    year = {2017},
    journal = {Ecological Economics},
    author = {Wolf, David and Klaiber, H. Allen},
    month = {5},
    pages = {209--221},
    volume = {135},
    publisher = {Elsevier},
    doi = {10.1016/j.ecolecon.2016.12.007},
    issn = {09218009},
    keywords = {Blue green algae, Capitalization, Cyanobacteria, Harmful algal bloom, Hedonic, Inland lake}
}

@article{Bury2019,
    title = {{Charting pathways to climate change mitigation in a coupled socio-climate model}},
    year = {2019},
    journal = {PLoS Computational Biology},
    author = {Bury, Thomas M. and Bauch, Chris T. and Anand, Madhur},
    number = {6},
    month = {6},
    pages = {e1007000},
    volume = {15},
    publisher = {Public Library of Science},
    url = {https://doi.org/10.1371/journal.pcbi.1007000},
    doi = {10.1371/journal.pcbi.1007000},
    issn = {15537358},
    pmid = {31170149},
    keywords = {Anthropogenic climate change, Carbon dioxide, Climate change, Climate modeling, Earth systems, Geophysics, Human learning, Learning}
}

@article{Sun2021,
    title = {{Comparison between best-response dynamics and replicator dynamics in a social-ecological model of lake eutrophication}},
    year = {2021},
    journal = {Journal of Theoretical Biology},
    author = {Sun, T. Anthony and Hilker, Frank M.},
    month = {1},
    volume = {509},
    publisher = {Academic Press},
    url = {https://pubmed.ncbi.nlm.nih.gov/32949591/},
    doi = {10.1016/j.jtbi.2020.110491},
    issn = {10958541},
    pmid = {32949591},
    keywords = {Coupled human-environment system, Logit best-response dynamics, Make-or-break dynamics, Oscillations, Replicator dynamics, Social-ecological system}
}

@article{Pedro2020,
    title = {{Conditions for a Second Wave of COVID-19 Due to Interactions Between Disease Dynamics and Social Processes}},
    year = {2020},
    journal = {Frontiers in Physics},
    author = {Pedro, Sansao A. and Ndjomatchoua, Frank T. and Jentsch, Peter and Tchuenche, Jean M. and Anand, Madhur and Bauch, Chris T.},
    month = {10},
    pages = {574514},
    volume = {8},
    publisher = {Frontiers Media S.A.},
    url = {www.frontiersin.org},
    doi = {10.3389/fphy.2020.574514},
    issn = {2296424X},
    keywords = {COVID-19, SARS-CoV-2, behavioral fatigue, coupled behavior-disease system, epidemic model, evolutionary game theory, imitation dynamics, social learning}
}

@article{Satake2007,
    title = {{Coupled ecological-social dynamics in a forested landscape: Spatial interactions and information flow}},
    year = {2007},
    journal = {Journal of Theoretical Biology},
    author = {Satake, Akiko and Leslie, Heather M. and Iwasa, Yoh and Levin, Simon A.},
    number = {4},
    month = {6},
    pages = {695--707},
    volume = {246},
    doi = {10.1016/j.jtbi.2007.01.014},
    issn = {00225193},
    pmid = {17376488},
    keywords = {Agent-based models, Expected utility, Externality, Land use, Tree harvesting}
}

@article{Poon2015,
    title = {{Desiring to connect to nature: The effect of ostracism on ecological behavior}},
    year = {2015},
    journal = {Journal of Environmental Psychology},
    author = {Poon, Kai-Tak and Teng, Fei and Chow, Jason T. and Chen, Zhansheng},
    month = {6},
    pages = {116--122},
    volume = {42},
    publisher = {Academic Press},
    doi = {10.1016/j.jenvp.2015.03.003},
    issn = {15229610},
    keywords = {Ecological behavior, Nature connectedness, Ostracism, Social exclusion}
}

@article{Wang2008,
    title = {{Dynamics of a mechanistically derived stoichiometric producer-grazer model}},
    year = {2008},
    journal = {Journal of Biological Dynamics},
    author = {Wang, Hao and Kuang, Yang and Loladze, Irakli},
    number = {3},
    pages = {286--296},
    volume = {2},
    url = {http://www.tandfonline.com/action/journalInformation?journalCode=tjbd20},
    isbn = {1751-3758},
    doi = {10.1080/17513750701769881},
    issn = {17513766},
    pmid = {22876870},
    keywords = {Droop model, Logistic equation, Phosphorus uptake, Producer-grazer model, Stoichiometry}
}

@article{Wang2007,
    title = {{Dynamics of stoichiometric bacteria-algae interactions in the epilimnion}},
    year = {2007},
    journal = {SIAM Journal on Applied Mathematics},
    author = {Wang, Hao and Smith, Hall and Kuang, Yang and Elser, James J},
    number = {2},
    pages = {503--522},
    volume = {68},
    keywords = {34a34, 34d05, 92b05, 92d25, 92d40, ams subject classifications, bacteria, cell quota, competitive system, persistence, primary, secondary, stoichiometry}
}

@article{Branco2018,
    title = {{Eco-evolutionary dynamics of ecological stoichiometry in plankton communities}},
    year = {2018},
    journal = {American Naturalist},
    author = {Branco, Pedro and Egas, Martijn and Elser, James J. and Huisman, Jef},
    number = {1},
    month = {7},
    pages = {E1-E20},
    volume = {192},
    publisher = {University of Chicago Press},
    url = {https://www.journals.uchicago.edu/doi/abs/10.1086/697472},
    doi = {10.1086/697472},
    issn = {00030147},
    pmid = {29897797},
    keywords = {Cladocerans, Copepods, N∶P ratio, Phytoplankton, Redfield ratio, Zooplankton}
}

@article{Kaebernick2001,
    title = {{Ecological and molecular investigations of cyanotoxin production}},
    year = {2001},
    journal = {FEMS Microbiology Ecology},
    author = {Kaebernick, Melanie and Neilan, Brett A.},
    number = {1},
    pages = {1--9},
    volume = {35},
    isbn = {0168-6496},
    doi = {10.1016/S0168-6496(00)00093-3},
    issn = {01686496},
    pmid = {11248384},
    keywords = {Cyanobacteria, Cylindrospermopsin, Function, Microcystin, Non-ribosomal peptide, Regulation, Saxitoxin}
}

@book{Sterner2002,
    title = {{Ecological Stoichiometry: The Biology of Elements from Molecules to the Biosphere}},
    year = {2002},
    booktitle = {Princeton University Press, Princeton, New Jersey, USA},
    author = {Sterner, R.W and Elser, J.J},
    pages = {439},
    publisher = {Princeton University Press},
    url = {https://press.princeton.edu/titles/7434.html},
    address = {Princeton, NJ},
    isbn = {9780691074917}
}

@book{Whitton2012,
    title = {{Ecology of cyanobacteria II: Their diversity in space and time}},
    year = {2012},
    booktitle = {Springer Netherlands},
    author = {Whitton, Brian A.},
    pages = {1--760},
    publisher = {Springer Netherlands},
    isbn = {9789400738553},
    doi = {10.1007/978-94-007-3855-3},
    issn = {03796566},
    pmid = {25246403},
    arxivId = {arXiv:1011.1669v3}
}

@article{Fransson1999,
    title = {{Environmental concern: Conceptual definitions, measurement methods, and research findings}},
    year = {1999},
    journal = {Journal of Environmental Psychology},
    author = {Fransson, Niklas and G{\"{a}}rling, Tommy},
    number = {4},
    month = {12},
    pages = {369--382},
    volume = {19},
    publisher = {Academic Press},
    doi = {10.1006/jevp.1999.0141},
    issn = {02724944}
}

@article{Lacroix2017,
    title = {{Environmental nutrient supply directly alters plant traits but indirectly determines virus growth rate}},
    year = {2017},
    journal = {Frontiers in Microbiology},
    author = {Lacroix, Christelle and Seabloom, Eric W. and Borer, Elizabeth T.},
    month = {11},
    pages = {2116},
    volume = {8},
    publisher = {Frontiers Media S.A.},
    url = {www.frontiersin.org},
    doi = {10.3389/fmicb.2017.02116},
    issn = {1664302X},
    keywords = {Coinfection, Nutrient supply, Plant traits, Stoichiometry, Transmission, Virus accumulation}
}

@article{Carpenter2005,
    title = {{Eutrophication of aquatic ecosystems: Bistability and soil phosphorus}},
    year = {2005},
    journal = {Proceedings of the National Academy of Sciences},
    author = {Carpenter, Stephen R.},
    number = {29},
    pages = {10002--10005},
    volume = {102},
    url = {http://www.pnas.org/cgi/doi/10.1073/pnas.0503959102},
    isbn = {0027-8424},
    doi = {10.1073/pnas.0503959102},
    issn = {0027-8424},
    pmid = {15972805}
}

@article{North2007,
    title = {{Evidence for phosphorus, nitrogen, and iron colimitation of phytoplankton communities in Lake Erie}},
    year = {2007},
    journal = {Limnology and Oceanography},
    author = {North, R. L. and Guildford, S. J. and Smith, R. E.H. and Havens, S. M. and Twiss, M. R.},
    number = {1},
    pages = {315--328},
    volume = {52},
    publisher = {American Society of Limnology and Oceanography Inc.},
    doi = {10.4319/lo.2007.52.1.0315},
    issn = {00243590}
}

@article{Hofbauer2003,
    title = {{Evolutionary game dynamics}},
    year = {2003},
    journal = {Bulletin of the American Mathematical Society},
    author = {Hofbauer, Josef and Sigmund, Karl},
    number = {4},
    pages = {479--519},
    volume = {40},
    isbn = {0273-0979(03)00988-1},
    doi = {10.1090/S0273-0979-03-00988-1},
    issn = {02730979}
}

@article{Diehl2005,
    title = {{Flexible Nutrient Stoichiometry Mediates Environmental}},
    year = {2005},
    journal = {Ecology},
    author = {Diehl, Sebastian and Berger, Stella and W{\"{o}}hrl, Rainer},
    number = {11},
    pages = {2931--2945},
    volume = {86},
    url = {https://esajournals.onlinelibrary.wiley.com/doi/pdf/10.1890/04-1512},
    keywords = {carbon to phosphorus ratio, depth, field experiment, light, mixed surface layer, mixing, nutrients, population model, production, sedimentation, sinking velocity, turbidity}
}

@article{Wurtsbaugh1983,
    title = {{Iron in eutrophic Clear Lake, California: its importance for algal nitrogen fixation and growth ( Aphanizomenon flos-aquae).}},
    year = {1983},
    journal = {Canadian Journal of Fisheries and Aquatic Sciences},
    author = {Wurtsbaugh, Wayne A. and Horne, A. J.},
    number = {9},
    pages = {1419--1429},
    volume = {40},
    publisher = {NRC Research Press Ottawa, Canada},
    url = {https://cdnsciencepub.com/doi/abs/10.1139/f83-164},
    doi = {10.1139/f83-164},
    issn = {0706-652X}
}

@article{Larson2015,
    title = {{Iron supply constrains producer communities in stream ecosystems}},
    year = {2015},
    journal = {FEMS Microbiology Ecology},
    author = {Larson, Chad A. and Liu, Hongsheng and Passy, Sophia I.},
    number = {5},
    pages = {41},
    volume = {91},
    url = {http://water.usgs.gov/nawqa},
    doi = {10.1093/femsec/fiv041},
    issn = {15746941},
    pmid = {25873463},
    keywords = {Algae, Biodiversity, Biomass accumulation, Browniication, Eutrophication, Iron, Nitrogen ixers, Nutrient colimitation, Streams}
}

@article{Huisman1994,
    title = {{Light limited growth and competition for light in well mixed aquatic environments: an elementary model}},
    year = {1994},
    journal = {Ecology},
    author = {Huisman, Jef and Weissing, Franz J},
    number = {2},
    pages = {507--520},
    volume = {75},
    isbn = {00129658, 19399170},
    doi = {10.2307/1939554}
}

@article{Berger2006,
    title = {{Light supply, plankton biomass, and seston stoichiometry in a gradient of lake mixing depths}},
    year = {2006},
    journal = {Limnology and Oceanography},
    author = {Berger, Stella A and Diehl, Sebastian and Kunz, Thomas J. and Albrecht, Dieter and Oucible, Amine M. and Ritzer, Sylvie},
    number = {4},
    pages = {1898--1905},
    volume = {51},
    isbn = {0024-3590},
    doi = {10.4319/lo.2006.51.4.1898},
    issn = {00243590},
    pmid = {5072}
}

@book{Kalff2002,
    title = {{Limnology: inland water ecosystems.}},
    year = {2002},
    booktitle = {Prentice Hall},
    author = {Kalff, Jacob.},
    pages = {592},
    publisher = {Prentice Hall},
    url = {https://search.library.ualberta.ca/catalog/2595736 papers://aff512a5-579d-44ca-9504-d5ff894b570f/Paper/p4515},
    address = {Upper Saddle River, NJ},
    isbn = {0130337757}
}

@article{Paerl2014,
    title = {{Mitigating Harmful Cyanobacterial Blooms in a Human- and Climatically-Impacted World}},
    year = {2014},
    journal = {Life},
    author = {Paerl, Hans W.},
    number = {4},
    pages = {988--1012},
    volume = {4},
    url = {http://www.mdpi.com/2075-1729/4/4/988/},
    isbn = {1568-9883},
    doi = {10.3390/life4040988},
    issn = {2075-1729},
    pmid = {25517134},
    keywords = {climate change, harmful cyanobacteria, mitigation, nitrogen, phosphorus, water quality management}
}

@article{Iwasa2007,
    title = {{Nonlinear behavior of the socio-economic dynamics for lake eutrophication control}},
    year = {2007},
    journal = {Ecological Economics},
    author = {Iwasa, Yoh and Uchida, Tomoe and Yokomizo, Hiroyuki},
    number = {1},
    pages = {219--229},
    volume = {63}
}

@article{redfield1934,
    title = {{On the proportions of organic derivatives in sea water and their relation to the composition of plankton}},
    year = {1934},
    journal = {James Johnstone Memorial Volume, University Press, Liverpool},
    author = {Redfield, A},
    pages = {177--192}
}

@article{Williams2007,
    title = {{Ostracism}},
    year = {2007},
    journal = {Annual Review of Psychology},
    author = {Williams, Kipling D.},
    month = {12},
    pages = {425--452},
    volume = {58},
    publisher = {Annual Reviews},
    url = {https://www-annualreviews-org.login.ezproxy.library.ualberta.ca/doi/abs/10.1146/annurev.psych.58.110405.085641},
    isbn = {0824302583},
    doi = {10.1146/annurev.psych.58.110405.085641},
    issn = {00664308},
    pmid = {16968209},
    keywords = {Ignoring, Rejection, Silent treatment, Social exclusion}
}

@article{Williams2011,
    title = {{Ostracism: Consequences and coping}},
    year = {2011},
    journal = {Current Directions in Psychological Science},
    author = {Williams, Kipling D. and Nida, Steve A.},
    number = {2},
    month = {4},
    pages = {71--75},
    volume = {20},
    publisher = {SAGE PublicationsSage CA: Los Angeles, CA},
    url = {https://journals.sagepub.com/doi/full/10.1177/0963721411402480},
    doi = {10.1177/0963721411402480},
    issn = {09637214},
    keywords = {belonging, cyberball, exclusion, ignoring, ostracism, rejection}
}

@article{Iwasa2010,
    title = {{Paradox of nutrient removal in coupled socioeconomic and ecological dynamics for lake water pollution}},
    year = {2010},
    journal = {Theoretical Ecology},
    author = {Iwasa, Yoh and Suzuki-Ohno, Yukari and Yokomizo, Hiroyuki},
    number = {2},
    pages = {113--122},
    volume = {3},
    doi = {10.1007/s12080-009-0061-5},
    issn = {18741738},
    keywords = {Cooperation, Coupled socioeconomic and ecosystem dynamics, Lake water pollution, Paradox of nutrient removal, Social hysteresis}
}

@article{Klausmeier2004,
    title = {{Phytoplankton growth and stoichiometry under multiple nutrient limitation}},
    year = {2004},
    journal = {Limnology and Oceanography},
    author = {Klausmeier, Christopher A. and Litchman, Elena and Levin, Simon A.},
    number = {4, part 2},
    pages = {1463--1470},
    volume = {49},
    url = {http://doi.wiley.com/10.4319/lo.2004.49.4_part_2.1463},
    isbn = {0024-3590},
    doi = {10.4319/lo.2004.49.4{\_}part{\_}2.1463},
    issn = {00243590}
}

@article{Fair2021,
    title = {{Population behavioural dynamics can mediate the persistence of emerging infectious diseases}},
    year = {2021},
    journal = {medRxiv},
    author = {Fair, Kathyrn R and Karatayev, Vadim A and Anand, Madhur and Bauch, Chris},
    month = {5},
    publisher = {Cold Spring Harbor Laboratory Press},
    url = {https://doi.org/10.1101/2021.05.03.21256551 http://medrxiv.org/content/early/2021/05/05/2021.05.03.21256551.abstract},
    isbn = {10.1101/2021.05.0},
    doi = {10.1101/2021.05.03.21256551}
}

@article{Mitra2005,
    title = {{Predator-prey interactions: Is 'ecological stoichiometry' sufficient when good food goes bad?}},
    year = {2005},
    journal = {Journal of Plankton Research},
    author = {Mitra, Aditee and Flynn, Kevin J.},
    number = {5},
    month = {5},
    pages = {393--399},
    volume = {27},
    publisher = {Oxford Academic},
    url = {www.plankt.oxfordjournals.org},
    doi = {10.1093/plankt/fbi022},
    issn = {01427873}
}

@article{Ludwig1978,
    title = {{Qualitative Analysis of Insect Outbreak Systems: The Spruce Budworm and Forest}},
    year = {1978},
    journal = {The Journal of Animal Ecology},
    author = {Ludwig, D and Jones, D. D. and Holling, C. S.},
    number = {1},
    pages = {315},
    volume = {47},
    doi = {10.2307/3939},
    issn = {00218790}
}

@article{Orr1990,
    title = {{Relationship between microcystin production and cell division rates in nitrogen-limited Microcystis aeruginosa cultures}},
    year = {1998},
    journal = {Limnology and Oceanography},
    author = {Orr, Philip T. and Jones, Gary J.},
    number = {7},
    pages = {1604--1614},
    volume = {43},
    isbn = {0024-3590},
    doi = {10.4319/lo.1998.43.7.1604},
    issn = {00243590}
}

@article{Downs2008,
    title = {{Responses of lake phytoplankton to micronutrient enrichment: A study in two New Zealand lakes and an analysis of published data}},
    year = {2008},
    journal = {Aquatic Sciences},
    author = {Downs, Theresa M. and Schallenberg, Marc and Burns, Carolyn W.},
    number = {4},
    pages = {347--360},
    volume = {70},
    doi = {10.1007/s00027-008-8065-6},
    issn = {10151621},
    keywords = {Cyanobacteria, Lake Mahinerangi, Lake Waihola, Nutrient limitation, Primary productivity}
}

@article{Kinzig2013,
    title = {{Social norms and global environmental challenges: The complex interaction of behaviors, values, and policy}},
    year = {2013},
    journal = {BioScience},
    author = {Kinzig, Ann P. and Ehrlich, Paul R. and Alston, Lee J. and Arrow, Kenneth and Barrett, Scott and Buchman, Timothy G. and Daily, Gretchen C. and Levin, Bruce and Levin, Simon and Oppenheimer, Michael and Ostrom, Elinor and Saari, Donald},
    number = {3},
    month = {3},
    pages = {164--175},
    volume = {63},
    publisher = {Oxford University Press},
    url = {www.biosciencemag.orgwww.ucpressjournals.com/},
    doi = {10.1525/bio.2013.63.3.5},
    issn = {00063568},
    pmid = {25143635},
    keywords = {assessments, behavioral science, ethics, interdisciplinary science, sustainability}
}

@article{Peace2021,
    title = {{Stoichiometric Ecotoxicology for a Multisubstance World}},
    year = {2021},
    journal = {BioScience},
    author = {Peace, Angela and Frost, Paul C. and Wagner, Nicole D. and Danger, Michael and Accolla, Chiara and Antczak, Philipp and Brooks, Bryan W. and Costello, David M. and Everett, Rebecca A. and Flores, Kevin B. and Heggerud, Christopher M. and Karimi, Roxanne and Kang, Yun and Kuang, Yang and Larson, James H. and Mathews, Teresa and Mayer, Gregory D. and Murdock, Justin N. and Murphy, Cheryl A. and Nisbet, Roger M. and Pecquerie, Laure and Pollesch, Nathan and Rutter, Erica M. and Schulz, Kimberly L. and Scott, J. Thad and Stevenson, Louise and Wang, Hao},
    number = {2},
    month = {2},
    pages = {132--147},
    volume = {71},
    publisher = {Oxford University Press},
    url = {https://academic.oup.com/bioscience},
    doi = {10.1093/biosci/biaa160},
    issn = {15253244},
    keywords = {ecotoxicological models, elemental imbalances, multiple stressors, nutrient ratios, toxicity}
}

@article{Loladze2000,
    title = {{Stoichiometry in producer-grazer systems: Linking energy flow with element cycling}},
    year = {2000},
    journal = {Bulletin of Mathematical Biology},
    author = {Loladze, Irakli and Kuang, Yang and Elser, James J.},
    number = {6},
    pages = {1137--1162},
    volume = {62},
    url = {http://www.idealibrary.com},
    isbn = {0092-8240},
    doi = {10.1006/bulm.2000.0201},
    issn = {00928240},
    pmid = {11127517}
}

@article{Smolders1993,
    title = {{Sulphate-mediated iron limitation and eutrophication in aquatic ecosystems}},
    year = {1993},
    journal = {Aquatic Botany},
    author = {Smolders, A. and Roelofs, J.G.M.},
    month = {12},
    pages = {247--253},
    volume = {46},
    publisher = {Elsevier},
    doi = {10.1016/0304-3770(93)90005-H},
    issn = {03043770}
}

@article{Suzuki2009,
    title = {{The coupled dynamics of human socio-economic choice and lake water system: The interaction of two sources of nonlinearity}},
    year = {2009},
    journal = {Ecological Research},
    author = {Suzuki, Yukari and Iwasa, Yoh},
    pages = {479--489},
    volume = {24},
    doi = {10.1007/s11284-008-0548-3},
    issn = {09123814},
    keywords = {Eutrophication, Regime shift, Social dilemma, Social pressure}
}

@article{Beckage2020,
    title = {{The Earth has humans, so why don’t our climate models?}},
    year = {2020},
    journal = {Climatic Change},
    author = {Beckage, Brian and Lacasse, Katherine and Winter, Jonathan M. and Gross, Louis J. and Fefferman, Nina and Hoffman, Forrest M. and Metcalf, Sara S. and Franck, Travis and Carr, Eric and Zia, Asim and Kinzig, Ann},
    number = {1},
    month = {11},
    pages = {181--188},
    volume = {163},
    publisher = {Springer Science and Business Media B.V.},
    url = {https://doi.org/10.1007/s10584-020-02897-x},
    doi = {10.1007/s10584-020-02897-x},
    issn = {15731480},
    keywords = {Behavioral theory, Climate change, Coupled social-climate models, Natural-human systems}
}

@article{Cunningham2017,
    title = {{The effect of iron limitation on cyanobacteria major nutrient and trace element stoichiometry}},
    year = {2017},
    journal = {Limnology and Oceanography},
    author = {Cunningham, Brady R. and John, Seth G.},
    number = {2},
    month = {3},
    pages = {846--858},
    volume = {62},
    publisher = {Wiley Blackwell},
    url = {https://aslopubs.onlinelibrary.wiley.com/doi/full/10.1002/lno.10484 https://aslopubs.onlinelibrary.wiley.com/doi/abs/10.1002/lno.10484 https://aslopubs.onlinelibrary.wiley.com/doi/10.1002/lno.10484},
    doi = {10.1002/lno.10484},
    issn = {19395590}
}

@article{Koch2019,
    title = {{The impacts of iron limitation and ocean acidification on the cellular stoichiometry, photophysiology, and transcriptome of Phaeocystis antarctica}},
    year = {2019},
    journal = {Limnology and Oceanography},
    author = {Koch, Florian and Beszteri, Sara and Harms, Lars and Trimborn, Scarlett},
    number = {1},
    month = {1},
    pages = {357--375},
    volume = {64},
    publisher = {John Wiley {\&} Sons, Ltd},
    url = {https://aslopubs.onlinelibrary.wiley.com/doi/full/10.1002/lno.11045 https://aslopubs.onlinelibrary.wiley.com/doi/abs/10.1002/lno.11045 https://aslopubs.onlinelibrary.wiley.com/doi/10.1002/lno.11045},
    doi = {10.1002/lno.11045},
    issn = {19395590}
}

@article{Hastings2016,
    title = {{Timescales and the management of ecological systems}},
    year = {2016},
    journal = {Proceedings of the National Academy of Sciences of the United States of America},
    author = {Hastings, Alan},
    number = {51},
    month = {12},
    pages = {14568--14573},
    volume = {113},
    publisher = {National Academy of Sciences},
    url = {www.pnas.org/cgi/doi/10.1073/pnas.1604974113},
    doi = {10.1073/pnas.1604974113},
    issn = {10916490},
    pmid = {27729535},
    keywords = {Grazers, Management, Spartina, Timescales}
}

@article{Hastings2010,
    title = {{Timescales, dynamics, and ecological understanding}},
    year = {2010},
    journal = {Ecology},
    author = {Hastings, Alan},
    number = {12},
    month = {12},
    pages = {3471--3480},
    volume = {91},
    publisher = {John Wiley {\&} Sons, Ltd},
    url = {https://esajournals.onlinelibrary.wiley.com/doi/full/10.1890/10-0776.1 https://esajournals.onlinelibrary.wiley.com/doi/abs/10.1890/10-0776.1 https://esajournals.onlinelibrary.wiley.com/doi/10.1890/10-0776.1},
    doi = {10.1890/10-0776.1},
    issn = {00129658},
    pmid = {21302817},
    keywords = {Coexistence, Dynamics, Persistence, Stochasticity, Time, Time series, Timescales, Transients}
}

@article{Lee2011,
    title = {{Tourists and traditional divers in a common fishing ground}},
    year = {2011},
    journal = {Ecological Economics},
    author = {Lee, Joung-Hun and Iwasa, Yoh},
    month = {10},
    pages = {2350--2360},
    volume = {70},
    publisher = {Elsevier},
    doi = {10.1016/j.ecolecon.2011.07.013},
    issn = {09218009},
    keywords = {Common fishing ground, Fishing association, Tourists, Traditional divers}
}

@article{Heggerud2020,
    title = {{Transient dynamics of a stoichiometric cyanobacteria model via multiple-scale analysis}},
    year = {2020},
    journal = {SIAM Journal on Applied Mathematics},
    author = {Heggerud, Christopher M. and Wang, Hao and Lewis, Mark A.},
    number = {3},
    pages = {1223--1246},
    volume = {80},
    publisher = {Society for Industrial and Applied Mathematics},
    url = {https://epubs.siam.org/doi/10.1137/19M1251217},
    keywords = {34E13, 34E15, 92B05, 92D25, 92D40, asymptotic analysis, cyanobacteria, ecological stoichiometry, harmful algal blooms, multiscale analysis, transient dynamics}
}

@article{Hastings2018,
    title = {{Transient phenomena in ecology}},
    year = {2018},
    journal = {Science},
    author = {Hastings, Alan and Abbott, Karen C. and Cuddington, Kim and Francis, Tessa and Gellner, Gabriel and Lai, Ying-Cheng and Morozov, Andrew and Petrovskii, Sergei and Scranton, Katherine and Zeeman, Mary Lou},
    number = {6406},
    volume = {361},
    url = {http://science.sciencemag.org/},
    doi = {10.1126/science.aat6412},
    issn = {10959203}
}
\appendix
\section{Appendix}\label{Appendix}
\begin{proof}[Proof of Theorem~\ref{thm:phase lineEQ}]
First, we note that $\hat\eta_1<\hat\eta_2<\hat\eta_3$ and $J(F)>0$ for all $F\in(0,1)$. Assume that $\hat\eta<\hat\eta_1<-1/2$ then $F+\hat\eta-1/2<0$ for all $F\in[0,1]$ and $J(F)>F+\hat\eta-1/2$ for all $F\in(0,1)$. Hence, $\dfrac{dF}{d\tilde\tau}>0$ for all $F\in(0,1)$ and $\dfrac{dF}{d\tilde\tau}=0$ for $F=1$, thus proving (i). 

 When $\hat\eta=\hat\eta_1$, $F^*_h=F^*_1=1$. Since $\frac{dF^*_h}{d\hat\eta}<0$, and necessary condition for $F^*_h\in[0,1]$ is $\hat\eta>\hat\eta_1$. Now, assume that $\hat\eta_1<\hat\eta<\hat\eta_2$, then $F^*_u<0$ because $F^*_u=F^*_l=0$ when $\hat\eta=\hat\eta_2$ and $\frac{dF^*_h}{d\hat\eta}>0$. Moreover, $J(F)=F+\hat\eta-1/2$ has only one solution for $F\in[0,1]$ then by the concavity of $J(F)$:
\begin{equation}
    J''(F)=-\frac{2\,a_{1}\,\sigma \,\left(a_{2}+\xi +a_{2}\,\xi \right)}{{\left(a_{2}(1-F)+1\right)}^3}<0,
\end{equation} for all $F$,
$J(F)>F+\hat\eta-1/2$ for all $F\in[0,F^*_h)$ and $J(F)<F+\hat\eta-1/2$ for all $F\in(F^*_h,1]$. Hence, $J(F)+1/2-\hat\eta>0$ for all $F\in[0,F^*_u)$ and $J(F)+1/2-\hat\eta<1$ for all $F\in(F^*_u,1]$. This implies that $\frac{dF^*_h}{d\hat\eta}>0$ for all $F\in[0,F^*_h)$ and $\frac{dF^*_h}{d\hat\eta}<0$ for all $F\in(F^*_h,1]$ thus proving (ii).

When $\hat\eta=\hat\eta_3$, $F^*_h=F^*_u$ as seen in~\eqref{eq:Fu} and~\eqref{eq:Fl}. Now assume that $\hat\eta_2<\hat\eta<\hat\eta_3$, then $1>F^*_h>F^*_u>0$ by their respective monotonicity. Since $J(F)=F+\hat\eta-1/2$ for $F=F^*_h$ and $F=F^*_u$, we deduce that $0<J(F)+1/2-\hat\eta<1$ near $F^*_h$ and $F^*_u$. Furthermore, by the concavity of $J(F)$,  $J(F)>F+\hat\eta-1/2$ for all $F\in(F^*_u,F^*_h)$ and $J(F)<F+\hat\eta-1/2$ for all $F\notin[F^*_u,F^*_h]$ thus proving that $F^*_h$ is locally stable, and $F^*_u$ is unstable. Lastly, if $F<F^*_u$ then $J(F)<F+\hat\eta-1/2$. Which furthermore implies that $J(F)-\hat\eta+1/2 \leq 0$ for $F$ near $F=0$, thus proving that $F=0$ locally stable and concluding (iii)

Finally, assume $\hat\eta>\hat\eta_3$, then both $F^*_h$ and $F^*_u$ are imaginary roots and not considered to be equilibrium. Furthermore, $J(F)<F+\hat\eta-1/2$ for all $F\in[0,1]$ and hence, $J(F)-\hat\eta+1/2<1 $ for all $F\in[0,1]$ implying that $\frac{dF}{d\tilde\tau}<0$ for all for all $F\in(0,1]$ and $\frac{dF}{d\tilde\tau}=0$ for $F=0$ proving that $F=F^*_l$ is globally stable concluding (iv). 
\end{proof}

\begin{proof}[Proof of Corollary~\ref{cor:saddlenode}]
The two equilibria $F^*_h$ and $F^*_u$, which are stable and unstable respectively collide, are equivalent at $\hat{\eta}=\hat{\eta}_3$ and do not exist for $\hat{\eta}>\hat{\eta}_3$. Furthermore, when $\hat\eta=\hat\eta_3$, $0<J(F)-\hat\eta+1/2<1$ near $F^*_u=F^*_h$, since  $J(F)-\hat\eta+1/2=F$ and $\frac{dF}{d\tilde\tau}=0$ at $F^*_u=F^*_h$. Lastly, we check that $\frac{d(J(F)+1/2-\hat\eta-F)}{dF}=0$ which is equivalent to $J'(F)=1$ at $F_h^*$ and $\eta=\eta_3$.

Recall $F_h^*$ as given in~\eqref{eq:Fl} and $\hat{\eta}_3$ as in~\eqref{eq:etaroots}. Then at $\hat{\eta}=\hat{\eta}_3$:
\begin{equation}
    F_h^*=F_l^*=    -\frac{\splitdfrac{\sqrt{-a_{1}\,\sigma \,\left(a_{2}-a_{1}\,\sigma \,\xi \right)\,\left(a_{2}+\xi +a_{2}\,\xi \right)}}{-a_{2}-{a_{2}}^2+a_{1}\,\sigma \,\xi +a_{1}\,a_{2}\,\sigma \,\xi} }{{a_{2}}^2-a_{1}\,a_{2}\,\sigma \,\xi }.
\end{equation}
Thus, via substitution and tedious computations that are verified using MATLAB's symbolic software

\begin{align}
   J'(F_h^*)&=    \frac{a_{1}\,\sigma \,\left(\xi -2\,F_h^*\,\xi +a_{2}\,\xi -2\,F_h^*\,a_{2}\,\xi +F_h^{*^2}\,a_{2}\,\xi -1\right)}{{\left(a_{2}-F_h^*\,a_{2}+1\right)}^2},
   \\[1em]
   &=\dfrac{a_{1}\,\sigma \,\left(  
   \splitdfrac{\splitdfrac{\splitdfrac{{}\xi +a_{2}\,\xi-1}{{} +\frac{2\,\xi \,\left(\sqrt{-a_{1}\,\sigma \,\left(a_{2}-a_{1}\,\sigma \,\xi \right)\,\left(a_{2}+\xi +a_{2}\,\xi \right)}-a_{2}-{a_{2}}^2+a_{1}\,\sigma \,\xi +a_{1}\,a_{2}\,\sigma \,\xi \right)}{{a_{2}}^2-a_{1}\,a_{2}\,\sigma \,\xi }}
   }
   {
    +\frac{2\,a_{2}\,\xi \,\left(\sqrt{-a_{1}\,\sigma \,\left(a_{2}-a_{1}\,\sigma \,\xi \right)\,\left(a_{2}+\xi +a_{2}\,\xi \right)}-a_{2}-{a_{2}}^2+a_{1}\,\sigma \,\xi +a_{1}\,a_{2}\,\sigma \,\xi \right)}{{a_{2}}^2-a_{1}\,a_{2}\,\sigma \,\xi }}
    }
    {
    +\frac{a_{2}\,\xi \,{\left(\sqrt{-a_{1}\,\sigma \,\left(a_{2}-a_{1}\,\sigma \,\xi \right)\,\left(a_{2}+\xi +a_{2}\,\xi \right)} -a_{2}-{a_{2}}^2+a_{1}\,\sigma \,\xi +a_{1}\,a_{2}\,\sigma \,\xi \right)}^2}{{\left({a_{2}}^2-a_{1}\,a_{2}\,\sigma \,\xi \right)}^2}
    }\right)}
    {{\left(a_{2}+\frac{a_{2}\,\left(\sqrt{-a_{1}\,\sigma \,\left(a_{2}-a_{1}\,\sigma \,\xi \right)\,\left(a_{2}+\xi +a_{2}\,\xi \right)}-a_{2}-{a_{2}}^2+a_{1}\,\sigma \,\xi +a_{1}\,a_{2}\,\sigma \,\xi \right)}{{a_{2}}^2-a_{1}\,a_{2}\,\sigma \,\xi }+1\right)}^2}
    \\
    &=1.
\end{align}

\end{proof}

\begin{proof}[Proof of Corollary~\ref{cor:saddlenodezero}]
  In~\eqref{eq:J(F)equalseta} we see that $F^*_u$ and $F^*_l$ are equivalent (collide) at $\eta=\eta_2$ and are equal to zero. Furthermore, for $\eta<\eta_2$ neither steady state exists as $F^*_u<0$ as $F^*_u$ is a increasing function of $\eta$ and $0=J(F_l^*)+1/2-\eta_2$ implies that $J(F_l^*)+1/2-\eta>0$ for $\eta<\eta_2$. Thus the point $(F,\hat\eta)=(F^*_l,\hat\eta_2)$ is a saddle node bifurcation.  
\end{proof}

\end{document}